\DeclareMathOperator{\Pic}{Pic}
\DeclareMathOperator{\Span}{Span}
\DeclareMathOperator{\Bir}{Bir}
\DeclareMathOperator{\Sing}{Sing}
\numberwithin{equation}{section}
\newcommand{\Pj}{\mathbb{P}}
\newcommand{\Z}{\mathbb{Z}}
\newcommand{\N}{\mathbb{N}}
\newcommand{\R}{\mathbb{R}}
\newcommand{\Cp}{\mathbb{C}}
\renewcommand{\vec}[1]{\mathbf{#1}}
\newcommand{\PcrossP}{\Pj^{1}\times\Pj^{1}}
\renewcommand{\epsilon}{\varepsilon}
\renewcommand{\imath}{\mathrm{i}}
\newcommand{\bPhi}{\boldsymbol{\Phi}}
\tikzset{Rightarrow/.style={double equal sign distance,={Implies},-},
	triple/.style={-,preaction={draw, Rightarrow}},
	quadruple/.style={preaction={draw,Rightarrow,shorten >=0pt},shorten >=1pt,-,double,double
		distance=0.2pt}}
\tikzstyle{line} = [draw, -latex']
\tikzset{mynode/.style={draw,circle, inner sep=2pt, outer sep=0pt}}
\tikzset{
wei/.style={circle, minimum size=0.4pt,inner sep=0.8pt},
}
\newcommand\bit{\begin{itemize}}
\newcommand\eit{\end{itemize}}
\newcommand\lan{\langle}
\newcommand\ran{\rangle}
\newcommand{\bp}{\begin{pmatrix}}
\newcommand{\ep}{\end{pmatrix}}
\newcommand\Pa{Painlev\'e }
\newcommand{\mb}{\mathbb}
\newcommand{\oc}[1]{{#1}^{\vee}}
\newcommand{\beq}{\begin{equation}}
\newcommand{\eeq}{\end{equation}}
\newcommand{\al}{\alpha}
\newcommand{\de}{\delta}
\newcommand{\De}{\Delta}
\newcommand{\be}{\beta}
\newcommand{\la}{\lambda}
\newcommand{\Ga}{\Gamma}
\def\centerarc[#1](#2)(#3:#4:#5);%
\theoremstyle{plain}
\newtheorem{theorem}{Theorem}[section]
\newtheorem{proposition}[theorem]{Proposition}
\newtheorem{lemma}[theorem]{Lemma}
\theoremstyle{definition}
\newtheorem{definition}[theorem]{Definition}
\newtheorem{notation}[theorem]{Notation}
\theoremstyle{remark}
\newtheorem{remark}[theorem]{Remark}
\newtheorem*{conjecture*}{Conjecture}
\title{Determination of the symmetry group for some QRT roots}
\author[G. Gubbiotti]{Giorgio Gubbiotti}
\address[G. Gubbiotti]{Dipartimento di Matematica ``Federigo Enriques'',
	Universit\`a degli Studi di Milano, Via C. Saldini 50, 20133
	Milano, Italy \& INFN Sezione di Milano, Via G. Celoria 16,
	20133 Milano, Italy}
\email{giorgio.gubbiotti@unimi.it}
\author[Y. Shi]{Yang Shi} 
\address[Y. Shi]{College of science and Engineering, Flinders at Tonsley, 
Flinders University, SA 5042, Australia}
\email{yang.shi@flinders.edu.au}
\subjclass[2020]{14H70, 17B22, 39A36}
\keywords{Discrete integrable systems; Weyl groups; QRT roots}
\date{\today}
\begin{document}

\maketitle

\begin{abstract}
    We determine the affine Weyl
    symmetries of some two-dimensional birational maps
    known as QRT roots arising from Kahan--Hirota--Kimura discretisation
    of two different reduced Nahm systems. The main finding is that
    the symmetry types of these discrete systems are subgroups of
    the Weyl groups for Sakai's discrete Painlev\'e
    equations to which the QRT maps are the autonomous limits.
\end{abstract}

\setcounter{tocdepth}{1}
\tableofcontents

\section{Introduction}

In this paper, we consider the geometric and algebraic structure of two
QRT roots arising from Kahan--Hirota--Kimura (KHK) discretisation
\cite{Kahan1993,KahanLi1997,HirotaKimura2000} of reduced Nahm
systems \cite{Hitchinetal1995}. We show that these systems admit,
as symmetry group, certain subgroups of the affine Weyl groups
appearing in Sakai's classification of second-order discrete Painlev\'e
equations \cite{Sakai2001}.
Our results are summarised in the following statements:

\begin{theorem}
    Consider the map:
    \begin{equation}
        \varphi_{1} \colon (x,y)\mapsto\left(y, \frac{x y-1}{2x- y}\right).
        \label{eq:qrt0}
    \end{equation}
   Map $\varphi_{1}$ admits a covariant pencil $p_{1}$
    of biquadratic curves
    whose singular fibres configuration is of type
    $(A_{3} \times A_{2} \times 2 A_{1})^{(1)}$. Moreover,
    to the singular fibre of type $A_{2}^{(1)}$ is associated a discrete
    integrable system with symmetry group of type $A_{2}^{(1)} \subset E_{6}^{(1)}$.
    In particular, we show that $\varphi_{1}$ is an element of quasi-translation
    in the extended affine Weyl group of type $E_{6}^{(1)}$.
    \label{thm:mainA}
\end{theorem}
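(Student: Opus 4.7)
The plan is to apply the Okamoto--Sakai framework for rational elliptic surfaces and affine Weyl symmetries. First, I would compute the invariant $I(x,y)$ of $\varphi_{1}$. Since $\varphi_{1}$ is a QRT root, its invariant is a ratio of biquadratic polynomials in $(x,y)$, determined up to homography by the requirement $I\circ\varphi_{1}=I$. Substituting a general biquadratic ansatz and equating coefficients yields a linear system whose solution is the invariant, and the level sets $I=\text{const}$ define the biquadratic pencil $p_{1}$.

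Next, I would analyse the base locus of $p_{1}$ inside $\PcrossP$. The eight base points, counted with multiplicity and including infinitely near ones, are located by intersecting two generic fibres and by boundary analysis in the standard charts. Successive blow-ups resolve the indeterminacies of the rational map defined by $I$ and produce a smooth projective surface $X$ on which the pencil becomes a genuine morphism $\pi\colon X\to\Pj^{1}$. The Picard lattice satisfies $\Pic(X)\cong\Z^{10}$, with basis the pullbacks $H_{x},H_{y}$ of the two rulings together with the eight exceptional classes $E_{1},\dots,E_{8}$. I would then enumerate the reducible fibres of $\pi$ by finding the values of $I$ at which the biquadratic factorises, read off their Kodaira types from the intersection matrices of the irreducible components, and verify the configuration $(A_{3}\times A_{2}\times 2A_{1})^{(1)}$; the Shioda--Tate relation $2+r+(4-1)+(3-1)+(2-1)+(2-1)=10$ then furnishes a consistency check with Mordell--Weil rank $r=1$.

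The $I_{3}$ fibre subsequently plays the role of Sakai's surface root system. Its three components span an affine root lattice $Q(A_{2}^{(1)})\hookrightarrow\Pic(X)$; I would compute the orthogonal complement under the intersection pairing and identify its Dynkin type as $E_{6}^{(1)}$. This realises $X$, with the distinguished $\widetilde{A}_{2}$ fibre, as an $A_{2}^{(1)}$-surface in Sakai's classification, and exhibits the inclusion $A_{2}^{(1)}\subset E_{6}^{(1)}$ of the theorem; the associated discrete Painlev\'e family then carries the extended affine Weyl symmetry $\widetilde{W}(E_{6}^{(1)})$.

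Finally, I would compute the action of $\varphi_{1}^{*}$ on $\Pic(X)$ by tracking the images of $H_{x},H_{y}$ and of the $E_{i}$. Expressing this action in a basis of simple roots of $E_{6}^{(1)}$ produces a word in the generators of $\widetilde{W}(E_{6}^{(1)})$; since $\varphi_{1}$ squares to an honest QRT map, which acts as a translation on the symmetry lattice, $\varphi_{1}^{*}$ itself should be a quasi-translation, that is, an element fixing the null root whose square is a genuine translation. The main obstacle, in my view, is combinatorial: matching the $E_{6}^{(1)}$ sublattice arising as the orthogonal complement with Sakai's conventional labelling of the simple roots, and then factoring $\varphi_{1}^{*}$ into the reflections $s_{0},\dots,s_{6}$ in a form that manifestly exhibits the quasi-translation structure.
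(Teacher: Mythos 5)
Your strategy for the geometric half of the theorem --- find the biquadratic invariant by an ansatz, blow up the eight base points, read off the reducible fibres and their Dynkin types, and take the orthogonal complement of the $A_{2}^{(1)}$ fibre inside $\Pic(X)$ to get $E_{6}^{(1)}$ --- is essentially the paper's route. (The paper locates the eight points as the singular locus of the map and then writes down the pencil through them, which is the same computation run in the other direction; for $\varphi_{1}$ all eight base points are distinct, so no infinitely near points occur.) Your Shioda--Tate consistency check $10=2+r+7$, giving Mordell--Weil rank $r=1$, is a sensible addition that the paper does not make.

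The gap is in the last step. You argue that because $\varphi_{1}$ squares to the honest QRT map, which ``acts as a translation on the symmetry lattice'', $\varphi_{1}^{*}$ must be a quasi-translation whose \emph{square} is a genuine translation. This fails. The QRT map is a translation on the $E_{8}^{(1)}$ lattice attached to a generic fibre, but once you distinguish the $A_{2}^{(1)}$ fibre and pass to its orthogonal complement $E_{6}^{(1)}$, translation by a Mordell--Weil element outside the narrow Mordell--Weil lattice permutes fibre components and is only a quasi-translation there. Concretely, the paper computes that $\varphi_{1}^{*}$ sends $(\alpha_{0},\alpha_{1},\alpha_{5},\alpha_{3})$ to $(-\alpha_{1},-\alpha_{5},\alpha_{3},\alpha_{0})$, an order-four permutation up to signs, so $(\varphi_{1}^{*})^{2}(\alpha_{0})=\alpha_{5}$ is not of the form $\alpha_{0}+m\delta$ and $(\varphi_{1}^{*})^{2}$ cannot be a translation; the first power that is a translation is $(\varphi_{1}^{*})^{4}=u_{46}$, the translation by $h_{4}-h_{6}$. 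Hence $\varphi_{1}^{*}$ is a quasi-translation of order four, not two, and establishing this requires actually iterating the computed action on the simple roots rather than appealing to the QRT-map heuristic. The paper then explains the order four structurally: $\varphi_{1}^{*}=\sigma b_{1}s_{0}s_{1}$ lies in the normalizer $N(W_{J})$ of the parabolic subgroup $W_{J}$ with $J=\{\alpha_{0},\alpha_{1},\alpha_{3},\alpha_{5}\}\cong 4A_{1}$, where $N_{J}\cong\widetilde{W}(A_{2}^{(1)})$ acts on an orthogonal $\beta$-system and $\varphi_{1}^{*}$ is conjugate to $\sigma b_{2}$, whose fourth power is the translation. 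As written, your outline would have you look for a translation at the second power and not find one.
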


\begin{theorem}
    Consider the map:
    \begin{equation}
        \varphi_{2} \colon \left( x,y \right)
        \mapsto
        \left(  
            y,\frac{2 x y^2-3 x y+2 x-1}{x y^2-2 y^2+3 y-2}
        \right).
        \label{eq:phi3}
    \end{equation}
    Map $\varphi_{2}$ admits a covariant pencil $p_{2}$ of biquadratic 
    curves whose singular fibres configuration is of type
    $(3A_{3} \times A_{1})^{(1)}$. Moreover, the singular fibre of type 
    $A_{1}^{(1)}$ is associated with a discrete integrable system with a 
    symmetry group of type $B_{3}^{(1)}\subset E_{7}^{(1)}$. In particular, we show that $\varphi_{2}$ is an element of quasi-translation
    in the extended affine Weyl group of type $E_{7}^{(1)}$.
    \label{thm:mainB}
\end{theorem}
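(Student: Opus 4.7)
The plan is to parallel the strategy already carried out for $\varphi_{1}$ in Theorem~\ref{thm:mainA}, suitably adapted to the more degenerate geometry of $\varphi_{2}$. First, I would produce the covariant pencil $p_{2}$ by seeking a bidegree $(2,2)$ invariant $I_{2}(x,y)$ of $\varphi_{2}$, either by undetermined coefficients or directly from the conserved quantity of the Nahm system of which~\eqref{eq:phi3} is the KHK discretisation. I would then compute the base locus of $p_{2}$ on $\PcrossP$, including any infinitely near base points, and perform the iterated blow-ups $\pi\colon X \to \PcrossP$ that simultaneously resolve the indeterminacies of $\varphi_{2}$ and turn $p_{2}$ into a genuine morphism $X \to \Pj^{1}$.

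Next comes the geometric identification step: by isolating the critical values of the pencil parameter, factoring each singular member on $X$ into its irreducible components, and computing their pairwise intersection numbers, I would compare the resulting dual graphs with the extended Dynkin diagrams of the Kodaira classification and verify that the singular fibre configuration is precisely $(3A_{3}\times A_{1})^{(1)}$. By Sakai's table, the distinguished $A_{1}^{(1)}$ fibre singles out $X$ as a degeneration of the generalised Halphen surface of surface type $A_{1}^{(1)}$, whose full symmetry group is the extended affine Weyl group $\widetilde{W}(E_{7}^{(1)})$.

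The third step is lattice-theoretic. Fixing a basis of $\Pic(X)$ given by the pullbacks of the two rulings of $\PcrossP$ and the exceptional divisors, I would compute the push-forward $(\widetilde{\varphi_{2}})_{\ast}$ on $\Pic(X)$ by tracking the proper transforms of these generators. The orthogonal complement of the $A_{1}^{(1)}$ surface root lattice in $\Pic(X)$ realises the $E_{7}^{(1)}$ root system, and expressing $(\widetilde{\varphi_{2}})_{\ast}$ as a word in the associated simple reflections (possibly composed with a Dynkin diagram automorphism) would show that $\varphi_{2}$ determines an element of $\widetilde{W}(E_{7}^{(1)})$. Isolating the reflections that additionally preserve each of the three $A_{3}^{(1)}$ fibres would pin down the subgroup $B_{3}^{(1)}\subset E_{7}^{(1)}$, while verifying that the projection of $(\widetilde{\varphi_{2}})_{\ast}$ to the translation sublattice of $\widetilde{W}(E_{7}^{(1)})$ agrees with a translation up to a finite-order companion would establish the quasi-translation property.

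The main obstacle, as in the proof of Theorem~\ref{thm:mainA}, lies in this last identification: the embedding $B_{3}^{(1)}\subset E_{7}^{(1)}$ is not canonical, and one must explicitly exhibit a compatible choice of simple roots and check the relations between the reflections involved. The richer configuration $(3A_{3}\times A_{1})^{(1)}$---as opposed to the $(A_{3}\times A_{2}\times 2A_{1})^{(1)}$ of Theorem~\ref{thm:mainA}---correspondingly enlarges the combinatorial bookkeeping, and producing the finite-order element of $\widetilde{W}(E_{7}^{(1)})$ whose product with a genuine translation realises $(\widetilde{\varphi_{2}})_{\ast}$ is expected to be the most delicate part of the argument.
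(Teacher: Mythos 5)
Your first two steps (constructing the covariant pencil, blowing up the base locus including the infinitely near pair, reading off the singular-fibre configuration from intersection numbers, and identifying the symmetry lattice as the orthogonal complement $E_{7}^{(1)}$ of the $A_{1}^{(1)}$ fibre) coincide with the paper's argument, and your observation that $(\varphi_{2}^{*})^{3}$ should be checked to be a genuine translation is exactly how the quasi-translation claim is established. The gap is in how you propose to locate the $B_{3}^{(1)}$. First, the $B_{3}^{(1)}$ is not obtained as the set of reflections ``that additionally preserve each of the three $A_{3}^{(1)}$ fibres'': in the paper it arises as the normalizer $N(W_{J'})$ of a parabolic subgroup $W_{J'}\cong W(4A_{1})$ of $\widetilde{W}(E_{7}^{(1)})$, where $J'$ is a $4A_{1}$ set of roots permuted (up to sign) by the map, and the components of the other singular fibres play no role in singling it out. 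Second, and more seriously, the direct analogue of the \Cref{thm:mainA} strategy fails here: the action \eqref{eq:phi3E71B} of $\varphi_{2}^{*}$ on the simple system $\De^{(1)}$ of $E_{7}^{(1)}$ permutes \emph{no} subset of $\De^{(1)}$, so there is no standard parabolic $W_{J}$ whose normalizer visibly contains $\varphi_{2}^{*}$, and your plan stalls precisely at the point you flag as ``the most delicate part''.

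The paper's resolution is the detour you do not anticipate: it first proves \Cref{thm:mainC} for the deformation $\varphi_{s}$, whose pullback \emph{does} permute the standard set $J=\{\al_0,\al_3,\al_5,\al_7\}\cong 4A_1$, so that $\varphi_{s}^{*}=\sigma b_{320}$ lives in $N(W_J)\cong\widetilde{W}(B_3^{(1)})\ltimes W(4A_1)$. It then finds $u=s_{5436542}$ with $u^{-1}(h_1-h_2)=-h_2+h_5-h_7$, matching the translation vector of $(\varphi_{s}^{*})^{3}$ to that of $(\varphi_{2}^{*})^{3}$, and shows that $\varphi_{2}^{*}$ lies in the conjugated normalizer $N(W_{uJ})$ with $uJ=\{\al_0,\al_6,\al_3,\al_{567}\}$ --- a $4A_1$ sub-root system \emph{not} contained in $\De^{(1)}$. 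Without this transport of structure (or some equivalent device for producing the non-standard $4A_1$ system), the identification of the $B_{3}^{(1)}$ simple roots $u\be_i$ and the factorisation $\varphi_{2}^{*}=u\sigma u^{-1}s_{0}s_{3}ub_{3231203}$ cannot be carried out, so the proposal as written does not close the main step of the proof.
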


\Cref{thm:mainB} will be proved using the following statement
on a map which is a generalisation of the map $\varphi_2$.

\begin{theorem}
    Consider the map:
    \begin{equation}
    \varphi_s \colon (x,y) \mapsto
    \left(
    y,
    {\frac {3sx{y}^{2}-3sxy+2s{y}^{2}-4x{y}^{2}+sx+6yx-4x+2}{%
    2 sxy^{2}-3sy^{2}-2x{y}^{2}+2sx+3sy+4{y}^{2}-s-6y+4}}\right).
    \label{eq:phis}
    \end{equation}
    Map $\varphi_{s}$ admits a covariant pencil $p_{s}$
    of biquadratic curves
    whose singular fibres configuration is of type
    $(2A_{3} \times 2A_{1} \times A_0)^{(1)}$. Moreover to the singular fibre of type $A_{1}^{(1)}$ 
    is associated with a discrete integrable system with a symmetry group of type $B_{3}^{(1)}\subset E_{7}^{(1)}$.
    In particular, we show that $\varphi_{s}$ is an element of quasi-translation
    in the extended affine Weyl group of type $E_{7}^{(1)}$.
    \label{thm:mainC}
\end{theorem}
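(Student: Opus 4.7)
The plan is to follow the now-standard procedure for identifying the symmetry group of a QRT-like birational map: construct a $\varphi_{s}$-invariant pencil, resolve its base locus by blowups to obtain a generalised Halphen surface, read off the singular-fibre configuration, and then compute the induced action of $\varphi_{s}$ on the Picard lattice in order to realise it as an element of an extended affine Weyl group.

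The first step is to detect the pencil $p_{s}$. Since $\varphi_{s}$ is written in QRT-root form, I would look for a biquadratic invariant via the ansatz $I(x,y)=\sum_{0\le i,j\le 2} a_{ij}(s)\,x^{i}y^{j}$, imposing $I\circ\varphi_{s} = I$ up to a rational multiplier and solving the resulting linear system for the $a_{ij}$. The solution space should be two-dimensional over $\Cp(s)$, producing a pencil $p_{s} = \{\alpha P + \beta Q = 0\}$. The indeterminacy locus of $\varphi_{s}^{\pm 1}$ coincides with the base locus of $p_{s}$ on $\PcrossP$; I would list these base points (and the infinitely near ones) explicitly as algebraic functions of $s$ and blow them up, obtaining a rational surface $X_{s}$ of Picard rank $10$ on which $p_{s}$ becomes a genus-one fibration and $\varphi_{s}$ extends to an automorphism.

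Next, I would classify the singular fibres of this fibration by inspecting strict transforms of the reducible members of $p_{s}$ together with the exceptional divisors produced in the previous step. A consistency check is that the Euler numbers of the singular fibres should sum to $12$, compatibly with the announced configuration $(2A_{3}\times 2A_{1}\times A_{0})^{(1)}$. Letting $R\subset\Pic(X_{s})$ be the lattice spanned by the non-identity components of these fibres, Sakai's classification identifies $R^{\perp}$, taken inside $\{-K_{X_{s}}\}^{\perp}$, with an affine root lattice of type $E_{7}^{(1)}$. The advertised subgroup $B_{3}^{(1)} \subset E_{7}^{(1)}$ then appears as the root subsystem of $R^{\perp}$ associated with the singled-out $A_{1}^{(1)}$ fibre, whose simple roots I would write down explicitly as divisor classes on $X_{s}$ and verify by computing the induced Cartan matrix.

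The last step is to show that $\varphi_{s}$ is a quasi-translation in $\widetilde{W}(E_{7}^{(1)})$. I would compute $\varphi_{s}^{*}$ on a basis of $\Pic(X_{s})$ and decompose the resulting integer matrix as a product of simple reflections; under the semidirect-product description $\widetilde{W}(E_{7}^{(1)}) \cong W(E_{7})\ltimes Q^{\vee}(E_{7})$, the quasi-translation property amounts to showing that the finite part of $\varphi_{s}^{*}$ has finite nontrivial order while the translation part is a nonzero coroot, so that a suitable power of $\varphi_{s}$ acts as an honest translation on $\Pic(X_{s})$. I expect the main obstacle to sit in Steps~2 and~4: the base points depend rationally on $s$, so one must verify that the singular-fibre pattern is genuinely generic in $s$ and that no coincidence at a special parameter value degenerates the surface type, and one must extract the $B_{3}^{(1)}$ subsystem cleanly without contamination from the $2A_{3}^{(1)}$ and the other $A_{1}^{(1)}$ fibre lattices. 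Once $\Pic(X_{s})$ and its orthogonal decomposition are explicit, the Weyl-group identification in the final step is essentially algorithmic.
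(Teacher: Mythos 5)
Your Steps 1--2 and the fibre-configuration computation match the paper's construction in substance. (The paper actually proceeds in the opposite direction: it first builds the pencil $p_{s}$ by forcing a generic biquadratic through the six finite base points $c_{1},\dots,c_{6}$ of $\varphi_{2}$, and then \emph{defines} $\varphi_{s}$ as the QRT root $\iota\circ\upsilon_{s}$ of that pencil; detecting the biquadratic invariant of a given map by an ansatz is an equivalent route.) Your final step is also essentially what the paper does: one computes $\varphi_{s}^{*}$ on $\Pic(X_{s})$, restricts to the $E_{7}^{(1)}$ simple system orthogonal to the chosen fibre, and finds that $(\varphi_{s}^{*})^{3}$ is the translation by $h_{1}-h_{2}$ --- note, a \emph{weight}, not a coroot, so the relevant decomposition is $\widetilde{W}(E_{7}^{(1)})=W(E_{7})\ltimes P$ rather than $W(E_{7})\ltimes Q^{\vee}$.

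The genuine gap is in your Step 3, where the $B_{3}^{(1)}$ is supposed to appear. The root subsystem of $\Pic(X_{s})$ associated with the chosen $A_{1}^{(1)}$ fibre --- the orthogonal complement of its components and of $-K_{X_{s}}$ --- is the \emph{full} $E_{7}^{(1)}$, not $B_{3}^{(1)}$, and there is no way to ``write down the simple roots of $B_{3}^{(1)}$ as divisor classes and verify the Cartan matrix'' if those classes are taken among the usual $(-2)$-classes: any such collection produces a symmetric Cartan matrix with constant diagonal, i.e.\ a simply-laced type. The mechanism that actually produces the non-simply-laced group is dynamical rather than geometric: one observes from the action \eqref{eq:phisE71s} that $\varphi_{s}^{*}$ permutes the subset $J=\{\al_0,\al_3,\al_5,\al_7\}\cong 4A_{1}$ of the $E_{7}^{(1)}$ simple system, hence lies in the normalizer $N(W_{J})=N_{J}\ltimes W_{J}$; by the Howlett/Brink--Howlett theory $N_{J}$ is generated by the quasi-reflections $b_{0}=s_{1301}$, $b_{1}=s_{6576}$, $b_{2}=s_{4534}$, $b_{3}=s_{2}$ together with the diagram automorphism $\sigma$, and these realise $\widetilde{W}(B_{3}^{(1)})$ on $V_{J}^{\perp}$ with long roots $\be_0=\al_{0113}$, $\be_1=\al_{5667}$, $\be_2=\al_{3445}$ of squared length $4$ and short root $\be_3=\al_2$ of squared length $2$. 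Without this normalizer computation you can conclude that $\varphi_{s}$ is a quasi-translation in $\widetilde{W}(E_{7}^{(1)})$, but not that the symmetry group attached to this fibre is of type $B_{3}^{(1)}$; your worry about ``contamination from the other fibre lattices'' is beside the point, since those play no role in extracting $B_{3}^{(1)}$.
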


The two maps $\varphi_1$ and $\varphi_2$ are two QRT roots related to the 
Kahan--Hirota--Kimura (KHK) discretisations~\cite{Kahan1993,HirotaKimura2000,KimuraHirota2000} of 
so-called reduced Nahm systems of tetrahedral and octahedral type 
\cite{Hitchinetal1995}. In particular, \Cref{eq:qrt0,eq:phi3} are 
obtained from the KHK discretisation of the tetrahedral and octahedral 
reduced Nahm systems with the method of 
\cite{VanDerKampCelledoniMcLachlanMcLarenOwrenQuispel2019}.
The interested reader can find more details on this construction, with
the explicit formulas are given in \Cref{app:equivalence}.
Finally, the map $\varphi_s$ is a QRT root 
\cite[Introduction]{Duistermaat2011book} preserving a more general 
pencil than $\varphi_2$. The map $\varphi_s$ will be discussed more 
thoroughly in \Cref{sec:origin}.

In \cite{Duistermaat2011book}, the geometry of QRT maps is discussed in 
the context of Oguiso and Shioda's classification of 74 types of singular 
fibre configurations of rational elliptic surfaces \cite{OS1991}. On the 
other hand, it was shown in \cite{Carsteaetal2017} how different choices 
of singular fibres in a configuration lead to different autonomous \Pa 
equations in Sakai's classification of second-order discrete Painlev\'e
equations \cite{Sakai2001}. 
The approach given in \cite{Carsteaetal2017} allows one to discuss a given 
QRT system within Sakai's framework.

In this paper, we analyse the two discrete systems given by the maps 
$\varphi_1$ and $\varphi_2$ in \Cref{eq:qrt0,eq:phi3} with tools from 
algebraic geometry and theory of the Weyl group. That is, we build the space 
of initial values and obtain the corresponding singular fibre configurations of the two maps. We find that the map 
$\varphi_2$ has a peculiar geometric structure, which is different from 
the canonical one presented in the foundational paper \cite{Sakai2001} 
and in the review \cite{KajiwaraNoumiYamada2017R}.
For some choices of singular fibres from the two configurations, 
we study in detail the symmetries of the associated discrete systems.
In particular, we show that these discrete systems admit symmetries 
that do not appear explicitly in Sakai's classification. Such 
two-dimensional discrete integrable systems have been found in different 
contexts \cite{T:03, KNT:11, CarsteaTakenawa2012, Carsteaetal2017, 
ahjn:16, cp:17, Stokes:18}.  In \cite{Shi:19}, it was found that 
these symmetries arise as normalizers of certain parabolic subgroups of 
Sakai's Weyl groups using the 
normalizer theory of 
of Coxeter groups \cite{H,BH}. 
Sakai's equations are realised as birational 
representations of affine Weyl groups of $ADE$ type in the plane, 
see \cite[Proposition 5.15]{SchuttShioda2019mordell}.  
In particular, the map that gives the
iteration/dynamics of Sakai's equation corresponds to 
translation along a shortest weight on the weight lattice of the corresponding Weyl group \cite{Sakai2001,KajiwaraNoumiYamada2017R}.
Here we show that the QRT roots considered here are 
quasi-translations. That is, an element of an affine Weyl group
of infinite order, which 
after $m$ number of 
iterations become a translation in the original Weyl 
group is called a \emph{quasi-translation of order $m$}.

The plan of the paper is the following: in \Cref{sec:backgrownd} we 
recall some basic facts from the theory of the elimination of the
indeterminacies of maps of surfaces and the theory of Weyl groups. In 
\Cref{sec:origin} we construct the space of initial conditions of the 
maps $\varphi_{1}$ and $\varphi_{2}$ and identify the associated singular 
fibres configurations, that is the first statements in 
\Cref{thm:mainA,thm:mainB}. Then, we proceed to introduce the map 
$\varphi_{s}$ by constructing the QRT root preserving a pencil 
generalising the one preserved by $\varphi_{2}$. We conclude the section 
by computing the configuration of the singular fibres associated to 
$\varphi_{s}$, thus proving the first statement of \Cref{thm:mainC}. Next,
in \Cref{sec:octE6} we prove the second Statement of \Cref{thm:mainA}.
Then, in \Cref{sec:B3} we prove the last statement of \Cref{thm:mainC} 
where we show that to singular fibre of type $A_{1}^{(1)}$ 
associated to the map $\varphi_{s}$~\eqref{eq:phis} corresponds to a
discrete system of type $B_{3}^{(1)}\subset E_{7}^{(1)}$. 
Using the results of \Cref{sec:B3} in \Cref{sec:C3} we are able to
prove the last statement of \Cref{thm:mainB}, i.e.\ that to singular fibre
of type $A_{1}^{(1)}$ associated to the map $\varphi_{2}$~\eqref{eq:phi3} 
corresponds to a discrete system of type $B_{3}^{(1)}\subset E_{7}^{(1)}$.
This result is obtained by showing that the action of the map $\varphi_{2}$
on the root lattice $E_{7}^{(1)}$ is the same as the one of 
$\varphi_{s}$, but on a \emph{translated sub-root lattice}.
Finally, in \Cref{sec:conclusions} we give some conclusions and 
outlook for further work.

\section{Background material}
\label{sec:backgrownd}

In this section, we give the background we need for our study of the
symmetries of the two maps in \Cref{eq:qrt0,eq:phi3}.
In particular, we will introduce the concept we need from the theory
of elimination of the indeterminacy of surfaces following \cite{Beauville1996_2nd}.
Following Sakai \cite{Sakai2001} this allows us to construct the \emph{space of initial values} of a given map, which is the discrete analogue of Okamoto's
description \cite{Okamoto1979,Okamoto1977} of the continuous Painlev\'e
equations \cite{InceBook}. Moreover, we will need some basic facts from
the theory of Weyl groups or more generally, the theory of Coxeter groups \cite{Bbook}.

\subsection{Birational maps and biquadratic pencils}
\label{desing}

In this section, we briefly recall some tools from the algebraic geometry of
birational maps we need to carry out our constructions, namely the elimination 
of indeterminacies for maps of complex surfaces, and some generalities 
on singular fibres of pencils.

\subsubsection{Eliminaton of indeterminacies for maps of surfaces}

As stated in the introduction, we will consider rational maps of $\Cp^{2}$
to itself in the compactification $\PcrossP$. In particular, we will consider
birational maps, i.e. rational maps whose inverse is rational, and denote
their space by $\Bir(\PcrossP)$. An \emph{indeterminacy point} is a point where
a map $\varphi\in\Bir(\PcrossP)$ is not well defined. Representing locally a map as 
pairs of rational functions, being defined corresponds to say that
one of the entries is in the form $0/0$.
Indeterminacies are eliminated through the repeated application of a procedure
called the blow-up at a given point, which  can be described heuristically as 
the variety obtained by replacing a point with the set of lines passing through 
it, see \cite[Section II.1]{Beauville1996_2nd}.
A blow-up is a birational projective morphism, i.e. it is an isomorphism outside 
the given point, and the preimage of that point is called the 
\emph{exceptional curve} (usually denoted by the letter $E$).

Since $\PcrossP$ is a ruled surface  (in particular, the simplest of the Hirzebruch surfaces),
to eliminate the  indeterminacies of $\varphi\in\Bir(\PcrossP)$ we use the following result
from \cite[Corollay II.12]{Beauville1996_2nd} adapted to our case:

\begin{theorem}
    Let $\varphi\in Bir(S)$ with $S$ surface. Then there exists a surface 
    $\widetilde{S}$ and a map $\widetilde{\varphi}\in\Bir(\widetilde{S})$
    such that the following diagram commute:
    \begin{equation}
    \begin{tikzcd}
            \widetilde{S}   \arrow[r, "\widetilde{\varphi}"] 
            & \widetilde{S} \arrow[d, "\epsilon"]
            \\
            S \arrow[u, "\epsilon",leftarrow] \arrow[r, "\varphi",dashed]&  S
        \end{tikzcd}    %
        \label{eq:commblow}
    \end{equation}
    and the morphisms $\epsilon$ is the composition of blow-ups.
    \label{thm:beauville}
\end{theorem}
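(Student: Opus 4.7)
The plan is to iteratively eliminate the indeterminacies of $\varphi$ by blowing up its base points, and then to recast the resulting data in the form of the stated square. Two ingredients from the algebraic geometry of smooth projective surfaces will play a central role: the fact that the indeterminacy locus of a rational map has codimension at least $2$, and the fact that each blow-up strictly decreases a suitable numerical invariant.

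First I would observe that, since $S$ is a smooth projective surface and $\varphi\in\Bir(S)$, the indeterminacy locus $\mathrm{Ind}(\varphi)$ consists of finitely many closed points. This uses that a rational map between smooth projective varieties is defined away from a subset of codimension at least $2$, combined with the $2$-dimensionality of $S$. Next, I would proceed by induction on a numerical invariant such as the total multiplicity of the base scheme, or equivalently the self-intersection of the pullback of a very ample divisor on the target. Given an indeterminacy point $p$, I would form the blow-up $\pi_p\colon S_p\to S$ and consider the pulled-back rational map $\varphi\circ\pi_p\colon S_p\dashrightarrow S$. Either this map extends regularly across the exceptional divisor, or it acquires finitely many new indeterminacy points lying on it; in both cases intersection theory on surfaces will guarantee that the chosen invariant drops strictly. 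Iterating the procedure finitely many times then produces a surface $\widetilde{S}$, a composition of blow-ups $\epsilon\colon\widetilde{S}\to S$, and a morphism $f\colon\widetilde{S}\to S$ satisfying $f=\varphi\circ\epsilon$ as rational maps.

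To fit the result into the square appearing in \Cref{eq:commblow}, I would set $\widetilde{\varphi}:=\epsilon^{-1}\circ f$, which is an element of $\Bir(\widetilde{S})$ since both factors are birational. Commutativity of the diagram is then automatic, as $\epsilon\circ\widetilde{\varphi}=\epsilon\circ\epsilon^{-1}\circ f=f=\varphi\circ\epsilon$ as rational maps.

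The principal obstacle is the termination step of the iterative blow-up procedure. A priori, eliminating the indeterminacy at one point can introduce new indeterminacies on the exceptional divisor, so one must exhibit a strictly decreasing numerical invariant. The standard choice relies essentially on intersection theory for divisors on smooth projective surfaces, and this is where the two-dimensionality of $S$ enters in an essential way; this monotonicity argument is the technical heart of Beauville's Corollary II.12 and is the only non-formal ingredient in the plan above.
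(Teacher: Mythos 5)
The paper gives no proof of this statement---it is quoted verbatim from Beauville (Corollary II.12)---and your plan reconstructs exactly the standard argument behind that result: finiteness of the indeterminacy locus on a smooth surface, blow-up of a base point of the mobile part of the pulled-back very ample linear system, and termination because the self-intersection of that mobile part is non-negative yet drops by at least $m^2\geq 1$ at each step. Your final step $\widetilde{\varphi}:=\epsilon^{-1}\circ f$ is adequate for the statement as written, since $\widetilde{\varphi}$ is only required to lie in $\Bir(\widetilde{S})$ and the square need only commute as rational maps. One caution: the property the paper actually relies on afterwards---that the lift becomes an \emph{automorphism} of $B_\varphi$---does not follow from this argument and is false for a general birational self-map of a surface; it requires simultaneously resolving $\varphi^{-1}$ and the orbits of contracted curves, and holds here only for the specific maps under consideration.
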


The map $\widetilde{\varphi}\in\Bir(\widetilde{S})$ in  \Cref{thm:beauville} 
is called the \emph{lift} of $\varphi$ and  if no confusion arise we will 
denote it without the tilde. In the case of $\varphi\in\Bir(\PcrossP)$, following
\Cref{thm:beauville}, we eliminate the indeterminacies blowing up all the points 
of the following classes:
\begin{itemize}
    \item indeterminacy points of the map $\varphi$ and its inverse $\varphi^{-1}$;
    \item indeterminacy points lying on an exceptional line $E$ after blowing
        up, called \emph{infinitely near points}; 
    \item images of lines contracted by $\varphi$ or $\varphi^{-1}$.
\end{itemize}

\begin{remark}
    We remark that while in this paper we are following the
    the classical strategy of elimination of indeterminacies explained
    in \Cref{thm:beauville} this is not the only possible approach.
    Indeed, the blow-up of infinitely near points can
    be achieved with a single blow-up using an approach based on the
    ideals of these points, see \cite{GraffeoRicolfi2023}.
    \label{rem:blowups}
\end{remark}

So, to eliminate the indeterminacies  of a map $\varphi\in\Bir(\PcrossP)$
we have to blow-up a set of the form\footnote{This is a slight abuse of notation
because after blowing up points are replaced by lines, but we use them to
keep track of which points originated a ``chain'' of successive blow-ups.}:
\begin{equation}
    \Sing (\varphi) = 
    \left\{ b_{1,\dots,i_{1}},b_{i_{1}+1,\dots,i_{2}},\dots,b_{i_{L-1}+1,\dots,i_L }\right\},
    \label{eq:singgen}
\end{equation}
where $i_1<i_2<\ldots<i_{L-1}<i_L=M$, $L,M$ are positive integers,
and we denoted some infinitely near points by 
$b_{i_{k},\ldots,i_{k+1}-1}$.
Then on the space obtained by blowing up the set $\Sing (\varphi)$,
written as $B_{\varphi}$, the map $\widetilde{\varphi}$ is an automorphism. 
The smooth variety $B_{\varphi}$ is called the \emph{space of initial 
values of the map $\varphi$} \cite{Sakai2001,Takenawa2001JPhyA}.
This definition is stated in analogy with the same definition for
continuous systems given by Okamoto \cite{Okamoto1977,Okamoto1979}.

\subsubsection{The Picard group}

Given a smooth (quasi-)projective variety $X$, recall that 
\cite[Fact I.1]{Beauville1996_2nd} 
the Picard group of $X$, $\Pic(X)$, is the group of isomorphism classes
of line bundles on $X$. Line bundles can be identified with divisors up 
to linear equivalence, and throughout the paper we will make use of 
this identification. In the case of a surface $B$ obtained blowing up 
$\PcrossP$ $M$ times the Picard group is the free $\Z$-module 
generated by the proper transform of classes 
of equivalence of horizontal and vertical lines:
\begin{equation}
    H_{x} = \epsilon^{*}\left\{ x=\text{const} \right\},
    \quad
    H_{y} = \epsilon^{*}\left\{ y=\text{const} \right\},
    \label{eq:HxHy}
\end{equation}
and the exceptional lines $E_{1},E_{2},\dots,E_{M}$:
\begin{equation}
    \Pic\left( B \right)=
    \Span_{\Z}( H_{x}, H_{y}, E_{1},\dots,E_{M}).
    \label{eq:picardgen}
\end{equation}

\begin{notation}
    We will denote elements of the Picard group by capital letters while 
    their (possible) representatives (complete linear systems of) curves will be 
    denoted by small letters, e.g. $D\in\Pic(B)$ and $d\subset \PcrossP$
    is a curve.
\end{notation}

The Picard group is equipped with a symmetric bilinear form $\left( \,,\, \right)$
with integer values called the \emph{intersection form}, see \cite[Definition I.3 
and Theorem I.4]{Beauville1996_2nd}. Let $B$ be a surface as above, and then the 
intersection form satisfies the following rules:
\begin{equation}
    \begin{gathered}
        \left( H_{x},H_{x} \right)=\left( H_{y},H_{y} \right)=
        \left( H_{x},E_{i} \right)=\left( H_{y},E_{i} \right)=0,
        \\
        \left( H_{x},H_{y} \right) = 1,
        \quad
        \left( E_{i},E_{j} \right) = -\delta_{i,j},
    \end{gathered}
    \label{eq:intform}
\end{equation}
where $i,j=1,\ldots,M$, see \Cref{eq:picardgen}.

We will denote the action induced by a morphism $\varphi \colon B_\varphi\to B_\varphi$
on the Picard group by $\varphi^{*} \colon \Pic(B_\varphi)\to \Pic(B_\varphi)$. This action
is \emph{linear} and can be represented by a matrix with integer entries.

\subsubsection{Rational elliptic fibrations and minimality}

It is known that integrable maps in the plane are related 
to \emph{rational elliptic surfaces} or their generalisation 
\cite{CarsteaTakenawa2012,Tsuda2004,Bellon1999}.
We recall that a surface is elliptic if it admits an elliptic fibration.
Such a surface is a rational elliptic surface if the fibres are connected,
see \cite[Theorem 9.1.3 and Definition 9.1.4]{Duistermaat2011book}.
This implies that there these maps are birationally equivalent
to a map $\varphi_\text{min}$ such that $\abs{\Sing (\varphi_\text{min})}=8$, 
that is $i_L=8$ in equation \eqref{eq:singgen},
see also \cite{CarsteaTakenawa2013}.
For a ``generic'' map whose space of initial values is
a rational elliptic surface
we have $\abs{\Sing (\varphi)}\geq 8$. From now on, we will assume that we are
considering the ``minimal'' map such that this lower bound is obtained,
i.e. $\varphi = \varphi_{\text{min}}$.
In the case of a rational elliptic surface $B_\varphi$ the
\emph{anti-canonical divisor} takes the following form:
\begin{equation}
    -K_{B_\varphi} = 2 H_x + 2 H_y - \sum_{k=1}^{8} E_i.
    \label{eq:antican}
\end{equation}
Moreover, in this case the anti-canonical divisor, or a multiple
of it, is represented by a \emph{pencil of biquadratic curves} whose base points are the points $b_k$ with $k=1,\ldots,8$.
We recall that a point $b\in\PcrossP$ is a \emph{base point} for a pencil $p$
if $p(b;\mu:\nu)=0$ regardless of the value of $[\mu:\nu]\in\Pj^1$.
The map on the Picard group $\varphi^{*}$ preserves the
anti-canonical divisor, that is:
\begin{equation}
    \varphi^{*}(-K_{B_\varphi}) = -K_{B_\varphi}.
\end{equation}
This, in turn implies that taking the ratio of the biquadratic polynomials 
defining the pencil $p$, i.e. the rational function $\mathcal{H}=q/r$, we 
obtain an invariant for the map $\varphi\in\Bir(\PcrossP)$, or
a $k$-invariant in the case the map does not preserve the fibres of
the pencil \cite{CarsteaTakenawa2012}. In the rest of the paper
we will only consider fibre-preserving actions, we will not need the notion of $k$-invariant we will not enter into the
details of this approach, but we refer the interested reader to the
papers \cite{Haggaretal1996,RJ15,HietarintaBook,GG_cremona3}. So, from
now on when we will say \emph{covariant pencil $p$} we will mean that 
$\varphi^{*}(p) = \kappa p$ for some $\kappa$ bihomogeneous polynomial
and additionally the \emph{fibres of the pencil are preserved}.

\begin{remark}
    An alternative construction of integrable maps
    starts from a given pencil of biquadratic maps and then considers
    the blow-up of its base points. This yields the
    rational elliptic surface $B_\varphi$ again. The map $\varphi\in\Bir(\PcrossP)$ is
    obtained from the pencil through the application of some properly defined
    involutions, see \cite{QRT1988,QRT1989,Duistermaat2011book}. 
    \label{rem:blowuppencil}
\end{remark}

\subsubsection{Singular fibres and root systems}

Before going on we recall the following definition of singular fibres of
a pencil:

\begin{definition}
    Let 
    \begin{equation}
        p(x,y;\mu:\nu) = \mu p_{0}(x,y) + \nu p_{1}(x,y),
        \label{eq:pencil}
    \end{equation}
    be a \emph{pencil} of plane curves.
    A point $\left(x_{0},y_{0};\mu':\nu' \right)\in\PcrossP\times\Pj^{1}$ 
    such that
    \begin{equation}
        p\left( x_{0},y_{0};\mu':\nu' \right) = 
        \frac{\partial p}{\partial x}\left( x_{0},y_{0};\mu':\nu' \right)=
        \frac{\partial p}{\partial y}\left( x_{0},y_{0};\mu':\nu' \right)=0,
        \label{eq:singdef}
    \end{equation}
    where the derivatives are taken in a properly chosen chart,
    is called a \emph{singular point} for the pencil \eqref{eq:pencil}.
    Given a pencil of plane curves $p$
    if the curve $p\left( x,y;\mu':\nu' \right)$,
    with fixed $[\mu':\nu' ]\in\Pj^{1}$ contains a singular
    point then it is called a \emph{singular fibre} of the pencil
    \eqref{eq:pencil}.
    \label{def:singular}
\end{definition}

The singular fibres of rational elliptic surfaces have been extensively
studied in the literature, see \cite{SchuttShioda2019mordell} for a general
overview and \cite[Chap. 7]{Duistermaat2011book} for some applications to integrable
systems. Here, following \cite{KajiwaraNoumiYamada2017R}, we limit ourselves to note 
that to a singular fibres of a pencil $p$ on the space $B_\varphi$ the anti-canonical 
divisor splits into several \emph{rational} components. That is, given a singular fibre $[\mu':\nu']\in\Pj^1$
there exists some divisors $D_j$, $j=0,\ldots,\kappa$ such that:
\begin{equation}
    -K_{B_\varphi} = \sum_{j=0}^{\kappa} D_j,
\end{equation}
which realise this singular fibre.
Moreover, the intersection matrix:
\begin{equation}
    C = \left(-(D_i,D_j)\right)_{i,j=0}^{\kappa},
\end{equation}
is a generalised Cartan matrix of a Coxeter group of rank $\kappa$, see \cite[Proposition 5.15]{SchuttShioda2019mordell}. We call the type of root system defined by this
generalised Cartan matrix \emph{the surface type},
denoted by $\Lambda$, for a singular
fibre for the map $\varphi$. 
That is, the divisors $D_k$ correspond to the 
simple roots of $\Lambda$.
The orthogonal complement of $\Lambda$, $\Lambda^\perp$,
is another root system, called the \emph{symmetry type}
of the map $\varphi$.
Note that the surface type associated with the map $\varphi$ depends on the 
choice of the singular fibre as noted in \cite{CarsteaTakenawa2013}.
Indeed, the possible surface type associated with the singular fibres are those
contained in \cite[Theorems 8.8 and 8.9]{SchuttShioda2019mordell}.

To summarise, for a given birational map
$\varphi\in\Bir(\PcrossP)$, one can eliminate its indeterminacies
and those of its inverse $\varphi^{-1}$ and build
the space of initial conditions $B_\varphi$. If the map is integrable,
then $B_\varphi$ is a rational elliptic surface, and to each of its
singular fibres there are associated with two orthogonal root systems
$\Lambda$ and $\Lambda^\perp$. The simple roots of these systems give the surface
and the symmetry type, respectively, and can both be represented by
Dynkin diagrams.  The pullback $\varphi^{*}$ acts on the simple roots
of $\Lambda$ as a permutation, while in general, it acts on the simple
roots of $\Lambda^\perp$ as a quasi-translation. In
the next subsection, we will recall some facts from the theory of Weyl
groups needed to discuss the natures of the maps $\varphi_1$ and $\varphi_2$ and the symmetries of the associated discrete systems.

\subsection{Weyl group}

The symmetry groups of affine Weyl type for the two discrete systems given by maps
\eqref{eq:qrt0} and \eqref{eq:phi3} can be constructed using the
geometric characterisation given in Section \ref{desing}. Here we 
recall necessary facts and properties of the Weyl group in order for us
to construct the groups of symmetries and the
corresponding group elements that are associated to
the two maps \eqref{eq:qrt0} and \eqref{eq:phi3}.

\subsubsection{Dynkin diagram and root system}
\label{Dyn}
Let $\Gamma$ be
a Dynkin diagram with $n$ nodes, its associated \emph{Coxeter group}
is a generating set,
\begin{equation}
    W=W(\Ga)= \langle s_i\mid\; 1\leq i \leq n\rangle
\end{equation}
with \emph{defining relations}:
\begin{equation}\label{funW}
s_i^2=1,\quad (s_is_j)^{m_{ij}}=1, \; 1\leq i<j \leq n. 
\end{equation}
When $m_{ij}\in\{2,3,4,6\}$, known as the 
\emph{crystallographic condition}, $W$ is called a \emph{Weyl group}. 
Each node of $\Ga$, labelled $i$ represents the generator $s_i$ of order $2$. 
The parameter $m_{ij}$ takes value of: $2, 3, 4$ or $6$ 
when two nodes labeled $i$ and $j$
are: disconnected, joined by a single, a double, or a triple bond, respectively. Diagrams which have only
single bonds are called {\it simply-laced}, they are of types $A_n$, $D_n$, $E_6$, $E_7$ and $E_8$. 
The non-simply laced types are $B_n$, $C_n$, $F_4$ and $G_2$ 
(see Humphries \cite[Sec. 2.5]{Hbook} for example).
Let $\Ga^{(1)}$ be the \emph{affine extension of $\Ga$} (with
an extra node labelled $0$), then
$W^{(1)}=\langle s_i\mid\; 0\leq i \leq n\rangle$ is the associated \emph{affine Weyl group}. 

Let $V^{(1)}$ be an $n+1$-dimensional real vector space
with a basis, 
\beq
\De^{(1)}=\De(\Ga^{(1)})
=\{ \al_i\in V^{(1)}\mid 0\leq i \leq n\},
\eeq
known as
the {\it simple system}, and $\al_i$ ($0\leq i \leq n$) are the {\it simple roots} of $W^{(1)}$. Note that
$\De=\De(\Ga)
=\{ \al_i\in V^{(1)}\mid 1\leq i \leq n\}$
is the simple system of $W$, and
$V=\Span(\De)$ is a subspace of $V^{(1)}$.

A semidefinite symmetric bilinear form 
on $V^{(1)}$ can be defined using the \emph{Cartan matrix} of
type $\Ga^{(1)}$,
$C(\Ga^{(1)})=\left(a_{ij}\right)_{1\leq i,j\leq n, 0}$ by,
\begin{equation}\label{alaij0}
\al_i\cdot\al_j=\frac{|\al_j|^2a_{ij}}{2},
\quad\mbox{for}\quad i,j\in\{0,1,..., n\},
\end{equation}
where the quantity $\al_j.\al_j=|\al_j|^2$ gives the usual interpretation of
 squared length of $\al_j$. The generator $s_j\in W^{(1)}$
known as a {\it simple reflection}, can
be realised as the reflection 
along $\al_j\in V^{(1)}$, its action on $\De^{(1)}$
given by 
\begin{equation}\label{sij0}
s_j(\al_i)=s_{\al_j}(\al_i)
=\al_i-\frac{2(\al_i\cdot\al_j)}{\al_j\cdot\al_j}\al_j
=\al_i-a_{ij}\al_j,\quad \mbox{for all}\quad i,j\in\{0,1,..., n\},
\end{equation}
where $a_{ij}$ is the $(i,j)$-entry
of $C(\Ga^{(1)})$.

Note that nodes of a Dynkin diagram
can be equivalently associated with either simple reflections or simple roots of the Weyl group.

There is a {\it null root} $\de$ in $V^{(1)}$
such that
\beq\label{aid}
\al_i\cdot\de=0\quad\mbox{for all}\quad \al_i\in V^{(1)},\quad 0\leq i \leq n.
\eeq
In particular, we have
\beq\label{d}
\de=\al_0+\tilde{\al}=\al_0+\sum_{i=1}^{n}c_i\al_i=\sum_{i=0}^{n}c_i\al_i,
\eeq
where $\tilde{\al}$ is the {\it highest long root} of $W$,  and values of 
the $c_i$'s can be found in any
classic textbooks on Coxeter groups such as Bourbaki \cite[Planche I--X]{Bbook}.
Finally, the root system of $W^{(1)}$ is given by
\beq\label{arsa}
\Phi^{(1)}=\Phi(\Ga^{(1)})=W^{(1)}\cdot\De^{(1)}
=\{\al+m\de\mid\al\in\Phi,\; m\in\mathbb{Z}\},
\eeq
where 
$\Phi=W.\De$ is the finite root system of
$W$ in $V$. Moreover, we have $\Phi^{(1)}=\Phi^{(1)}_{+}\cup\Phi^{(1)}_{-}$ where,
\beq\label{prs2a}
\Phi^{(1)}_{+}=\{\al=\sum_{i=0}^n \la_i\al_i\mid\mathbb{Z}\ni\la_i\geq 0, \al_i\in \De^{(1)}\}\quad\text{and}
\quad\Phi^{(1)}_{-}=\{-\al\mid\al\in \Phi^{(1)}_{+}\},
\eeq
and $\Phi_{+}$ is the {\it positive root system}.

In general, the element of reflection along any root 
$\be\in\Phi^{(1)}$, denoted by $s_{\be}\in W^{(1)}$,
acts on $V^{(1)}$ by,
\beq\label{sbv}
s_{\be}(v)=v-\frac{2v\cdot\be}{\be\cdot\be} \be,
\eeq
for all $v\in V^{(1)}$. Furthermore, $s_\be$ is related to a simple reflection $s_i$
by the formula
\beq\label{sr}
s_{\be}=s_{w(\al_i)}=ws_iw^{-1},\quad\mbox{for some}\quad w(\al_i)=\be,\quad w\in W^{(1)}.
\eeq

\subsubsection{Dual space, coroots and weights}\label{dual}
By Equation \eqref{d}, we see that 
$\{\al_1, \al_2, \ldots, \al_n, \de\}$
forms another basis of $V^{(1)}$.
Let $V^{(1)*}$ be an $(n+1)$-dimensional real vector space, and $\langle
{}\,, {} \rangle:$ $V^{(1)}\times V^{(1)\ast} \to \mathbb R$ be a bilinear
pairing between $V^{(1)}$ and $V^{(1)\ast}$.  Let $\{ h_1, \ldots, h_n,
h_\de\} $  be  a basis of $ V^{(1)\ast}$   dual to $\{\al_1, \al_2,
\ldots, \al_n, \de\}\subset V^{(1)}$. This is,
\begin{subequations}
    \begin{align}
        &\langle \al_i, h_j\rangle=\delta_{ij},\\
        &\langle \al_i, h_{\de}\rangle=
        \langle \de, h_{j}\rangle=0,\quad \mbox{for}\quad 1\leq i, j\leq n,
        \\
        &\langle \de,h_\de \rangle=1.
    \end{align}
    \label{ah2}
\end{subequations}

The group $W^{(1)}$ acts on $V^{(1)\ast}$ via the contragredient action:
\beq\label{cona}
\lan w^{-1} v,  h\ran=\lan v, wh\ran, \quad\mbox{for}\quad v\in V^{(1)}, h\in V^{(1)\ast}, w\in W^{(1)}.
\eeq

    The vectors $h_j\in V^{(1)\ast}$ $( 1\leq j\leq n)$ are called 
    \emph{the fundamental weights} of $W^{(1)}$,
    and the \emph{weight lattice} is given by,
    \begin{equation}\label{wl}
    P=\Span_{\Z}(h_1,\ldots,h_n).
    \end{equation}

$W^{(1)}$ acts on $h_j$ $( 1\leq j\leq n)$ by,
\begin{subequations}
  \beq\label{sihjn}
s_i(h_j)=
 \begin{cases}
 h_j, &\text{for}\quad i\neq j,\\
 h_j-\sum\limits_{k=1}^n a_{ki}h_k, &\text{for}\quad i=j,\quad\mbox{for}\quad
 1\leq i,j\leq n\\
 \end{cases}
 \eeq
 and
 \beq\label{s0hj}
s_0(h_j)=h_j+c_j\sum\limits_{k=1}^n a_{k0}h_k,
 \eeq  
\end{subequations}
where $c_i$'s are the coefficients of $\al_i$ in $\de$, and $a_{ij}$ are the entries of $C(\Ga^{(1)})$.

In \cite{Shi:22}, we discussed some useful properties and formulas 
of the Weyl groups that enable us to describe quantitatively 
different types of translational elements of the Weyl group. Here, we simply state the
results needed from \cite{Shi:22}.
First we define an $n$-dimensional hyperplane in $V^{(1)*}$,
\begin{equation}\label{Xk}
X_k=\{\,h\in V^{(1)\ast}\,|\, \langle \de, h\rangle=k, k\in \mathbb{R}\}.
\end{equation}

Note that, $X_0$ is an $n$-dimensional vector space, and by \Cref{sihjn} we see that $\{h_j\mid 1\leq j\leq n\}$ forms a basis of $X_0$. Moreover, we have $X_1=h_\de+X_0$ is the affine plane
on which translations of the affine Weyl group
can be realised.

For each $\be\in\Phi^{(1)}$, let
\beq\label{ocal}
\oc\be=\frac{2\be}{\be\cdot\be}=\frac{2\be}{|\be|^2}.
\eeq 
The set of {\it simple coroots}
$\{\pi(\oc\al_j)\in V^{(1)*}\mid 1\leq j\leq n\}$  is a basis of $X_0$ given by,
\beq\label{pah}
\pi(\oc\al_j)=\sum_{k=1}^n a_{kj}h_k=\sum_{k=1}^n\left(C(\Ga)^T\right)_{jk}h_k,\quad
\mbox{for}\quad1\leq j\leq n,
\eeq
or we have
\beq\label{hpa}
h_i=\sum_{k=1}^n\left(C(\Ga)^T\right)^{-1}_{ik}\pi(\oc\al_k),\quad
\mbox{for}\quad1\leq i\leq n.
\eeq

The following proposition enables us to discuss the lengths of vectors
in the subspace $X_0\subset V^{(1)*}$. 

\begin{proposition}
    \label{sbfX0}
    The bilinear pairing $\langle\,,\,\rangle$ given in \Cref{ah2}
    can be restricted to a symmetric and positive definite bilinear form
    $({}\,,{})$: $X_0\times X_0\to \mathbb{R}$, $X_0\subset V^{(1)*}$,
    satisfying:
    \beq\label{paipj}
    \left(\pi(\oc\al_i),\pi(\oc\al_j)\right)
    =\frac{2}{|\al_i|^2}a_{ij}, \quad 1\leq i, j\leq n.
    \eeq
\end{proposition}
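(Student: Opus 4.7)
The plan is to identify $X_0 \subset \dav$ with the subspace $V = \Span(\De) \subset \av$ via the canonical musical map $\pi \colon \av \to \dav$ induced by the form on $\av$, defined by $\lan w, \pi(v)\ran = v \cdot w$ for all $w \in \av$, and then to transport the restriction of the form to $V$ across this identification. Because by \eqref{aid} one has $\lan \de, \pi(v)\ran = v \cdot \de = 0$, the image $\pi(v)$ always lies in $X_0$, so $\pi$ restricts to a linear map $\pi|_V \colon V \to X_0$. First I would verify that $\pi|_V$ is an isomorphism: expanding $\pi(\oc\al_j) = \sum_k c_{kj} h_k$ and testing with $\al_i$ via \eqref{ah2}, \eqref{ocal}, and \eqref{alaij0} gives $c_{ij} = \al_i \cdot \oc\al_j = a_{ij}$, which is precisely formula \eqref{pah}; invertibility of the finite-type Cartan matrix $C(\Ga)$ then implies $\pi|_V$ is bijective.

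With the identification in hand, I would set $(h, h') := v \cdot v'$ whenever $h = \pi(v)$ and $h' = \pi(v')$, with $v, v' \in V$. Symmetry is inherited from $\cdot$, and on simple coroots a direct computation using \eqref{ocal} and \eqref{alaij0} yields
\beq
(\pi(\oc\al_i), \pi(\oc\al_j)) = \oc\al_i \cdot \oc\al_j = \frac{4(\al_i \cdot \al_j)}{|\al_i|^2 |\al_j|^2} = \frac{2 a_{ij}}{|\al_i|^2},
\eeq
which is the asserted formula. One can equivalently recognise this quantity as $\lan \oc\al_i, \pi(\oc\al_j) \ran$, making manifest the sense in which the new form is a \enquote{restriction} of the duality pairing of \eqref{ah2}.

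The one genuinely delicate point, which I would single out as the main (and really only) obstacle, is positive definiteness: the form on $\av$ is only \emph{semi}-definite, its one-dimensional kernel being $\Span(\de)$. The key input needed is that the decomposition $\av = V \oplus \Span(\de)$ actually holds; this follows from \eqref{d}, where the coefficient of $\al_0$ in $\de$ equals $1$, so $\de \notin V = \Span(\al_1,\dots,\al_n)$, whence $V \cap \Span(\de) = \{0\}$ and the form restricts nondegenerately to $V$. In the basis $\{\al_i\}_{i=1}^n$ the restriction has Gram matrix equal, by \eqref{alaij0}, to the symmetrised Cartan matrix of the underlying \emph{finite} root system of $W$; this matrix is positive definite by the classification of finite Weyl groups \cite{Bbook}. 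Transport of structure through the isomorphism $\pi|_V$ then yields positive definiteness of $(\cdot,\cdot)$ on $X_0$ and completes the proof.
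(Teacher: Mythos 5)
The paper does not actually prove this proposition: it is imported verbatim from \cite{Shi:22} (``we simply state the results needed''), so there is no in-paper argument to compare yours against. Judged on its own, your proof is correct and complete. The musical map $\pi(v)=v\cdot(-)$ is exactly the right mechanism: your computation of its matrix reproduces \eqref{pah} (and the vanishing of the $h_\de$-component is correctly disposed of via \eqref{aid}), the formula $(\pi(\oc\al_i),\pi(\oc\al_j))=\oc\al_i\cdot\oc\al_j=\tfrac{2}{|\al_i|^2}a_{ij}$ checks out against \eqref{alaij0}, and the resulting form is consistent with the paper's \eqref{hij}, since $(h_i,h_j)=\bigl(C(\Ga)^T\bigr)^{-1}_{ij}\tfrac{2}{|\al_j|^2}$ follows from \eqref{hpa} and \eqref{paipj}. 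You also correctly isolate the one nontrivial point, positive definiteness, and your resolution is sound: either of your two arguments works (the Gram matrix of $\{\al_1,\dots,\al_n\}$ is the symmetrised finite-type Cartan matrix, hence positive definite; equivalently, the radical of the semidefinite form is $\Span(\de)$, which meets $V$ trivially because the $\al_0$-coefficient of $\de$ in \eqref{d} is $1$). Your closing observation that $(h,h')=\lan \pi|_V^{-1}(h),h'\ran$ is also the honest explanation of the word ``restricted'' in the statement. No gaps.
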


In the $\{h_j\mid1\leq j\leq n\}$ basis we have,
\beq\label{hij}
(h_i, h_j)=\sum_{k=1}^n\left(C(\Ga)^T\right)^{-1}_{ik}\frac{2}{|\al_k|^2}\de_{kj},\quad
1\leq i, j\leq n.
\eeq
\begin{remark}\label{iden}
    One can identify the simple coroots with simple roots by 
    $\pi(\oc\al_i)=2\al_i/|\al_i|^2$. Then we see that for simply-laced type systems,
    we have $\pi(\oc\al_j)=\al_j$ ($1\leq j\leq n$), and 
    $\pi(\oc\al_0)=-\tilde{\al}$.
\end{remark}
Finally, let $l(w)$ be the \emph{length} of 
$w\in W^{(1)}$, we have,
\beq\label{lfun}
l(ws_{\al})=
 \begin{cases}
 l(w)+1, &\text{if}\quad w(\al)\in\Phi^{(1)}_{+}\\
 l(w)-1, &\text{if}\quad w(\al)\in\Phi^{(1)}_{-},
 \end{cases}
 \eeq 
where $\al\in\De^{(1)}$.
Equation \eqref{lfun}
can be applied repeatedly until we can write $w$ as a product of $k$ simple reflections,
\begin{align}
ws_{l_1}...s_{l_k}&=1,\quad l_1, ..., l_k\in\{0,1,...,n\},\\\nonumber
\mbox{that is,}\quad\quad    w&=s_{l_k}...s_{l_1}.
\end{align}
\subsubsection{Translations}\label{tran}
It is well-known that $W^{(1)}$ suitably extended
contains an
abelian group of translations
on the weight lattice (which is isomorphic to the weight lattice
$P$), 
\beq\label{PtoUd}
\widetilde{W}^{(1)}=A\ltimes{W}^{(1)}=
W\ltimes P=W\ltimes\lan u_j\, \mid 1\leq j\leq n\ran,
\eeq
where $A$ is a group of diagram automorphsims of
the affine Dynkin diagram ${\Ga}^{(1)}$, and
$u_j\in \widetilde{W}^{(1)}$ is defined to be the element that acts on $X_1$ 
by a translation of the fundamental weight $h_j$,
\begin{equation}
    u_j(X_1)=X_1+h_j, \quad h_j\in X_0, 
    \quad 1\leq j\leq n.
\end{equation}
In general, actions of $u_j$ on $V^{(1)*}$ and $V^{(1)}$
are given by the following proposition.

\begin{proposition}
\label{ujf}
The element $u_j\in \widetilde{W}^{(1)}$ acts on $V^{(1)*}$ by
\begin{subequations}
\beq\label{ujhi}
u_j(h_i)=h_i,\quad \mbox{and}\quad u_j(h_\de)=h_\de+h_j,\quad
\quad 1\leq i, j \leq n,
\eeq
\end{subequations}
while on $V^{(1)}$ we have for $1\leq i, j \leq n$,
\begin{subequations}\label{uja}
\beq\label{ujai}
u_j(\al_i)=
 \begin{cases}
 \al_i, &\text{for}\quad i\neq j,\\
 \al_j-\de, &\text{for}\quad i=j,\\
 \end{cases}
 \eeq
 and
 \beq\label{uja0}
u_j(\al_0)=\al_0+c_j\de,
 \eeq
\end{subequations}
where $c_j$ is the coefficient of $\al_j$ in $\de$.
\end{proposition}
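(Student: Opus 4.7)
The plan is to derive both actions from the defining property that $u_j$ translates $X_1$ by the fundamental weight $h_j$, together with the contragredient pairing \eqref{cona}. The work splits naturally into three steps: deduce the action on $V^{(1)*}$, transfer it to $V^{(1)}$ using duality, and handle $\alpha_0$ separately via its expression in terms of $\delta$ and the simple roots.

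First I would establish \eqref{ujhi}. Since $X_0 = \Span(h_1,\ldots,h_n)$ is the direction space of the affine hyperplane $X_1 = h_\de + X_0$, realising an affine translation by $h_j \in X_0$ through a linear map on $V^{(1)*}$ forces this map to fix $X_0$ pointwise and to send $h_\de$ to $h_\de + h_j$. Indeed, for any $v = h_\de + w \in X_1$ with $w \in X_0$, this prescription gives $u_j(v) = h_\de + h_j + w = v + h_j$, confirming the required translation on $X_1$. This yields $u_j(h_i) = h_i$ for $1 \leq i \leq n$ and $u_j(h_\de) = h_\de + h_j$.

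Next I would pass to $V^{(1)}$ by using \eqref{cona} in the form $\langle u_j(v), h\rangle = \langle v, u_j^{-1}(h)\rangle$, where $u_j^{-1}$ is translation by $-h_j$, so $u_j^{-1}(h_k) = h_k$ and $u_j^{-1}(h_\de) = h_\de - h_j$. Writing $u_j(\alpha_i) = \sum_{k=1}^n a_k \alpha_k + b\de$ in the basis $\{\alpha_1,\ldots,\alpha_n,\de\}$ of $V^{(1)}$ and reading off the coefficients from pairings with the dual basis, the duality relations \eqref{ah2} give
\begin{equation*}
a_k = \langle u_j(\alpha_i), h_k \rangle = \langle \alpha_i, h_k \rangle = \delta_{ik}, \qquad b = \langle u_j(\alpha_i), h_\de \rangle = \langle \alpha_i, h_\de - h_j \rangle = -\delta_{ij},
\end{equation*}
which is exactly \eqref{ujai}.

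Finally, for $u_j(\alpha_0)$ I would first observe that $u_j(\de) = \de$: the pairings $\langle u_j(\de), h_k\rangle = \langle \de, h_k\rangle = 0$ and $\langle u_j(\de), h_\de\rangle = \langle \de, h_\de - h_j\rangle = 1$ coincide with those of $\de$, using $\langle \de, h_j\rangle = 0$ from \eqref{ah2}. Then applying $u_j$ to the expression $\alpha_0 = \de - \sum_{i=1}^n c_i \alpha_i$ coming from \eqref{d}, and substituting \eqref{ujai}, gives
\begin{equation*}
u_j(\alpha_0) = \de - \sum_{i\neq j} c_i \alpha_i - c_j(\alpha_j - \de) = \alpha_0 + c_j \de,
\end{equation*}
which is \eqref{uja0}. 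The argument is essentially bookkeeping once the first step is made; the only conceptual point is recognising that the requirement of being a translation on the affine slice $X_1$, when lifted to a linear map on all of $V^{(1)*}$, uniquely determines the action on $h_\de$ and the action on $X_0$, after which the duality \eqref{cona} does the rest.
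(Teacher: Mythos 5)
Your proof is correct. Note that the paper itself states this proposition without proof, importing it from the earlier work it cites on translation elements of extended affine Weyl groups, so there is no internal argument to compare against; your derivation is the standard one. Each step checks out: linearity of the $\widetilde{W}^{(1)}$-action on $V^{(1)*}$ together with the defining condition $u_j(X_1)=X_1+h_j$ does uniquely force $u_j$ to fix $X_0$ pointwise (differences of points of $X_1$ exhaust $X_0$) and to send $h_\de\mapsto h_\de+h_j$; the coefficient extraction against the dual basis via \eqref{cona} and \eqref{ah2} correctly yields \eqref{ujai}; and the computation of $u_j(\al_0)$ from $\al_0=\de-\sum_{i=1}^n c_i\al_i$ together with $u_j(\de)=\de$ gives \eqref{uja0}.
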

In particular, if we have $s(h_j)=\sum_{i=1}^n n_{ij}h_i$ for $s\in {W}$ and some $n_{ij}\in\mathbb{Z}$ then,
\begin{equation}\label{ujnj}
su_js^{-1}=\prod_{i=1}^n u_i^{n_{ij}}.
\end{equation}

Finally, 
any element $u$ of $\widetilde{W}^{(1)}$ can be written in the form $aw$, with
$a\in A$, and $w\in {W}^{(1)}$ with $l(w)=k$.
That is, we have
\begin{equation}\label{uaw}
u=aw=as_{l_k}...s_{l_1},\quad l_1, ..., l_k\in\{0,1,...,n\},
\end{equation}
where $s_j$ are the simple reflections of ${W}^{(1)}$.

\begin{remark}\label{not}
    We will use the notation $\al_i+...+\al_j=\al_{i...j}$ 
    for a sum of simple roots, and $s_i...s_j=s_{i...j}$ 
    for a product of simple reflections, from time to time for simplicity.
For example, the null root of $E_6^{(1)}$ (see Equation
\eqref{deE6}) can be written as $\de=\al_{015224466333}$, while a product of simple reflections  such as $s_{4}s_5s_3s_4$
is now $s_{4534}$.
\end{remark}
\subsubsection{Normalizers
and quasi-translations}\label{norm}
For $\widetilde{W}^{(1)}$ with a simple system $\De^{(1)}$,
let $J\subset \De^{(1)}$,
the corresponding reflection group $W_J=\langle s_i\mid \al_i\in J\rangle$ 
is called the {\it standard parabolic subgroup} of $\widetilde{W}^{(1)}$. The {\it normaliser} of $W_J$ in $\widetilde{W}^{(1)}$ is defined by 
\begin{equation}
  N(W_J)=\{g\in \widetilde{W}^{(1)}\mid g^{-1}W_Jg=W_J\}.  
\end{equation}
In \cite{H, BH} it is shown that 
\begin{equation}
   N(W_J)=N_J\ltimes W_J,\quad
   \mbox{where}\quad N_J=\{w\in \widetilde{W}^{(1)}\mid wJ=J\}.
\end{equation}
In particular, the group $N_J$ is generated by 
{\it R-elements} and {\it M-elements}.
The R-elements, also known as {\it quasi-reflections}, are in general involutions 
 that act permutatively on the subset $J$, whereas 
the M-elements permute the R-elements.
Together, they generate a group of extended affine Weyl type
for which translations (or quasi-translations) can be defined.
In Sections \ref{Ne6} and \ref{Ne7s}
we construct the corresponding $N(W_J)$'s
of which the QRT roots are elements.

Let $V_J=\Span\left(J\right)$, and $V_J^\perp$ be the orthogonal
complement of $V_J$ in $V^{(1)}$, that is, $V^{(1)}=V_J\bigoplus
V_J^\perp$. Then on $V_J^\perp$, the R-elements and quasi-translations behave as reflections and translations, respectively. In Sections \ref{Te6} and \ref{Te7},
we find a basis for $V_J^\perp$,
which forms a root system for $N_J$ and
a subroot system of $\Lambda^\perp$.
Using properties of this subroot system we show that QRT roots are elements of quasi-translations in $N(W_J)$.

\section{The space of initial conditions of the maps $\varphi_{1}$, $\varphi_{2}$, and $\varphi_{s}$}
\label{sec:origin}

In this section, we build the space of initial conditions of the maps
$\varphi_{1}$~\eqref{eq:qrt0} and $\varphi_{2}$~\eqref{eq:phi3}, and
introduce the map $\varphi_{s}$~\eqref{eq:phis} as a generalisation 
of $\varphi_{2}$. Additionally, we identify the configuration of the
singular fibres of the pencils associated to these three maps 
within Oguiso and Shioda's classification of 74 types of singular 
fibre configurations of rational elliptic surfaces \cite{OS1991}. We
will mainly follow the exposition of this theory from the book
\cite{SchuttShioda2019mordell}.

\subsection{The space of initial values of $\varphi_{1}$}
\label{Nahm112}

In this section, we prove the first statement of \Cref{thm:mainA} on the
existence of a covariant pencil with singular fibres of type $(A_{3}\times
A_{2} \times 2A_{1})^{(1)}$ for the map $\varphi_{1}$ \eqref{eq:qrt0}.
We do so by constructing a space of initial values for such a map.
Following \Cref{sec:backgrownd}, a space of initial values of the map
$\varphi_{1}$ \eqref{eq:qrt0} is constructed blowing up its singular
locus. By direct computations, we have:
\begin{equation}
    S_{1}:=\Sing \varphi_{1} = \left( b_{1},\dots,b_{8} \right)
    \label{eq:sing1}
\end{equation}
where:
\begin{equation}
    \begin{gathered}
        b_{1}\colon (x,y) =\left( \sqrt{2}, \frac{1}{\sqrt{2}}\right),
        \quad
        b_{2}\colon (x,y) =\left( -\sqrt{2}, -\frac{1}{\sqrt{2}} \right),
        \\
        b_{3}\colon (x,y) =\left( -\frac{1}{\sqrt{2}}, -\sqrt{2} \right),
        \quad
        b_{4}\colon (x,y) =\left( \frac{1}{\sqrt{2}}, \sqrt{2}\right),
        \\
        b_{5}\colon (x,y) =\left(\frac{1}{\sqrt{2}},0  \right),
        \quad
        b_{6}\colon (x,y) =\left(-\frac{1}{\sqrt{2}},0  \right),
        \\
        b_{7}\colon (x,y) =\left(0,-\frac{1}{\sqrt{2}}  \right),
        \quad
        b_{8}\colon (x,y) =\left(0,\frac{1}{\sqrt{2}}  \right).
    \end{gathered}
    \label{eq:bpqrt0}
\end{equation}

We remark that not all the base points are immediately visible as
indeterminacy points of $\varphi_{1}$ and its inverse, but some appear
as a contraction of the exceptional lines associated with a point after
blowing up. We call $E_{i}$ the exceptional line associated
to the blowing up of the indeterminacy point $b_i$. Then we have that
$b_{i}$, $i=5,6,7,8$, appear as image of the exceptional lines $E_{1}$, $E_{2}$ and $E_{3}$,
$E_{4}$ under $\varphi_{1}$ and $\varphi_{1}^{-1}$ respectively.

Then, a space of initial values is $X_{1} = B_{\varphi_{1}}$,
a rational elliptic surface. A biquadratic pencil 
is defined by the condition of passing through the points $b_{i}$:
\begin{equation}
    p_{1}(x,y;\mu:\nu) = \mu[\sqrt{2} (x-y)+1][ \sqrt{2}( x-y)-1] + \nu x y (x y-1).
    \label{eq:p1}
\end{equation}
By construction this pencil is covariant with respect to the map
$\varphi_{1}$ \eqref{eq:qrt0}.

Then, from \Cref{def:singular} we have that there are four distinct
singular fibres corresponding to the values of $\left[ \mu:\nu
\right]$ shown in the first column of \Cref{tab:singoct}. The irreducible
components of the associated singular fibres are shown in the second
column of the same table.

\begin{table}[hbt]
    \centering
    \begin{tabular}{ccccc}
        \toprule
        Point & 
        Irred. comp. & 
        Repr. in $\Pic(X_{1})$ &
        Surf./symm. type
        \\
        $[\mu:\nu]$ & $d_{i} $ & $D_{i}$ & $\Lambda$/$\Lambda^\perp$
        \\
        \midrule
        \multirow{3}*{$\left[0:1\right]$} & 
        $\left\{ x =0  \right\}$ & $H_{y}-E_{5}-E_{6}$ &
        \multirow{3}*{$A_{2}^{(1)}$/$E_{6}^{(1)}$}
        \\
        & $\left\{ y =0  \right\}$ &$H_{x}-E_{7}-E_{8}$,
        \\
        & $\left\{ xy = 1  \right\}$ & $H_{x}+H_{y}-E_{1}-E_{2}-E_{3}-E_{4}$
        \\
        \midrule
        \multirow{2}*{$\left[1:0\right]$} & 
        $\left\{y -x =-1/\sqrt{2}  \right\}$ & $H_{x}+H_{y}-E_{1}-E_{3}-E_{5}-E_{7}$
        & \multirow{2}*{$A_{1}^{(1)}$/$E_7^{(1)}$}
        \\
        & 
        $\left\{y -x =1/\sqrt{2}  \right\}$ & $H_{x}+H_{y}-E_{2}-E_{4}-E_{6}-E_{8}$
        \\
        \midrule
        \multirow{4}*{$\left[1:4\right]$} 
        & $\left\{ x = -1/\sqrt{2} \right\}$ & $H_{x}-E_{3}-E_{6}$ 
        & \multirow{4}*{$A_{3}^{(1)}$/$D_5^{(1)}$}
        \\
        & $\left\{ y=1/\sqrt{2} \right\}$ & $H_{y}-E_{1}-E_{8}$ 
        \\
        & $\left\{x =1/\sqrt{2} \right\}$ & $H_{x}-E_{4}-E_{5}$ 
        \\
        & $\left\{ y= -1/\sqrt{2} \right\}$ & $H_{y}-E_{2}-E_{7}$ 
        \\
        \midrule
        \multirow{2}*{$\left[1:16\right]$} 
        & $\left\{4 x y+ \sqrt{2}\left(x+y\right) =1 \right\}$
        & $H_{x}+H_{y}-E_{2}-E_{3}-E_{5}-E_{8}$
        & \multirow{2}*{$A_{1}^{(1)}$/$E_7^{(1)}$}
        \\
        & $ \left\{4 x y-\sqrt{2}\left(x+y\right)=1\right\}$
        & $H_{x}+H_{y}-E_{1}-E_{4}-E_{6}-E_{7}$
        \\
        \bottomrule
    \end{tabular}
    \caption{Irreducible components of the singular fibres in the pencil 
    associated to \eqref{eq:qrt0H}, their representative in the Picard groups,
    and the associated surface and symmetry type.}
    \label{tab:singoct}
\end{table}

\begin{remark}
    From the covariant pencil \eqref{eq:p1} we can build the rational 
    function:
    \begin{equation}
        \mathcal{H}_{1}
        =
        \frac{[\sqrt{2} (x-y)+1][ \sqrt{2}( x-y)-1]}{x y (x y-1)}
        \label{eq:qrt0H}
    \end{equation}
    which is an invariant for the map $\varphi_{1}$ \eqref{eq:qrt0}.
    This invariant was shown first in \cite{GJ_biquadratic}, where it
    was derived from the known invariant for the map \eqref{eq:rnahm112}
    using the birational conjugation with the map \eqref{eq:pj}.
    \label{rem:invariant}
\end{remark}

We have associated to $X_{1}$ the Picard group:
\begin{equation}
    \Pic (X_{1}) = \Span_{\Z}(H_{x}, H_{y}, E_{1},\dots,E_{8}).
    \label{eq:PicX1}
\end{equation}
The pullback of $\varphi_{1}$
act linearly on the Picard group of $X_{1}$ as follows:
\begin{equation}
    \varphi_{1}^{*}\colon
    \left(
        \begin{gathered}
            H_{x},H_{y},
            E_{1},E_{2},E_{3},
            \\
            E_{4},
            E_{5},E_{6},E_{7},E_{8} 
        \end{gathered}
    \right)\mapsto
    \left(
        \begin{gathered}
            H_{x}+H_{y}-E_{1}-E_{2},H_{x},
            E_{5},E_{6},H_{x}-E_{2},
            \\
            H_{x}-E_{1},
            E_{7},E_{8},E_{3},E_{4} 
        \end{gathered}
    \right).
    \label{eq:qrt0pb}
\end{equation}

Furthermore, the representatives on the Picard group of the irreducible
components of the singular fibres are given in the third column of
\Cref{tab:singoct}. Using the intersection form \eqref{eq:intform}, we can
find the associated surface type. This is given in the fourth and last
column of \Cref{tab:singoct}. This implies that the rational elliptic
fibration has a singular fibre configuration of type $(A_{3}\times A_{2}
\times 2A_{1})^{(1)}$, which is number 59 in Oguiso and Shioda's list
\cite[Table
8.2]{SchuttShioda2019mordell}.
This ends the proof of the first statement of \Cref{thm:mainA}.
The structure of the singular fibres is summarised in \Cref{fig:type59}.

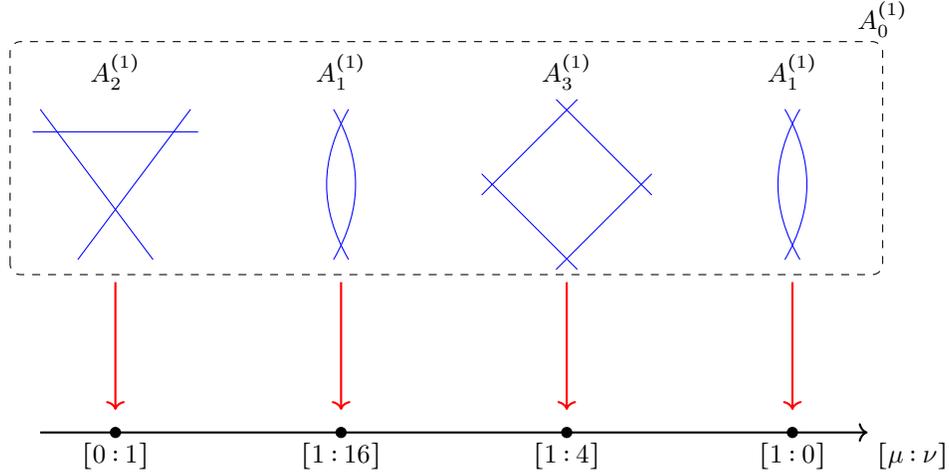
\begin{figure}[hbt]
    \centering
    \begin{tikzpicture}
        \draw[rounded corners,dashed] (-0.4, 2.1) rectangle (11.2, 5.2) {};
        \node at (11.2,5.5) {$A_{0}^{(1)}$};
        \draw[thick,->] (0,0) --(1,0) node[below] {$[0:1]$} 
            --(4,0) node[below] {$[1:16]$}
            --(7,0) node[below] {$[1:4]$}
            --(10,0) node[below] {$[1:0]$}
            -- (11,0) node[below right] {$[\mu:\nu]$};
        \node at (1,0) [circle,fill,inner sep=1.5pt]{};
        \node at (4,0) [circle,fill,inner sep=1.5pt]{};
        \node at (7,0) [circle,fill,inner sep=1.5pt]{};
        \node at (10,0) [circle,fill,inner sep=1.5pt]{};
        \draw[thick,->,red] (1,2) -- (1,0.3);
        \draw[thick,->,red] (4,2) -- (4,0.3);
        \draw[thick,->,red] (7,2) -- (7,0.3);
        \draw[thick,->,red] (10,2) -- (10,0.3);

        \draw[blue] (0.5,2.3) -- (2,4.3);
        \draw[blue] (1.5,2.3) -- (0,4.3);
        \draw[blue] (-0.1,4) -- (2.1,4);
        \node at (1,4.8) {$A_{2}^{(1)}$};

        \draw [blue] (3.9,2.3) to [bend right=30] (3.9,4.3);
        \draw [blue] (4.1,2.3) to [bend left=30] (4.1,4.3);
        \node at (4,4.8) {$A_{1}^{(1)}$};

        \begin{scope}[rotate around={45:(7,3.3)}]
            \draw[blue] (6.3,2.6)--(7.7,2.6)--(7.7,4)--(6.3,4)--cycle;
            \draw[blue] (6.3,2.4)--(6.3,2.6)--(6.1,2.6);
            \draw[blue] (7.7,2.4)--(7.7,2.6)--(7.9,2.6);
            \draw[blue] (7.9,4)--(7.7,4)--(7.7,4.2);
            \draw[blue] (6.3,4.2)--(6.3,4)--(6.1,4);
        \end{scope}
        \node at (7,4.8) {$A_{3}^{(1)}$};
        
        \draw [blue] (9.9,2.3) to [bend right=30] (9.9,4.3);
        \draw [blue] (10.1,2.3) to [bend left=30] (10.1,4.3);
        \node at (10,4.8) {$A_{1}^{(1)}$};
    \end{tikzpicture}
    \caption{A graphical representation of the singular fibres of
    the pencil $p_{1}$.}
    \label{fig:type59}
\end{figure}

The symmetry type $\Lambda^\perp$ of a map is the orthogonal complement
of the surface type in the Picard group. In particular, in the case
of $\Lambda=A_{2}^{(1)}$ we have $\Lambda^\perp=
E_{6}^{(1)}$. In \Cref{sec:octE6} we will show
that map $\varphi_{1}$ \eqref{eq:qrt0}, when considered on a $A_{2}^{(1)}$ type surface admits $A_{2}^{(1)}\subset E_6^{(1)}$ type
symmetry.

\subsection{The space of initial values of $\varphi_{2}$}

In this section, we prove the first statement of \Cref{thm:mainB} on the
existence of a covariant pencil with singular fibres of type $(3A_{3} \times A_{1})^{(1)}$ 
for the map $\varphi_{2}$ \eqref{eq:phi3}.
Again, we construct a space of initial values for such a map.
Following \Cref{sec:backgrownd} a space of initial values of the map
$\varphi_{2}$ \eqref{eq:phi3} is constructed by blowing up its singular
locus. By direct computations, we have:
\begin{equation}
    S_{2}:=\Sing \varphi_{2} = \left( c_{1},\dots, c_{7,8} \right)
    \label{eq:sing2}
\end{equation}
where:
\begin{equation}
    \begin{gathered}
    c_{1}\colon (x,y)=\left( -\frac{1+\imath\sqrt{3}}{2}, \frac{1-\imath\sqrt{3}}{2}\right),
    \quad
    c_{2} \colon (x,y)=\left(  \frac{1-\imath\sqrt{3}}{2}, -\frac{1+\imath\sqrt{3}}{2} \right),
    \\
    c_{3}\colon (x,y)=\left( -\frac{1-\imath\sqrt{3}}{2}, \frac{1+\imath\sqrt{3}}{2} \right),
    \quad
    c_{4}\colon (x,y)=\left( \frac{1+\imath\sqrt{3}}{2}, -\frac{1-\imath\sqrt{3}}{2}\right),
    \\
    c_{5}\colon (x,y)=\left(-\frac{1+\imath\sqrt{3}}{2}, -\frac{1-\imath\sqrt{3}}{2}  \right),
    \quad
    c_{6}\colon (x,y)=\left(-\frac{1-\imath\sqrt{3}}{2}, -\frac{1+\imath\sqrt{3}}{2}  \right),
    \\
    c_{7}\colon (x,y)=\left(1, 1 \right),
    \quad
    c_{8}\colon\left( x-1, \frac{y-1}{x-1} \right)=\left(0, -1 \right).   
    \end{gathered}
    \label{eq:bpphi3}
\end{equation}

We remark that not all the base points are immediately visible as
indeterminacy points of $\varphi_{2}$ \eqref{eq:phi3} and its inverse,
but some appear as a contraction of the exceptional lines associated
to a point after blowing up. 
We call $F_{i}$ the exceptional line associated
to the blowing up of the indeterminacy point $c_i$.
Then, $c_{i}$, $i=5,6$, appear as image of the exceptional lines $F_{1}$ and $F_{4}$
under $\varphi_{2}$ and $\varphi_{2}^{-1}$ respectively.  Finally, $c_{8}$
arises after blowing-up $c_{7}$ in the coordinate system specified in its
defining equation, i.e. $c_{7}$ and $c_{8}$ are infinitely near points.

Then, a space of initial values is $X_{2} = B_{\varphi_{2}}$,
a rational elliptic surface. A biquadratic pencil
$p_{2}$ is defined by the condition of passing
through the points $c_{i}$:
\begin{equation}
    \begin{aligned}
        p_{2}(x,y;\mu:\nu) &= 
        \mu(x^2 y^2 - 2 x^2 y -2 x y^2+4 x^2-2 x y+4 y^2+2 x-2 y+1)
        \\
        &+\nu(x y-1) (x y+1).
    \end{aligned}
    \label{eq:p2}
\end{equation}
By construction, this pencil is covariant with respect to the map
$\varphi_{2}$ \eqref{eq:phi3}.

Then, from \Cref{def:singular} we have that there are four singular fibres
corresponding to the four values of $\left[ \mu:\nu \right]$ shown in
the first column of \Cref{tab:singtetr}. The irreducible components of
the associated singular fibres are shown in the second column of the
same table.

\begin{table}[tp]
    \centering
    \rotatebox{90}{
    \begin{tabular}{cccc}
        \toprule
        Point & 
        Irred. comp. & 
        Repr. in $\Pic(X_{2})$ &
        Surf./symm. type
        \\
        $[\mu:\nu]$ & $d_{i} $ & $D_{i}$ & $\Lambda$/$\Lambda^\perp$
        \\
        \midrule
        \multirow{2}*{$\left[0:1\right]$}  &
        $\left\{xy+1 = 0\right\}$, & $H_x+H_y-F_1-F_2-F_3-F_4$
        & \multirow{2}*{$A_{1}^{(1)}$/$E_7^{(1)}$}
        \\
        & 
        $\left\{x y - 1=0   \right\}$, & $H_x+H_y-F_5-F_6-F_7-F_8$
        \\
        \midrule
        \multirow{3}*{$\left[1:0\right]$} & 
        $\left\{ xy-(1+\imath \sqrt{3})x-(1-\imath\sqrt{3})y+1=0 \right\} $ & $H_x+H_y-F_2-F_3-F_5-F_7$
        & \multirow{3}*{$A_{2}^{(1)}$/$E_6^{(1)}$}
        \\
        &
        $\left\{ xy-(1-\imath\sqrt{3})x-(1+\imath\sqrt{3})y+1=0 \right\} $ & $H_x+H_y-F_1-F_4-F_6-F_7$
        \\
        &
        $\left\{ \dfrac{y-1}{x-1}=0 \right\}\cup\left\{ \dfrac{x-1}{y-1}=0 \right\}$
        & $F_7-D_8$
        \\
        \midrule
        \multirow{3}*{$\left[-\imath \sqrt{3}, 9\right]$}  &
        $\left\{ y=-\frac{1+\imath \sqrt{3}}{2} \right\}$ & $H_y-F_2-F_6$, 
        & \multirow{3}*{$A_{2}^{(1)}$/$E_6^{(1)}$}
        \\
        & 
        $\left\{  x=-\frac{1+\imath \sqrt{3}}{2} \right\}$ & $H_x-F_1-E_5$
        \\
        & 
        $\left\{7xy-(2\imath\sqrt{3}+4)(x+y)+4\imath\sqrt{3}+1=0 \right\}$ & $H_x+H_y-F_3-F_4-F_7-F_8$
        \\
        \midrule
        \multirow{3}*{$\left[\imath \sqrt{3}, 9\right]$}  &
        $\left\{x=-\frac{1-\imath \sqrt{3}}{2}\right\}$ & $H_x-F_3-F_6$ & \multirow{3}*{$A_{2}^{(1)}$/$E_6^{(1)}$}
        \\
        &
        $\left\{ y=-\frac{1-\imath \sqrt{3}}{2} \right\}$ & $H_y-F_4-F_5$
        \\
        & 
        $\left\{ 7xy+(2\imath\sqrt{3}-4)(x+y)-4\imath\sqrt{3}+1=0 \right\}$ & $H_x+H_y-E_1-E_2-E_7-E_8$
        \\
        \bottomrule
    \end{tabular}
    }
    \caption{Irreducible components of the singular fibres in the pencil 
    associated to \eqref{eq:p2}, their representative in the Picard groups,
    and the associated surface and symmetry type.}
    \label{tab:singtetr}
\end{table}

\begin{remark}
    We remark that in the singular fibres presented in \Cref{tab:singtetr}
    the arrangement of the base points is different from the standard one
    presented e.g. in \cite{KajiwaraNoumiYamada2017R}. In particular,
    the presence of the infinitely near points $b_{7}$ and $b_{8}$
    implies that $X_{2}$ is not isomorphic to the elliptic fibration
    given in \cite{KajiwaraNoumiYamada2017R}.
    \label{rem:b78}
\end{remark}

\begin{remark}
    From the covariant pencil \eqref{eq:p2} we can build the rational 
    function:
    \begin{equation}
        \mathcal{H}_{2}
        =
        \frac{x^2 y^2 - 2 x^2 y -2 x y^2+4 x^2-2 x y+4 y^2+2 x-2 y+1}{(x y-1) (x y+1)},
        \label{eq:H2}
    \end{equation}
    which is an invariant for the map $\varphi_{2}$ \eqref{eq:phi3}.
    By direct computation it is possible to prove that this
    invariant arises from the one obtained from the construction in
    \cite{CelledoniMcLachlanMcLarenOwrenQuispel2017} for the map
    \eqref{eq:rnahm111} using the birational conjugation with the map
    \eqref{eq:pj3}.  
    \label{rem:invariant2}
\end{remark}

Associated to $X_{2}$ we have the Picard group:
\begin{equation}
    \Pic (X_{2}) = \Span_{\Z}( H_{x}, H_{y}, F_{1},\dots,F_{8}),
    \label{eq:PicX2}
\end{equation}
On the Picard group of $X_{2}$, the pullback of $\varphi_{2}$
act linearly as follows:
\begin{equation}
    \varphi_{2}^{*}\colon
    \left(
        \begin{gathered}
            H_{x},H_{y},
            \\
            F_{1},F_{2},F_{3},F_{4},
            \\
            F_{5},F_{6},F_{7},F_{8} 
        \end{gathered}
    \right)\mapsto
    \left(
        \begin{gathered}
            2 H_x+H_y-F_2-F_4-F_7-F_8, H_x,
            \\
            H_x-F_2, F_5, H_x-F_4, F_6,
            \\
            F_3, F_1, H_x-F_8, H_x-F_7
        \end{gathered}
    \right).
    \label{eq:phi3pb}
\end{equation}

Furthermore, the representatives on the Picard group of the irreducible
components of the singular fibres are shown in the third column of
\Cref{tab:singtetr}. Using the intersection form \eqref{eq:intform},
we can find the associated surface type. This is given in the fourth and
last column of \Cref{tab:singtetr}. This implies that the singular fibres
configuration is of type $(3A_{2}\times A_{1})^{(1)}$,
type 61 in Oguiso and Shioda's list
\cite[Table
8.2]{SchuttShioda2019mordell}. This ends of the proof of the first statement of
\Cref{thm:mainB}. The structure of the singular fibres is summarised
in \Cref{fig:type61}.

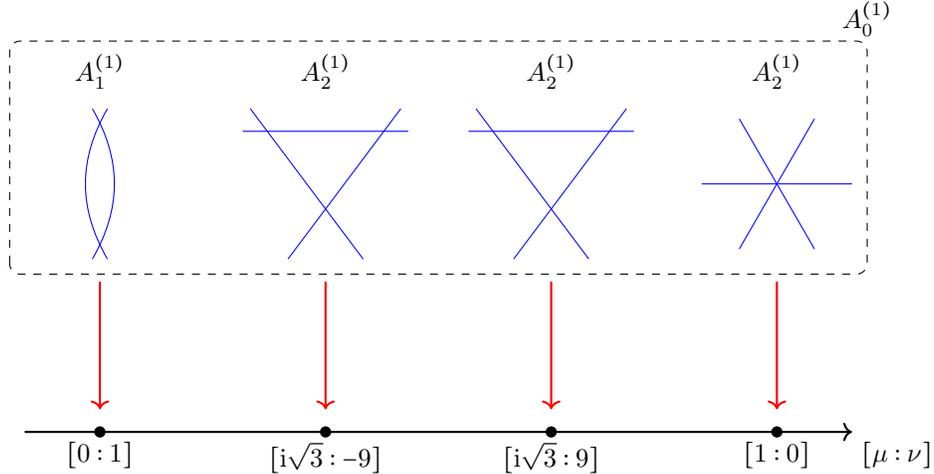
\begin{figure}[hbt]
    \centering
    \begin{tikzpicture}
        \draw[rounded corners,dashed] (-0.2, 2.1) rectangle (11.2, 5.2) {};
        \node at (11.2,5.5) {$A_{0}^{(1)}$};
        \draw[thick,->] (0,0) --(1,0) node[below] {$[0:1]$} 
            --(4,0) node[below] {$[\imath \sqrt{3}:-9]$}
            --(7,0) node[below] {$[\imath \sqrt{3}:9]$}
            --(10,0) node[below] {$[1:0]$}
            -- (11,0) node[below right] {$[\mu:\nu]$};
        \node at (1,0) [circle,fill,inner sep=1.5pt]{};
        \node at (4,0) [circle,fill,inner sep=1.5pt]{};
        \node at (7,0) [circle,fill,inner sep=1.5pt]{};
        \node at (10,0) [circle,fill,inner sep=1.5pt]{};
        \draw[thick,->,red] (1,2) -- (1,0.3);
        \draw[thick,->,red] (4,2) -- (4,0.3);
        \draw[thick,->,red] (7,2) -- (7,0.3);
        \draw[thick,->,red] (10,2) -- (10,0.3);

        \draw [blue] (0.9,2.3) to [bend right=30] (0.9,4.3);
        \draw [blue] (1.1,2.3) to [bend left=30] (1.1,4.3);
        \node at (1,4.8) {$A_{1}^{(1)}$};

        \draw[blue] (3.5,2.3) -- (5,4.3);
        \draw[blue] (4.5,2.3) -- (3,4.3);
        \draw[blue] (2.9,4) -- (5.1,4);
        \node at (4,4.8) {$A_{2}^{(1)}$};

        \draw[blue] (6.5,2.3) -- (8,4.3);
        \draw[blue] (7.5,2.3) -- (6,4.3);
        \draw[blue] (5.9,4) -- (8.1,4);
        \node at (7,4.8) {$A_{2}^{(1)}$};
        
        \draw[blue] (9,3.3) -- (11,3.3);
        \draw[blue] (10-0.5,3.3-0.866) -- (10+0.5,3.3+0.866);
        \draw[blue] (10-0.5,3.3+0.866) -- (10+0.5,3.3-0.866);
        \node at (10,4.8) {$A_{2}^{(1)}$};
    \end{tikzpicture}
    \caption{A graphical representation of the singular fibres of $p_{2}$.}
    \label{fig:type61}
\end{figure}

From \Cref{tab:singtetr} we see that 
the full symmetry of a system on a surface of type
$\Lambda=A_{1}^{(1)}$ is
$\Lambda^\perp=E_{7}^{(1)}$. Then, in \Cref{sec:B3} we show that map
$\varphi_{2}$ \eqref{eq:phi3}, when considered on a $A_{1}^{(1)}$ type
surface admits $B_{3}^{(1)}\subset E_7^{(1)}$ type symmetry.

\subsection{The map $\varphi_{s}$ and its initial value space}

In \Cref{rem:b78} we noticed that the point configuration of that
$A_1^{(1)}$ surface is different with respect to the standard one
considered for instance in~\cite{KajiwaraNoumiYamada2017R}. So, a natural
question to ask is if we can recover a more general pencil with a similar
structure, but no infinitely near points.

Consider a generic biquadratic polynomial and impose it to pass through 
the points $c_i$ with $i=1,\ldots,6$. By direct computation, is shown
that the obtained curve can be parametrised as:
\begin{equation}
\begin{aligned}
    p_s 
    &= \mu \left( x^{2}y^{2}-2x^{2}y-2 x y^{2}+4{x}^{2}-2 xy + 4 y^{2}-2x-2y+1 \right)
    \\
    &+\nu \left[ s \left( {x}^{2}+{y}^{2}+1\right) - \left( 1-s \right)  \left({x}^{2}{y}^{2}-1 \right) \right].
\end{aligned}
\label{eq:padd}
\end{equation}
In \cref{eq:padd} where $[\mu:\nu]\in\Pj^1$ are the parameters of the
pencil and $s$ is an additional parameter. For $s\neq0$ we have a pencil
with eight distinct base points.  For sake of simplicity we denote the
base points by $l_{i}$ and we list them as follows:
\begin{equation}
    \begin{gathered}
        l_{1}(s) \colon (x,y) =
        \left(\frac{(5 -3\sqrt{3} \imath) s^2 -32 s+32}{26 s^2-56 s+32},
        \frac{(5 s^2+ 3 \imath\sqrt{3}) s^2 -32 s+32}{26 s^2-56 s+32}\right), 
        \\
        l_{2}(s) \colon (x,y) =
        \left(\frac{(5 s^2+ 3 \imath\sqrt{3}) s^2 -32 s+32}{26 s^2-56 s+32},
        \frac{(5 -3\sqrt{3} \imath) s^2 -32 s+32}{26 s^2-56 s+32}\right)
        \\
        l_{3}=c_{5},\ l_{4}=c_{6},\ l_{5}=c_{2},\
        l_{6}=c_{4},\ l_{7}=c_{3},\ l_{8}=c_{1}.
    \end{gathered}
    \label{eq:bpl}
\end{equation}
When $s=0$ the two base points $l_{1}(s)$ and $l_{2}(s)$ coalesce and
we get the base points $c_{7,8}$. Note that, however the limit is 
singular.

We now define a map which is covariant on the pencil~\eqref{eq:padd}.
Since the pencil $p_{s}$~\eqref{eq:padd} is symmetric under the
involution $\iota\colon(x,y)\mapsto(y,x)$, we can adopt the
construction of the QRT root as explained 
in~\cite[\S 6.3]{HietarintaJoshiNijhoff2016}. Consider the 
rational function obtained by taking the coefficients of
the biquadratic polynomial $p_{s}$~\eqref{eq:padd} with
respect to $\mu$ and $\nu$:
\begin{equation}
    \mathcal{H}_{s} = 
    \frac{x^{2}y^{2}-2x^{2}y-2 x y^{2}+4{x}^{2}-2 xy + 4 y^{2}-2x-2y+1}{%
        s \left( {x}^{2}+{y}^{2}+1\right) - \left( 1-s \right)  \left({x}^{2}{y}^{2}-1 \right)}.
    \label{eq:Hs}
\end{equation}
Considering the map:
\begin{equation}
    \upsilon_{s}\colon (x,y)\mapsto (x,y'),
    \label{eq:ups}
\end{equation}
with $y'$ non-trivial solution of:
\begin{equation}
    \mathcal{H}_{s}(x,y') = \mathcal{H}_{s}(x,y).
    \label{eq:psprime}
\end{equation}
By direct computation from~\eqref{eq:psprime} we obtain a trivial 
solution $y'=y$, while the non-trivial one is:
\begin{equation}
    y' = \frac{3 s x^2 y+2 s x^2-3 s x y-4 x^2 y+s y+6 x y-4 y+2}{%
        2 s x^2 y-3 s x^2-2 x^2 y+3 s x+2 s y+4 x^2-s-6 x+4},
    \label{eq:yp}
\end{equation}
and the map $\upsilon_{s}$ is an involution. The composition 
$\iota\circ\upsilon_{s}$ is then explicitly given by the maps 
$\varphi_{s}$ of \Cref{eq:phis}. This is a map of infinite order, 
called a QRT root, is by construction covariant on 
$p_{s}$~\eqref{eq:padd} and invariant on 
$\mathcal{H}_{s}$~\eqref{eq:Hs}.

Now, by construction, the space of the initial conditions of the map
$\varphi_{s}$~\eqref{eq:phis}, namely $X_{s}=B_{\varphi_s}$, is 
constructed by blowing up $\PcrossP$ on the points $l_{i}$. Indeed, by
construction
\begin{equation}
    S_{s} = \Sing \varphi_{s} = \left( l_{1},\dots,l_{8} \right).
    \label{eq:Sings}
\end{equation}
Again, by construction $X_{s}$ is a rational elliptic surface. We 
denote the exceptional lines corresponding to the blow up of the 
points $l_{i}$ by $L_{i}$.

Associated to $X_{s}$ we have the Picard group: 
\begin{equation}
    \Pic (B_{\varphi_s}) = \Span_{\Z} (H_x,H_y,L_1,\ldots,L_8).
\end{equation}
Then the action on $\Pic (B_{\varphi_s})$ is given by:
\begin{equation}
\varphi^{*}_s \colon
\begin{pmatrix}
H_x,H_y,
\\
L_1,L_2,L_3,L_4,
\\
L_5,L_6,L_7,L_8
\end{pmatrix}
\mapsto
\begin{pmatrix}
2 H_x+ H_y-L_1-L_2- L_5- L_6, H_x,
\\ 
H_x - L_2, H_x - L_1, L_7, L_8,
\\ 
L_3, L_4, H_x- L_6, H_x - L_5
\end{pmatrix}.
\end{equation}

Then, from \Cref{def:singular} we find that the pencil 
$p_s$~\eqref{eq:padd} has five singular fibres corresponding to the
following values of $[\mu:\nu]\in\Pj^{1}$:
\begin{subequations}
    \begin{align}
        \zeta_{1} &= [1:0],
        \\
        \zeta_{2}(s) &= [-s/4, 1],
        \\
        \zeta_{3}(s) &= [\imath( s-2) \sqrt{3}-3 s, 18],
        \\
        \zeta_{4}(s) &= [-\imath(s-2)\sqrt{3}-3 s, 18],
        \\
        \zeta_{5}(s) &= [(1-s) (s^2-s+1), s^3].
    \end{align}
    \label{eq:singfphis}%
\end{subequations}
This information, along with the irreducible components of the 
associated singular fibres are shown in the first and second column 
of \Cref{tab:sings}.

\begin{remark}
    In the form of the singular values of the pencil 
    $p_{s}$~\eqref{eq:padd} we choose to underline that the
    point $\zeta_{1}$ is independent from $s$, while $\zeta_{i}(s)$, 
    for $i=2,3,4,5$ the dependence is explicit. The reason will more
    evident from the discussion we will make in \Cref{rem:degfibr}.
    \label{rem:nos}
\end{remark}

\begin{table}[hb]
    \centering
    \rotatebox{90}{
    \begin{tabular}{cccc}
        \toprule
        Point & 
        Irred.\ comp. & 
        Repr.\ in $\Pic(X_{2})$ &
        Surf./symm.\ type
        \\
        $[\mu:\nu]$ & $d_{i} $ & $D_{i}$ & $\Lambda$/$\Lambda^\perp$
        \\
        \midrule
        \multirow{2}*{$\zeta_{1}$}  &
        $\left\{xy-(1+\imath \sqrt{3})x-(1-\imath\sqrt{3})y+1=0\right\}$, & 
        $H_{x}+H_{y} - L_{1} -L_{3} -L_{5} -L_{7}$
        & \multirow{2}*{$A_{1}^{(1)}$/$E_7^{(1)}$}
        \\
        & 
        $\left\{xy-(1-\imath\sqrt{3})x-(1+\imath\sqrt{3})y+1=0\right\}$, & 
        $H_{x}+H_{y} - L_{2} -L_{4} -L_{6} -L_{8}$
        \\
        \midrule
        \multirow{2}*{$\zeta_{2}(s)$} & 
        $\left\{\displaystyle xy+2 \frac {sx}{3 s-4}+2 \frac {sy}{3 s-4} =\frac {s-4}{3 s-4} \right\} $ & 
        $H_{x}+H_{y} - L_{1} -L_{2} -L_{3} -L_{4}$
        & \multirow{2}*{$A_{2}^{(1)}$/$E_6^{(1)}$}
        \\
        &
        $\left\{xy+1=0 \right\} $ & $H_{x}+H_{y} - L_{5} -L_{6} -L_{7} -L_{8}$
        \\
        \midrule
        \multirow{3}*{$\zeta_{3}(s)$}  &
        $\left\{\displaystyle x=-\frac{1-\imath\sqrt {3}}{2} \right\}$ & 
        $H_{x} - L_{4} -L_{7}$, 
        & \multirow{3}*{$A_{1}^{(1)}$/$E_7^{(1)}$}
        \\
        & 
        $\left\{ \displaystyle y=-\frac{1-\imath\sqrt {3}}{2} \right\}$ & 
        $H_{y} - L_{3} -L_{6}$
        \\
        & 
        $\left\{ \displaystyle
            \begin{gathered}
                xy
                +\frac{1}{14} 
                \frac{\left(13\imath\sqrt {3}-5 \right)\left( 4 \imath\sqrt{3}-7 s+8 \right) x}{\imath\sqrt {3}-19 s+ 23}
                \\
                +\frac{1}{14} {\frac { \left( 13 \imath\sqrt {3}-5 \right)  \left( 4 \imath\sqrt {3}-7 s+8 \right) y}{\imath\sqrt {3}-19 s+23}}
                -\frac{1}{2} 
                \frac{ \left( 5 \imath\sqrt {3}+1 \right)  \left( \imath\sqrt {3}-s+5 \right) }{\imath\sqrt {3}-19 s+23}
            \end{gathered}
        \right\}$ & 
        $H_{x}+H_{y}-L_{1}-L_{2}-L_{5}-L_{8}$
        \\
        \midrule
        \multirow{3}*{$\zeta_{4}(s)$}  &
        $\left\{ \displaystyle x=-\frac{1+\imath\sqrt {3}}{2} \right\}$ & 
        $H_{x} - L_{3} -L_{8}$ & \multirow{3}*{$A_{2}^{(1)}$/$E_6^{(1)}$}
        \\
        &
        $\left\{ \displaystyle y=-\frac{1+\imath\sqrt {3}}{2} \right\}$ & 
        $H_{y} - L_{4} -L_{5}$
        \\
        & 
        $\left\{\displaystyle
            \begin{gathered}
                xy 
                -\frac{1}{14} {\frac { \left( 13 \imath\sqrt {3}+5 \right)  \left( 4 \imath\sqrt {3}+7 s-8 \right) x}{\imath\sqrt {3}+19 s- 23}}
                \\
                -\frac{1}{14} {\frac { \left( 13 \imath\sqrt {3}+5 \right)  \left( 4 \imath\sqrt {3}+7 s-8 \right) y}{\imath\sqrt {3}+19 s-23}} 
                + \frac{1}{2} {\frac { \left( 5 \imath\sqrt {3}-1 \right)  \left( s+\imath\sqrt {3}-5 \right) }{\imath\sqrt {3}+19 s-23}}
= 0
            \end{gathered}
        \right\}$ & 
        $H_{x}+H_{y}-L_{1}-L_{2}-L_{6}-L_{7}$
        \\
        \midrule
        $\zeta_{5}(s)$ & $\left\{ p_{5}^{*}=0 \right\}$ & $-K_{X_{s}}$ & $A_{0}^{(1)*}$/$E_{8}^{(1)}$
        \\
        \bottomrule
\end{tabular}}
    \caption{Irreducible components of the singular fibres in the 
        pencil associated to~\eqref{eq:padd}, their representative in 
        the Picard groups, and the associated surface and symmetry 
        type. The biquadratic polynomial $p_{5}^{*}$ is given in
    \Cref{eq:p5star}.}
    \label{tab:sings}
\end{table}

Furthermore, the representatives on the Picard group of the 
irreducible components of the singular fibres are shown in the third 
column of \Cref{tab:sings}. Using the intersection 
form~\eqref{eq:intform} we can find the associated surface type. This 
is given in the fourth and last column of \Cref{tab:sings}. This 
implies that the singular fibre structure is 
${(2A_2\times 2 A_1 \times A_0^{*})}^{(1)}$, that is case 40 
from~\cite[Table 8.2]{SchuttShioda2019mordell}\footnote{Note that in 
    that classification the contribution from $A_0^{(1)*}$ fibre is not present since 
it does not contribute to the Mordell--Weil lattice.}. 
The fibre of type $A_{0}^{(1)*}$ has as representative the following
irreducible curve in $\PcrossP$ the nodal biquadric:
\begin{equation}
    \begin{aligned}
        p_{5}^{*} &=
        \left( {s}^{2}+1 \right)  \left( s-1 \right)^{2}{x}^{2}{y}^{2}
        +2 \left( s-1 \right)  \left( {s}^{2}-s+1 \right) {x}^{2}y
        \\
        &+ \left( {s}^{2}-2 s+2 \right)^{2}{x}^{2}
        +2  \left( s-1 \right)\left( {s}^{2}-s +1 \right) x{y}^{2}
        \\
        &+2  \left( s-1 \right)  \left( {s}^{2}-s+1 \right) xy
        +2  \left( s-1 \right)  \left( {s}^{2}-s+1 \right) x
        \\
        &+\left( {s}^{2}-2 s+2 \right)^{2}{y}^{2}
        +2  \left( s-1 \right)  \left( {s}^{2}-s+1 \right) y+2 {s}^{2}-2 s+1,
    \end{aligned}
    \label{eq:p5star}
\end{equation}
having a node in the point:
\begin{equation}
    n_{s}\colon (x,y) = \left(\frac{1}{1-s},\frac{1}{1-s}  \right).
    \label{eq:p5sing}
\end{equation}
The fibre structure is graphically represented in \Cref{fig:type40}. 
The ends the proof of the first statement of \Cref{thm:mainC}.

\begin{remark}
    We remark that as $s\to0$ we have that the singular fibre 
    $\zeta_{5}(s)$ collide with the fibre $\zeta_{1}=[0:1]$. That 
    is, they originate the second fibre of the map $\varphi_{2}$, 
    listed in \Cref{tab:singtetr}: the collision of a fibre of 
    surface type $A_{0}^{(1)*}$ and $A_{1}^{(1)}$ generates a fibre 
    of surface type $A_{3}^{(1)}$. This is because both the 
    irreducible components of the fibre in $\zeta_{1}$, see 
    \Cref{tab:sings} pass through the point $(x,y)=(1,1)=b_{7}$ which 
    becomes a base point for the pencil as $s\to0$. Note also that
    the nodal point $n_{s}$~\eqref{eq:p5sing} of the curve 
    $p_{5}^{*}$~\eqref{eq:p5star} tend to new base point $c_{7}$. The 
    fibre structure is then changed to $A_{3}^{(1)}$ because the 
    proper transform of the irreducible components of the fibre in
    $\zeta_{1}$ and the exceptional line on which $c_{7,8}$ lie all
    intersect on a single point, see \Cref{fig:type61}.
    \label{rem:degfibr}
\end{remark}

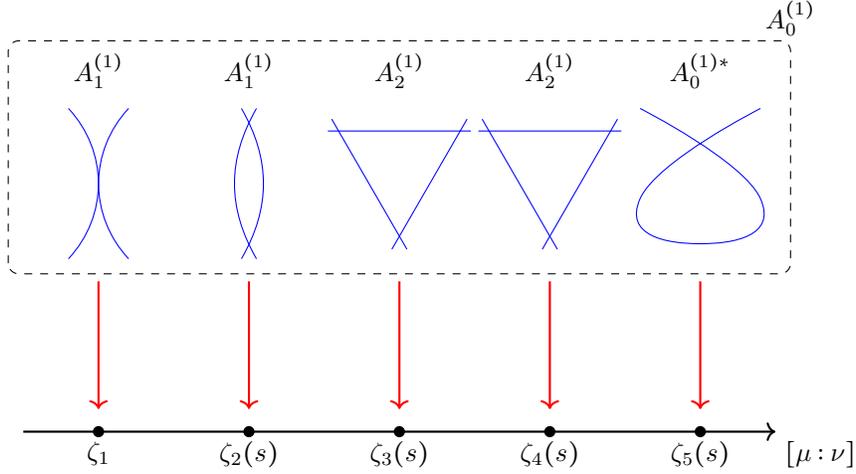
\begin{figure}[htb]
    \centering
    \begin{tikzpicture}
        \draw[rounded corners,dashed] (-0.2, 2.1) rectangle (10.2, 5.2) {};
        \node at (10.2,5.5) {$A_{0}^{(1)}$};
        \draw[thick,->] (0,0) --(1,0) node[below] {$\zeta_{1}$} 
            --(3,0) node[below] {$\zeta_{2}(s)$}
            --(5,0) node[below] {$\zeta_{3}(s)$}
            --(7,0) node[below] {$\zeta_{4}(s)$}
            -- (9,0) node[below] {$\zeta_{5}(s)$}
            -- (10,0) node[below right] {$[\mu:\nu]$};
        \node at (1,0) [circle,fill,inner sep=1.5pt]{};
        \node at (3,0) [circle,fill,inner sep=1.5pt]{};
        \node at (5,0) [circle,fill,inner sep=1.5pt]{};
        \node at (7,0) [circle,fill,inner sep=1.5pt]{};
        \node at (9,0) [circle,fill,inner sep=1.5pt]{};
        \draw[thick,->,red] (1,2) -- (1,0.3);
        \draw[thick,->,red] (3,2) -- (3,0.3);
        \draw[thick,->,red] (5,2) -- (5,0.3);
        \draw[thick,->,red] (7,2) -- (7,0.3);
        \draw[thick,->,red] (9,2) -- (9,0.3);

        \draw [blue] (0.6,2.3) to [bend right=43] (0.6,4.3);
        \draw [blue] (1.4,2.3) to [bend left=43] (1.4,4.3);
        \node at (1,4.8) {$A_{1}^{(1)}$};

        \draw [blue] (2.9,2.3) to [bend right=30] (2.9,4.3);
        \draw [blue] (3.1,2.3) to [bend left=30] (3.1,4.3);
        \node at (3,4.8) {$A_{1}^{(1)}$};

        \draw[blue] (5-0.2*0.5,2.6-0.2*0.866) -- (5+1.8*0.5,2.6+1.8*0.866);
        \draw[blue] (5-1.8*0.5,2.6+1.8*0.866) -- (5+0.2*0.5,2.6-0.2*0.866);
        \draw[blue] (5-1.9*0.5,4) -- (5+1.9*0.5,4);
        \node at (5,4.8) {$A_{2}^{(1)}$};
        
        \draw[blue] (7-0.2*0.5,2.6-0.2*0.866) -- (7+1.8*0.5,2.6+1.8*0.866);
        \draw[blue] (7-1.8*0.5,2.6+1.8*0.866) -- (7+0.2*0.5,2.6-0.2*0.866);
        \draw[blue] (7-1.9*0.5,4) -- (7+1.9*0.5,4);
        \node at (7,4.8) {$A_{2}^{(1)}$};

        \draw [blue] plot [smooth, tension=1] coordinates {(8.2,4.3) (9.8,3.1) (9,2.5) (8.2,3.1) (9.8,4.3)};

        \node at (9,4.8) {$A_{0}^{(1)*}$};
    \end{tikzpicture}
    \caption{A graphical representation of the singular fibres of $p_{s}$.}
    \label{fig:type40}
\end{figure}

\section{Singular fibre of type $A_{2}^{(1)}$ associated with the map $\varphi_{1}$}
\label{sec:octE6}

In this section we compute the action of the map
$\varphi_{1}$~\eqref{eq:qrt0} on the orthogonal complement of the singular
fibre $[0:1]$ of type $A_{2}^{(1)}$ of pencil $p_{1}$~\eqref{eq:p1}. Then,
we compute the normalizer of the Weyl group $W(4A_{1})$ in the extended
Weyl group $\widetilde{W}( E_{6}^{(1)})$. This will yield that the map
$\varphi_{1}$~\eqref{eq:qrt0} on such a singular fibre has symmetry
type $A_{2}^{(1)}\subset E_{6}^{(1)}$, thus ending the proof of
\Cref{thm:mainA}.

\subsection{Action on the simple roots of $E_{6}^{(1)}$ and decomposition in $\widetilde{W}(E_6^{(1)})$}

Following Table \ref{tab:singoct}, we have that the singular fiber of the 
pencil \eqref{eq:p1} at the point $\left[ \mu:\nu \right]=\left[ 0:1\right]$ 
is of type $A_2^{(1)}$ and is generated by the following divisors in $\Pic(X_1)$:
\begin{equation}
    D_{0} = H_{y}-E_{5}-E_{6},
    \quad
    D_{1} = H_{x}-E_{7}-E_{8},
    \quad
    D_{2} = H_{x}+H_{y}-E_{1}-E_{2}-E_{3}-E_{4}.
    \label{eq:qrt0Di}
\end{equation}
The three divisors are permuted by the pullback map $\varphi^{*}_{1}$, that is
$\varphi_{1}^{*}\left( D_{i} \right)=D_{i+1}$, $i\in \Z/3\Z$. The rational
elliptic surface $X_1$ with this choice of the singular fibre is shown
in \Cref{fig:pic}.

The orthogonal complement of $A_{2}^{(1)}$ type singular fiber in 
$\Pic\left( X_1 \right)$ is generated by:
\begin{equation}
    \begin{gathered}
    \alpha_{0} = E_5-E_6,
    \quad
    \alpha_{1} = E_8-E_7,
    \quad
    \alpha_{2} = H_y-E_1-E_8,
    \quad
    \alpha_{3} = E_1-E_2, 
    \\
    \alpha_{4} = E_2-E_3,
    \quad
    \alpha_{5} = E_3-E_4,
    \quad
    \alpha_{6} =H_x-E_1-E_5.
    \end{gathered}
    \label{eq:E61}
\end{equation}

The set $\De^{(1)}=\De\left(E_6^{(1)}\right)=\{\al_j\mid 0\leq j\leq 6\}$ is a simple system of type
$E_6^{(1)}$, its Dynkin diagram is given in \Cref{fig:e6}. The anti-canonical divisor corresponds to the null root
of the $E_{6}^{(1)}$ system:
\begin{equation}
    -K_{X_{1}}=\delta = 
    \alpha_{015224466333}.
    \label{eq:E61nullroot}
\end{equation}
Let $V^{(1)}$ be a vector space
with basis 
$\De^{(1)}$,
equipped with a symmetric bilinear form given by Equations \eqref{alaij0}
and \eqref{CarE6a}. The corresponding extended affine Weyl group is, 
\begin{equation}
\widetilde{W}(E_{6}^{(1)})=\langle s_i\;|\;0\leq i \leq
6\rangle\rtimes\langle \sigma\rangle
=W(E_{6})\ltimes P=W(E_{6})\ltimes\lan u_j\, \mid 1\leq j\leq 6\ran,
\end{equation}
where $u_j$ is the translation by the fundamental weight $h_j$ $(1\leq j\leq 6)$
of the $E_6^{(1)}$ system.
The action of 
$s_j\in W(E_6^{(1)})$ on $\De^{(1)}$ is given by Equations
\eqref{sij0} and \eqref{CarE6a}.
See \Cref{app:E6} for
basic data and properties of $\widetilde{W}(E_{6}^{(1)})$.

\begin{figure}[ht!]
    \centering
    \begin{tikzpicture}[scale=2]
        \def\pt{2};
        \node[green,above,left] at (0,2) {$H_{x}-E_{7}-E_{8}$};
        \draw[thick,->,green] (0,-2)--(0,2);
        \node[green,below,right] at (2,0) {$H_{y}-E_{5}-E_{6}$};
        \draw[thick,->,green,dashed] (-2,0)--(2,0);
        \draw[domain=1/2:2, smooth, variable=\x, green,thick,dash dot] plot (\x, 1/\x);
        \draw[domain=-2:-1/2, smooth, variable=\x, green,thick,dash dot] plot (\x, 1/\x);
        \node[green,above,right] at (1/2,2) {$H_{x}+H_{y}-E_{1}-E_{2}-E_{3}-E_{4}$};
        \node[red,right] at ($({sqrt(\pt)+1/6},{1/sqrt(\pt)+1/3})$) (E1) {$E_{1}$};
        \draw[red] ($({sqrt(\pt)-1/6},{1/sqrt(\pt)-1/3})$)--($({sqrt(\pt)+1/6},{1/sqrt(\pt)+1/3})$) ;
        \node[red,left] at ($({-sqrt(\pt)-1/6},{-1/sqrt(\pt)-1/3})$) (E2) {$E_{2}$};
        \draw[red] ($({-sqrt(\pt)-1/6},{-1/sqrt(\pt)-1/3})$) --($(-{sqrt(\pt)+1/6},{-1/sqrt(\pt)+1/3})$) ;
        \node[red,below] at ($({-1/sqrt(\pt)-1/3},{-sqrt(\pt)-1/6})$) (E3) {$E_{3}$};
        \draw[red] ($({-1/sqrt(\pt)-1/3},{-sqrt(\pt)-1/6})$) --($({-1/sqrt(\pt)+1/3},{-sqrt(\pt)+1/6})$) ;
        \node[red,below] at ($({1/sqrt(\pt)+1/3},{sqrt(\pt)+1/6})$) (E4) {$E_{4}$};
        \draw[red] ($({1/sqrt(\pt)-1/3},{sqrt(\pt)-1/6})$) --($({1/sqrt(\pt)+1/3},{sqrt(\pt)+1/6})$) ;
        \node[red,right] at ($({1/sqrt(\pt)},{1/3})$) (E5) {$E_{5}$};
        \draw[red] ($({1/sqrt(\pt)},{-1/3})$) --($({1/sqrt(\pt)},{1/3})$) ;
        \node[red,left] at ($({-1/sqrt(\pt)},{-1/3})$) (E6) {$E_{6}$};
        \draw[red] ($({-1/sqrt(\pt)},{-1/3})$) --($({-1/sqrt(\pt)},{1/3})$) ;
        \node[red,below] at ($({1/3},{-1/sqrt(\pt)})$) (E7) {$E_{7}$};
        \draw[red] ($({-1/3},{-1/sqrt(\pt)})$) --($({1/3},{-1/sqrt(\pt)})$) ;
        \node[red,above] at ($({-1/3},{1/sqrt(\pt)})$) (E8) {$E_{8}$};
        \draw[red] ($({-1/3},{1/sqrt(\pt)})$) --($({1/3},{1/sqrt(\pt)})$) ;
    \end{tikzpicture}
    \caption{The space of initial values of \eqref{eq:qrt0} interpreted
        as elliptic surface of type $A_{2}^{(1)}$.}
    \label{fig:pic}
\end{figure}

\begin{figure}[ht!]
    \centering
    \begin{tikzpicture}[thick]
        \node (a1) at (0,0){$\alpha_{1}$};
        \node (a2) at (1,0) {$\alpha_{2}$};
        \node (a3) at (2,0) {$\alpha_{3}$};

        \node (a4) at (3,0) {$\alpha_{4}$};
        \node (a5) at (4,0) {$\alpha_{5}$};
        \node (a6) at (2,1) {$\alpha_{6}$};
        \node (a0) at (2,2) {$\alpha_{0}$};
        \draw[-] (a1) -- (a2);
        \draw[-] (a2) -- (a3);
        \draw[-] (a3) -- (a4);
        \draw[-] (a4) -- (a5);
        \draw[-] (a0) -- (a6);
        \draw[-] (a3) -- (a6);
        \centerarc[dashed,<-](2,0)(-5:-175:1);
        \centerarc[dashed,<-](2,0)(-5:-175:2);
        \centerarc[dashed,->](2,0)(10:75:1);
        \centerarc[dashed,->](2,0)(5:80:2);
        \centerarc[dashed,->](2,0)(105:165:1);
        \centerarc[dashed,->](2,0)(100:170:2);
        \node (pi2) at ({2+2.3*cos(50)},{2.3*sin(50)}) {$\sigma$};
    \end{tikzpicture}
    \caption{$\Ga(E_{6}^{(1)})$ with diagram 
    automorphism $\sigma=(150)(246)$.}
    \label{fig:e6}
\end{figure}
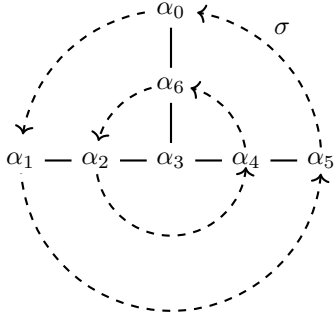

Using Equations \eqref{eq:qrt0pb} and \eqref{eq:E61}
we have $\varphi^{*}_{1}$ acts on $\De^{(1)}$ by:
\begin{equation}
    \varphi^{*}_{1}
    \colon
    \left( 
        \alpha_{0},\alpha_{1},\alpha_{2},\alpha_{3},
        \alpha_{4},\alpha_{5},\alpha_{6}
    \right)
    \mapsto
    \left( 
        -\alpha_1, -\alpha_5, \alpha_{3456}, \alpha_0,
        -\alpha_{036}, \alpha_3, \alpha_{1236}
    \right).
    \label{eq:qrt0E61}
\end{equation}
Computing four iterations of the map $\varphi_{1}^{*}$ we obtain:
\begin{equation}
   \left(\varphi_{1}^{*}\right)^{4}
    \colon
    \left( 
        \alpha_{0},\alpha_{1},\alpha_{2},\alpha_{3},
        \alpha_{4},\alpha_{5},\alpha_{6}
    \right)
    \mapsto
    \left( 
        \alpha_{0},\alpha_{1},\alpha_{2},\alpha_{3},
        \alpha_{4}-\de,\alpha_{5},\alpha_{6}+\de
    \right).
    \label{eq:qrt0E614}
\end{equation}
By Equation \eqref{uja} we recognise that
$\left(\varphi_{1}^{*}\right)^{4}\in P$ is
the translation
by $h_4-h_6$, to which
$\varphi_{1}^{*}$ is a quasi-translation 
of order four in $\widetilde{W}\left(E_6^{(1)}\right)$.
It can be checked that $|h_4-h_6|^2=\frac{4}{3}$ using \Cref{hijE6}.
Finally, $\varphi_{1}^{*}$ has the following decomposition
in terms of the generators of $\widetilde{W}( E_{6}^{(1)} )$ using Equations \eqref{lfun} and \eqref{uaw}:
\begin{equation}
    \varphi_{1}^{*} = \sigma s_{453401}.
    \label{eq:qrt0dec}
\end{equation}

\subsection{The normalizer of ${W}(4A_1)$ in $\widetilde{W}( E_{6}^{(1)} )$}\label{Ne6}

The initial observation on $\varphi^{*}_{1}$ comes from its actions on $\De^{(1)}$
given in Equation \eqref{eq:qrt0E61}, that is $\varphi^{*}_{1}$ acts on the subset  
$\{\al_0, \al_1, \al_3, \al_5\}\subset \De^{(1)}$ as a permutation of order four modulo the minus signs. This prompts us to take
\begin{equation}\label{Je6}
J=\{\al_0, \al_1, \al_3, \al_5\}\cong 4A_1, 
\end{equation}
and consider the normalizer of 
$W_J=\lan s_0, s_1, s_3, s_5\ran$ in $\widetilde{W}(E_6^{(1)})$.
\begin{proposition}\label{NJWJe6}
  The normalizer of ${W}_J$ with $J=\{\al_0, \al_1, \al_3, \al_5\}$ in $\widetilde{W}( E_{6}^{(1)} )$ is given by
\begin{equation}\label{NWJe6}
N(W_J)=N_J\ltimes W_J=\left(\lan b_1, b_2, b_0\ran\rtimes\lan \sigma\ran\right)\ltimes
\lan s_0, s_1, s_3, s_5\ran
\cong \widetilde{W}\left(A_2^{(1)}\right)\ltimes {W}(4A_1),
\end{equation}  
with
\begin{equation}\label{A2sr}
    b_1 = s_{4534},\quad
    b_2 = s_{6036}, \quad
    b_0 = s_{2312}.
\end{equation}
\end{proposition}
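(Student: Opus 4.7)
The plan is to invoke the Howlett--Brink--Howlett decomposition $N(W_J) = N_J \ltimes W_J$ recalled in \Cref{norm}, with $N_J = \{w \in \widetilde{W}(E_6^{(1)}) : wJ = J\}$, and then to identify $N_J$ with the extended affine Weyl group $\widetilde{W}(A_2^{(1)}) = \langle b_0, b_1, b_2 \rangle \rtimes \langle \sigma \rangle$ generated by the proposed R-elements and the diagram automorphism.

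First I would verify by direct computation that each proposed generator lies in $N_J$. The diagram automorphism $\sigma=(150)(246)$ of \Cref{fig:e6} visibly preserves $J = \{\alpha_0, \alpha_1, \alpha_3, \alpha_5\}$, since it fixes $\alpha_3$ and cyclically permutes $\alpha_0, \alpha_1, \alpha_5$. Each $b_i$ has the shape $s_k s_j s_i s_k$ where $\alpha_k \in \Delta^{(1)} \setminus J$ is the unique node linking two elements $\alpha_i, \alpha_j \in J$ in the Dynkin diagram; repeated application of \eqref{sij0} shows that $b_i$ is an involution which swaps $\alpha_i \leftrightarrow \alpha_j$ and fixes the remaining two elements of $J$. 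Concretely, $b_0$ swaps $\alpha_1 \leftrightarrow \alpha_3$ via $\alpha_2$, $b_1$ swaps $\alpha_3 \leftrightarrow \alpha_5$ via $\alpha_4$, and $b_2$ swaps $\alpha_0 \leftrightarrow \alpha_3$ via $\alpha_6$.

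Next I would establish the group structure of $\langle b_0, b_1, b_2, \sigma \rangle$. Computing directly from the defining relations of $W(E_6^{(1)})$, I would verify the Coxeter-type identities $b_i^2=1$ and $(b_i b_j)^3 = 1$ for $i \neq j$, together with $\sigma^3 = 1$ and the conjugation rule $\sigma b_i \sigma^{-1} = b_{i+1 \bmod 3}$; the latter follows by relabelling indices under $\sigma$ and simplifying using the commutations $s_1 s_3 = s_3 s_1$, $s_3 s_5 = s_5 s_3$, $s_0 s_3 = s_3 s_0$. These relations exhibit the group as a quotient of $\widetilde{W}(A_2^{(1)})$. To promote the surjection to an isomorphism, I would project each $b_i$ onto the orthogonal complement $V_J^\perp$ and identify its image with a reflection along a root $\beta_i \in V_J^\perp$; the three roots $\beta_0, \beta_1, \beta_2$ should form a genuine $A_2^{(1)}$ simple system, whose explicit construction is carried out in the next subsection of the paper devoted to translations in $E_6^{(1)}$.

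The main obstacle, and the substantive input to the proof, will be \emph{exhaustion}: proving that no further element of $\widetilde{W}(E_6^{(1)})$ preserves $J$ set-wise. For this step I would appeal to the Brink--Howlett theory \cite{H,BH}, which asserts that $N_J$ is generated precisely by the R-elements associated with paths in $\Delta^{(1)} \setminus J$ connecting pairs of elements of $J$, together with the M-elements arising from symmetries of the associated admissible diagram. In our case the only connecting paths in $\Gamma(E_6^{(1)}) \setminus J$ are the three length-two paths through $\alpha_2$, $\alpha_4$, $\alpha_6$, which yield exactly $b_0, b_1, b_2$, while the only nontrivial diagram symmetry of $\Gamma(E_6^{(1)})$ compatible with $J$ is $\sigma$; no further generators are available. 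Combining this exhaustiveness statement with the preceding two steps produces the claimed decomposition \eqref{NWJe6}.
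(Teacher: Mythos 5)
Your proposal is correct and follows essentially the same route as the paper: both invoke the Howlett/Brink--Howlett decomposition $N(W_J)=N_J\ltimes W_J$, identify the three R-elements $b_0,b_1,b_2$ (as the order-two swaps of pairs in $J$ through the connecting nodes $\al_2,\al_4,\al_6$) and the M-element $\sigma$, and verify the defining relations of $\widetilde{W}(A_2^{(1)})$. The only difference is one of explicitness: you spell out the exhaustion step and flag that faithfulness (quotient versus isomorphism) is settled by the $\be$-system of the following subsection, both of which the paper leaves implicit in its citation of \cite{H,BH}.
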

\begin{proof}
 Recall that $N(W_J)=N_J\ltimes W_J$, 
where $N_J$ generated by the R- and M-elements,
is the group of all elements of $\widetilde{W}(E_{6}^{(1)})$
that act permutatively on the set $J$. Here we have 
$J=\{\al_0, \al_1, \al_3, \al_5\}\cong 4A_1$, and
$W_J=\lan s_0, s_1, s_3, s_5\ran\cong {W}(4A_1)$. 
Following the procedure of computing normalizers given in
\cite{H, BH}, we find that the
R-elements (that is, quasi-reflections) are given by,
\begin{equation}
    \label{A2srp}
    b_1 = s_{4534},\quad
    b_2 = s_{6036}, \quad
    b_0 = s_{2312},
\end{equation}
where $b_1,b_2,b_0$ act on the subset $J$ as permutations of order two,
written
as a permutation on the index set of $J$ we have,
\begin{equation}
    b_1=(35),
    \quad
    b_2=(03),
    \quad
    b_0=(13).
\end{equation}
The M-element is given by $\sigma$ 
(action on $\De^{(1)}$ in Equation \eqref{4ddae6}), 
that is its action on $J$ is,
\begin{equation}\label{4ddae6J}
    \sigma=(150).
\end{equation}
It can be verified using the defining relations of $\widetilde{W}(E_6^{(1)})$
that $b_i$'s are involutions that satisfy the defining relations
\eqref{funWA2} for an affine Weyl group of type $A_2^{(1)}$, while
$\sigma$ acts on the $b_i$'s by Equation \eqref{funWaA2}.
\begin{subequations}
    \begin{align}
        \label{funWA2}
        &b_j^2=1,\;\;(b_jb_{j+1})^3=1,\;\;(j\in \mb Z/3\mb Z),
        \\
        \label{funWaA2}
        &\sigma^3=1, \;\;\sigma b_j=b_{j+1}\sigma.
    \end{align}
\end{subequations}
That is, together the R- and M-elements generate $N_J$,
an extended affine Weyl group of type $A_2^{(1)}$,
\begin{equation}\label{nae6}
N_J=\lan b_1, b_2, b_0\ran\rtimes\lan \sigma\ran\cong \widetilde{W}(A_2^{(1)}).
\end{equation}   
\end{proof}
Consider the expression for $\varphi^{*}_{1}$ in Equation \eqref{eq:qrt0dec}, we see that $\varphi_{1}^{*} = 
\sigma s_{4534}s_{0} s_{1}=\sigma b_1s_{0} s_{1}$
is certainly an element of the subgroup
$N(W_J)$ of  $\widetilde{W}\left(E_6^{(1)}\right)$
given by \Cref{NWJe6}.

In what follows we further clarify the nature of $\varphi^{*}_{1}$, 
showing that there is a conjugation that takes
$\varphi^{*}_{1}$ to an element of
quasi-translation in $N(W_J)$.

\subsubsection{Sub-root system and translations in
$N_J\cong\widetilde{W}( A_{2}^{(1)} )$}
\label{Te6}

First we find a subspace $V_J^\perp\subset V^{(1)}$ on which $b_i\in N_J$
$(0\leq i \leq 2)$ can be realised as simple reflections of $A_2^{(1)}$
type. We then construct an element of translation in $\widetilde{W}( A_{2}^{(1)}
)$ and find its relation to
the $\varphi^{*}_{1}$ given in \Cref{eq:qrt0dec}.

\begin{proposition}
    The root system for $N_J$ given in Equation \eqref{nae6} is generated by 
    $\be=\{\be_i\mid 0\leq i \leq 2\}$ with
    \begin{align}\label{bre6}
        \be_0&=\al_2+\frac{\al_{13}}{2},\quad
        \be_1=\al_4+\frac{\al_{35}}{2},\quad
        \be_2=\al_6+\frac{\al_{03}}{2},\quad
    \end{align} 
    is of $A_2^{(1)}$ type, and  $V_J^\perp=\Span\left(\be\right)$. 
    The group $N_J=\lan b_1, b_2, b_0\ran\rtimes\lan\sigma\ran$ can be realised as 
    $\widetilde{W}(A_2^{(1)})$ on $V_J^\perp$ with $b_i$ acting as the 
    reflection along the root $\be_i$, and $\sigma$ is the diagram 
    automorphism of  $\Ga(A_2^{(1)})$ for the $\be$-system, see Figure \ref{bA2e6}.
\end{proposition}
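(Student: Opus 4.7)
The plan is to verify three points in order: that $\{\be_0,\be_1,\be_2\}$ is a basis of $V_J^\perp$, that their Gram matrix is a scalar multiple of the affine Cartan matrix of $A_2^{(1)}$, and that the generators $b_i$ and $\sigma$ of $N_J$ act on $V_J^\perp$ respectively as the simple reflections $s_{\be_i}$ and as the cyclic diagram automorphism of $\Ga(A_2^{(1)})$. Combined with \Cref{NJWJe6}, these three points yield the desired realisation of $N_J$ as $\widetilde{W}(A_2^{(1)})$ on $V_J^\perp$.

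First I would compute $\be_i\cdot\al_j$ for every $\al_j\in J$ using the Cartan matrix $C(E_6^{(1)})$ recorded in the appendix. The formula \eqref{bre6} is of the form $\be_i=\al_k+\tfrac{1}{2}(\al_{k'}+\al_{k''})$, where $\al_k\in\{\al_2,\al_4,\al_6\}$ and $\al_{k'},\al_{k''}\in J$ are its two neighbours in $\Ga(E_6^{(1)})$; the factor $\tfrac{1}{2}$ is precisely what cancels the $-1$ contribution from each neighbour, so each $\be_i$ lies in $V_J^\perp$. Since $J$ is of type $4A_1$, the space $V_J$ is $4$-dimensional inside the $7$-dimensional $V^{(1)}$, so $V_J^\perp$ is $3$-dimensional; the $\be_i$ are manifestly linearly independent (their $\al_2$, $\al_4$, $\al_6$ components are distinct), hence they form a basis of $V_J^\perp$. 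A direct computation then yields $|\be_i|^2=1$ and $\be_i\cdot\be_j=-\tfrac{1}{2}$ for $i\neq j$, so the normalised Cartan matrix $\bigl(2\be_i\cdot\be_j/|\be_j|^2\bigr)_{i,j}$ is precisely the affine Cartan matrix of $A_2^{(1)}$. The action of $\sigma$ is then immediate: the diagram automorphism $\sigma=(150)(246)$ fixes $\al_3$ and cyclically permutes $\{\al_1,\al_5,\al_0\}$ and $\{\al_2,\al_4,\al_6\}$, so substituting into \eqref{bre6} gives $\sigma(\be_0)=\be_1$, $\sigma(\be_1)=\be_2$ and $\sigma(\be_2)=\be_0$, which is the expected diagram automorphism of $\Ga(A_2^{(1)})$.

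The substantive step is showing that each $b_i$ restricts to $V_J^\perp$ as the reflection $s_{\be_i}$. Because $b_i$ normalises $W_J$ and acts orthogonally on $V^{(1)}$, it preserves the decomposition $V^{(1)}=V_J\oplus V_J^\perp$; being an involution, its restriction to $V_J^\perp$ is an orthogonal involution on a $3$-dimensional space, hence a reflection, a rotation by $\pi$, or $\pm\Id$. To pin down the axis I would compute $b_i(\be_i)$ and $b_i(\be_j)$ for $j\neq i$ by expanding the four-letter word defining $b_i$ and applying \eqref{sij0} one letter at a time, then compare with the expected action of $s_{\be_i}$, namely $\be_i\mapsto -\be_i$ and $\be_j\mapsto\be_j+\be_i$ dictated by the Cartan entry $-1$. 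The computation is routine but the half-integer coefficients make it error-prone; once the identity is verified for one index, it extends to the other two using the relation $\sigma b_i\sigma^{-1}=b_{i+1}$ from \eqref{funWaA2} together with the fact that $\sigma$ acts on $V_J^\perp$ by permuting the $\be_i$'s. This conjugation-based bootstrap is what keeps the main obstacle, the direct verification of $b_i(\be_i)=-\be_i$, from having to be repeated three times.
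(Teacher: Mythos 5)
Your proposal is correct and follows essentially the same route as the paper's proof: check that the $\be_i$ are orthogonal to $V_J$ and span the $3$-dimensional complement, compute their Gram matrix to identify the type $A_2^{(1)}$, and verify the actions of the $b_i$ and of $\sigma$ on the $\be$-basis; your only real departure is deducing the actions of the remaining two $b_i$'s from one of them via the conjugation $\sigma b_j\sigma^{-1}=b_{j+1}$, where the paper simply records all three actions \eqref{bacte6} directly. One point worth flagging: your values $|\be_i|^2=1$ and $\be_i\cdot\be_j=-\tfrac{1}{2}$ are the ones that actually follow from \eqref{bre6} together with \eqref{alaij0} and \eqref{CarE6a} (e.g. $|\be_0|^2=2-2+1=1$), whereas the paper's proof asserts $|\be_i|^2=2$ and identifies $(\be_i\cdot\be_j)$ itself with the Cartan matrix; the two accounts differ by an overall factor of $2$ (so in fact $\oc{\be_j}=2\be_j$ rather than $\be_j$), but since the normalised matrix $(\be_i\cdot\oc{\be_j})$ is the same either way, the identification of the type as $A_2^{(1)}$ and everything downstream are unaffected.
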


\begin{proof}
Let $\be=\{\be_i\mid 0\leq i \leq 2\}$ with the $\be_i$'s
be as given in Equation \eqref{bre6}.
The bilinear form on $V^{(1)}$ in the $\De^{(1)}$ basis given by Equations \eqref{alaij0} and \eqref{CarE6a} is used to check that the $\be$-system
is or dimension three and orthogonal to $V_J$. That is, $V_J^\perp=\Span\left(\be\right)$.
Moreover, we found that $|\be_i|^2=2$,
so $\oc\be_j=\be_j$ for $(0\leq i \leq 2)$.

Computing the values of $(\be_i\cdot\oc\be_j)$ for the $\be$-system \eqref{bre6} again using the bilinear form on $V^{(1)}$ 
we have,
\begin{equation}\label{CarA2a}
 C(A_2^{(1)})=(\be_i\cdot\oc\be_j)_{1\leq i,j\leq 2, 0}=(\be_i\cdot\be_j)_{1\leq i,j\leq 2, 0}=\left(
\begin{array}{ccc}
 2 & -1 & -1 \\
 -1 & 2 & -1 \\
 -1 & -1 & 2 \\
\end{array}
\right).
\end{equation}
The matrix in Equation \eqref{CarA2a}
is the Cartan matrix of $A_2^{(1)}$ type.
That is, $\be$-system is of $A_2^{(1)}$ type, its
Dynkin diagram given in Figure \ref{bA2e6}.
The null root of this $A_2^{(1)}$ system is given by
\begin{equation}\label{da26}
    \de_{A_2}=\be_0+\be_1+\be_2=\frac{1}{2}\left(\al_0++\alpha_{1}+\alpha_{5}
    +2 \left( \alpha_{2}+\alpha_{4}+\alpha_{6} \right)
    +3\alpha_{3}\right)=\frac{\de}{2},
\end{equation}
where $\de$ (Equation \eqref{eq:E61nullroot}) is the null root of the $E_6^{(1)}$ root system.  Next, we
look at the actions of $b_i$ $(0\leq i \leq 2)$ on 
the $V_J\bigoplus
V_J^\perp$ basis of
$V^{(1)}$, that is $J\cup \be=\{\al_0, \al_1, \al_3, \al_5, \be_1,
\be_2, \be_0,\}$. These are computed using 
Equation \eqref{A2sr} by
composing the actions of 
$s_j\in W(E_6^{(1)})$ on $V^{(1)}$ given by Equations
\eqref{sij0} and \eqref{CarE6a}. We have,
\begin{subequations}
    \begin{align}
        b_1\colon & (\al_0, \al_1, \al_3, \al_5, \be_1, \be_2, \be_0)\mapsto
        (\al_0, \al_1, \al_5, \al_3, -\be_1, \be_2+\be_1, \be_0+\be_1),
        \\
        b_2\colon &(\al_0, \al_1, \al_3, \al_5, \be_1, \be_2, \be_0)\mapsto
        (\al_0, \al_1, \al_3, \al_5, \be_1+\be_2, -\be_2, \be_0+\be_2),
        \\
        b_0\colon&(\al_0, \al_1, \al_3, \al_5, \be_1+\be_2, \be_2, \be_0)\mapsto
        (\al_0, \al_1, \al_3, \al_5, \be_1+\be_0, \be_2+\be_0, -\be_0).
    \end{align}
    \label{bacte6}
\end{subequations}
From \Cref{bacte6}, we see that $b_i$ $(0\leq i \leq 2)$ acts
on the $\be$-system exactly as the reflection along the root $\be_i$ of
$A_2^{(1)}$ type, while acting permutatively on $J$.  The action of $\sigma$
on the $\be$-system is computed using 
Equations \eqref{bre6} and \eqref{4ddae6}, and we have
\begin{equation}
    \sigma \colon (\be_0,\be_1, \be_2)\mapsto (\be_1,\be_2, \be_0),
    \label{rhor}
\end{equation}
that is, $\sigma$ is an order three diagram automorphism of 
the type $A_2^{(1)}$ $\be$-system.
Hence we see that 
$N_J=\lan b_1, b_2, b_0\ran\rtimes\lan\sigma\ran$
can be realised as 
$\widetilde{W}(A_2^{(1)})$ on the subspace $V_J^\perp=\Span\left(\be\right)$.
\end{proof}

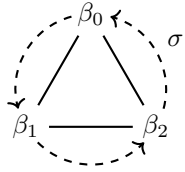
\begin{figure}[hbt]
    \begin{tikzpicture}[thick]
        \node (a1) at (0,1){$\beta_{0}$};
        \node (a2) at ($(-{cos(30)},-{sin(30)})$) {$\beta_{1}$};
        \node (a3) at ($({cos(30)},-{sin(30)})$) {$\beta_{2}$};
        \draw[-] (a1) -- (a2) -- (a3) -- (a1);
        \centerarc[->,dashed](0,0)(-15:75:1);
        \centerarc[->,dashed](0,0)(105:195:1);
        \centerarc[->,dashed](0,0)(225:315:1);
        \node (pi) at ({1.3*cos(30)},{1.3*sin(30)}) {$\sigma$};
    \end{tikzpicture}
    \caption{$\Ga(A_2^{(1)})$ with diagram 
    automorphism $\sigma$. }
    \label{bA2e6}
\end{figure}

Recall that by Equation \eqref{PtoUd},
for an extended affine Weyl group  we have the following
decomposition $\widetilde{W}^{(1)}={W}\ltimes P$. That is,
\begin{equation}\label{ewa2}
N_J=\lan b_1, b_2, b_0\ran\rtimes\lan\sigma\ran
\cong\widetilde{W}\left(A_2^{(1)}\right)
={W}\left(A_2\right)\ltimes P_{A_2}
=\lan b_1, b_2\ran\ltimes \lan U_1,
U_2\ran,    
\end{equation}
where $U_i$ is the translation by the fundamental weight $H_i$ $(i=1,2)$
of this $A_2^{(1)}$ type $\be$-system. In particular we have \cite{Shi:22}:
\begin{equation}
    U_1=\sigma b_2b_1,\quad U_2=\sigma^{-1} b_1b_2.
\end{equation}

$H_i$ $(i=1,2)$ are related to the
finite simple roots of the $\be$-system
by Equations \eqref{pah} and \eqref{hpa},
\beq\label{pahbA2}
\bp
\be_1\\
\be_2
\ep=C(A_2)^T\bp
H_1\\
H_2
\ep=\bp
2&-1\\
-1&2
\ep\bp
H_1\\
H_2
\ep,
\eeq
and
\beq\label{hpabA2}
\bp
H_1\\
H_2
\ep=
\left(C(A_2)^T\right)^{-1}
\bp
\be_1\\
\be_2
\ep=\frac{1}{3}\left(
\begin{array}{cc}
 2 & 1 \\
 1 & 2 \\
\end{array}
\right)\bp
\be_1\\
\be_2
\ep.
\eeq
In particular we have,
\begin{equation}
    \begin{aligned}
        H_2&=\frac{1}{3}(\be_1+2\be_2)
        =\frac{1}{3}\left(\al_4+\frac{\al_{35}}{2}+ 2\left(\al_6+\frac{\al_{03}}{2}\right)\right)
        =\frac{1}{2}\left(h_6-h_2\right),
    \end{aligned}
    \label{H1A2e6}
\end{equation}
where we have used Equation \eqref{pahE6} for the last expression in terms of fundamental
weights $h_i\,(1\leq i \leq 6)$ of $E_6^{(1)}$. 
So we have,
\begin{equation}\label{2H1A2e6}
    2H_2=h_6-h_2.
\end{equation}
Moreover, $|2H_2|^2=|h_6-h_2|^2=\frac{4}{3}$.
Elements $U_2$ and $\left(U_2\right)^{2}$ act on the simple system of $E_6^{(1)}$ by
\begin{equation}
    U_2
    \colon
    \left( 
            \alpha_{0},\alpha_{1},\alpha_{2},
            \alpha_{3},\alpha_{4},\alpha_{5}, \al_6
    \right)
    \mapsto
    \left( 
            \alpha_1, \alpha_0, \alpha_{1223346}, 
            \alpha_5, \alpha_4,
            \al_3, -\alpha_{12345}  
    \right),
    \label{eq:U2}
\end{equation}
and
\begin{equation}
   \left(U_2\right)^{2}
    \colon
    \left( 
        \alpha_{0},\alpha_{1},\alpha_{2},\alpha_{3},
        \alpha_{4},\alpha_{5},\alpha_{6}
    \right)
    \mapsto
    \left( 
        \alpha_{0},\alpha_{1},\alpha_{2}+\de,\alpha_{3},
        \alpha_{4},\alpha_{5},\alpha_{6}-\de
    \right).
    \label{eq:U22}
\end{equation}
By Equation \eqref{uja} we recognise that $U_2^2\in P$ is a translation
by $h_6-h_2$ in $\widetilde{W}(E_6^{(1)})$.
The nature of element $U_2$ becomes more evident in the 
$J\cup \be$ basis
of $V^{(1)}$,
\begin{equation}
    U_2
    \colon
    \left( 
        \begin{gathered}
            \alpha_{0},\alpha_{1},\alpha_{3},\alpha_{5},
            \\
            \be_1, \be_2, \be_0
        \end{gathered}
    \right)
    \mapsto
    \left( 
        \begin{gathered}
            \alpha_{1},\alpha_{0},\alpha_{5},\alpha_{3},
            \\
            \be_1, \be_2-\de_{A_2}, \be_0+\de_{A_2}
        \end{gathered}
\right)
    \label{eq:U2b}.
\end{equation}
Again by Equation \eqref{uja} we see that $U_2$ acts on the $\be$-system
as a translation by $H_2$. However, it also acts on the set $J$
as a permutation of order two $(01)(35)$, hence
can not be an element of translation in $\widetilde{W}(E_6^{(1)})$. 
On the other hand, we have
\begin{equation}
    \left(U_2\right)^{2}
    \colon
    \left( 
        \begin{gathered}
            \alpha_{0},\alpha_{1},\alpha_{3},\alpha_{5},
            \\
            \be_1, \be_2, \be_0
        \end{gathered}
    \right)
    \mapsto
    \left( 
        \begin{gathered}
            \alpha_{0},\alpha_{1},\alpha_{3},\alpha_{5},
            \\
            \be_1, \be_2-2\de_{A_2}, \be_0+2\de_{A_2}
        \end{gathered}
    \right).
    \label{eq:U22b}
\end{equation}
That is, $U_2^2$ acts on the $\be$-system
as a translation by $2H_2=h_6-h_2$ but also fixing $J$.
Hence, $U_2^2$ is a translation by $2H_2=h_6-h_2$ in $\widetilde{W}\left(E_6^{(1)}\right)$ as we knew from \Cref{eq:U22}.

 Let us denote by $u_{ij}$ the element of translation by $h_i-h_j$ in 
    $\widetilde{W}\left(E_6^{(1)}\right)$.
Then we have $u_{ij}=u_iu_j^{-1}$ and
    $(u_{ij})^{-1}=u_{ji}$.  
By \Cref{eq:qrt0E614} we have
$\left(\varphi_{1}^{*}\right)^{4}=u_{46}$,
that is $\varphi_{1}^{*}$ is a quasi-translation of order four for $u_{46}$,
and by \Cref{eq:U22b} $U_2$ is a quasi-translation of order two
for $u_{62}$.
 
A quasi-translation of order four for $u_{62}$ can be constructed simply 
by a rewriting of $U_2$ using the defining relations~\eqref{funWaA2} of 
$\widetilde{W}\left(A_2^{(1)}\right)$,
\begin{equation}
    U_2=\sigma^{-1} b_1b_2
    =\sigma^{2} b_1b_2
    =\sigma b_2\sigma b_2
     =(\sigma b_2)^2.
    \label{U2A2e6}
\end{equation}
Let
\begin{equation}\label{rU2}
    \rho=\sigma b_2=\sigma s_{6036}, \quad\mbox{and we have}\quad \rho^2=U_2,
    \quad \rho^4=u_{62}.
\end{equation}
A conjugation that relates the two order four quasi-translations
$\varphi^{*}_{1}=\sigma b_1s_0s_1$ and $\rho=\sigma b_2$ is found by the 
following observations.

\begin{proposition}\label{u62u46P1}
   $u_{62}=\rho^4$ is related to 
   $u_{46}=\left(\varphi_{1}^{*}\right)^{4}$
   by a conjugation of $b_2\in {W}\left(A_2\right)$,

 \begin{equation}\label{u62u46}
    b_2u_{62}b_2^{-1}=u_{46}.
\end{equation}
\end{proposition}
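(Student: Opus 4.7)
The plan is to reduce the identity to a statement purely about the action of $b_{2}$ on the weight sublattice of $V^{(1)*}$, and then to verify that action by direct calculation. Since $u_{62}$ acts on $V^{(1)*}$ by translation by the vector $h_{6}-h_{2}\in X_{0}$, for any $g\in\widetilde{W}\left(E_{6}^{(1)}\right)$ and any $h\in V^{(1)*}$ one has
\begin{equation}
  g\,u_{62}\,g^{-1}(h) = g\bigl(g^{-1}(h)+h_{6}-h_{2}\bigr) = h + g(h_{6}-h_{2}).
\end{equation}
Hence $b_{2}u_{62}b_{2}^{-1}$ is the translation by $b_{2}(h_{6}-h_{2})$, and \Cref{u62u46} is equivalent to the single vector identity
\begin{equation}
  b_{2}\bigl(h_{6}-h_{2}\bigr) = h_{4}-h_{6}
  \label{eq:weighttarget}
\end{equation}
in $V^{(1)*}$. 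This is the only thing that needs to be checked.

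To verify \Cref{eq:weighttarget}, I would expand $b_{2}=s_{6}s_{0}s_{3}s_{6}$ and compute its action on $h_{6}-h_{2}$ by applying the four simple reflections in sequence. Each step uses \Cref{sihjn,s0hj}: a simple reflection $s_{i}$ fixes every fundamental weight except $h_{i}$, on which it acts by $h_{i}\mapsto h_{i}-\sum_{k}a_{ki}h_{k}$, with the coefficients read off from the Cartan matrix $C(E_{6}^{(1)})$ listed in \Cref{CarE6a} of \Cref{app:E6}. Since intermediate vectors will lie in $\Span_{\Z}(h_{1},\dots,h_{6},h_{\de})$ and only the components along $h_{0},h_{2},h_{3},h_{6}$ will be touched, the computation collapses to bookkeeping on four coordinates and terminates in four steps.

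As an independent cross-check, one can appeal to the $A_{2}^{(1)}$ sub-structure established in the preceding subsection: $b_{2}$ acts on $V_{J}^{\perp}\subset V^{(1)}$ as the simple reflection along $\beta_{2}$, and the dual vector $h_{6}-h_{2}=2H_{2}$ lies in the image of $V_{J}^{\perp}$ under the pairing duality, with $H_{2}$ a fundamental weight of the $\beta$-system \Cref{H1A2e6}. The standard simple-reflection formula on a fundamental weight gives $b_{2}(2H_{2})=2H_{2}-2\beta_{2}$, and one then checks $2H_{2}-2\beta_{2}=h_{4}-h_{6}$ using the explicit expression for $\beta_{2}$ in \Cref{bre6} together with \Cref{pahE6,hpaE6} relating the $\alpha_{i}$ and $h_{j}$ bases.

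The main obstacle is purely clerical: one must handle the change of basis between simple roots (in $V^{(1)}$) and fundamental weights (in $V^{(1)*)}$) without confusing the two realisations of $b_{2}$, one as the word $s_{6036}$ in the generators of $\widetilde{W}(E_{6}^{(1)})$, the other as the reflection $s_{\beta_{2}}$ inside the sub-Weyl group $N_{J}\cong\widetilde{W}(A_{2}^{(1)})$. The two approaches above are mutually checking, and either reduces the proposition to \Cref{eq:weighttarget}, whose verification is a short computation with Cartan integers.
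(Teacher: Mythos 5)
Your proposal is correct and follows essentially the same route as the paper: the paper also reduces the identity to the weight computation via \Cref{ujnj} (your general conjugation-of-translations formula) and then verifies $b_2(h_6-h_2)=h_4-h_6$ by exactly your ``cross-check'' route, namely computing $b_2(H_2)=H_2-(-H_1+2H_2)=H_1-H_2=\tfrac12(h_4-h_6)$ inside the $A_2^{(1)}$ $\beta$-system using \Cref{sihjn} and \Cref{CarA2a}. The only blemish is the stray reference to a component along ``$h_0$'' (no such fundamental weight exists in the paper's setup; $s_0$ acts via \Cref{s0hj}), but this does not affect the argument.
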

\begin{proof}
    First, we knew that vectors $h_6-h_2$ and $h_4-h_6$ are of the same length (that is, squared length of $\frac{4}{3}$). In fact, they 
    belong to the 
 same orbit
under the actions of ${W}\left(A_2\right)$,
\begin{equation}
        \begin{aligned}
         b_2(H_2) &= b_2\left(\frac{h_6-h_2}{2}\right) =H_2-(-H_1+2H_2) =
         H_1-H_2
         \\
         &=\frac{1}{3}\left(\be_1-\be_2\right)
        =\frac{1}{2}(h_4-h_6),
        \end{aligned}
        \label{b2H2}
    \end{equation}
    where we have used \Cref{sihjn} and 
    the Cartan matrix \eqref{CarA2a} to compute the action
    of $b_2$ on $H_2$.
    Then by Equation \eqref{ujnj} we have 
    Equation \eqref{u62u46}.
\end{proof}
\begin{proposition}\label{u62u46P2}
    Any conjugation 
    of $b_2\rho b_2^{-1}=\sigma b_1$ by an element of $W_J=\lan s_0, s_1, s_3, s_5\ran$ 
    is a quasi-translation of order four for $u_{46}$. In particular
    we found,
    \begin{equation}\label{conqte6}
  s_1b_2\rho (s_1b_2)^{-1}=\varphi^{*}_{1}.
    \end{equation}
    That is, $\varphi^{*}_{1}$ (given in \Cref{eq:qrt0dec}) is related to 
    the order four quasi-translation $\rho$
    via a conjugation of $s_1b_2$.
\end{proposition}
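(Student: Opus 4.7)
The plan is to prove the two assertions of the proposition separately, using the preparatory material on normalizers and translations together with basic commutation relations in $\widetilde{W}(E_{6}^{(1)})$.

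For the first assertion, I would argue that the fourth power is preserved under arbitrary $W_J$-conjugation. Since $(w\, b_{2}\rho b_{2}^{-1}\, w^{-1})^{4} = w\,(b_{2}\rho^{4} b_{2}^{-1})\,w^{-1}$ and $\rho^{4}=u_{62}$ by \eqref{rU2}, \Cref{u62u46P1} gives $(w\, b_{2}\rho b_{2}^{-1}\, w^{-1})^{4} = w\,u_{46}\,w^{-1}$. It therefore suffices to check that $W_{J}$ centralises $u_{46}$. But by \Cref{sihjn}, each simple reflection $s_{i}\in W_{J}$ with $i\in\{0,1,3,5\}$ fixes every fundamental weight $h_{j}$ with $j\neq i$; in particular $s_{i}(h_{4})=h_{4}$ and $s_{i}(h_{6})=h_{6}$, so $s_{i}(h_{4}-h_{6})=h_{4}-h_{6}$. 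Then \Cref{ujnj} gives $s_{i}\,u_{46}\,s_{i}^{-1}=u_{46}$ and, extending multiplicatively, $w\,u_{46}\,w^{-1}=u_{46}$ for every $w\in W_{J}$. This yields the first assertion.

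For the specific identity \eqref{conqte6}, I would first simplify $b_{2}\rho b_{2}^{-1}$ using the semidirect-product relation \eqref{funWaA2}: $\rho=\sigma b_{2}$, so $b_{2}\rho b_{2}^{-1}=b_{2}\sigma=\sigma b_{1}$ (since $\sigma b_{1}\sigma^{-1}=b_{2}$). Hence one must show $s_{1}\sigma b_{1}s_{1}^{-1}=\sigma b_{1}s_{0}s_{1}=\varphi^{*}_{1}$, where the last equality is \eqref{eq:qrt0dec}. Reading the adjacencies of $\Gamma(E_{6}^{(1)})$ (\Cref{fig:e6}), node $1$ is joined only to node $2$; consequently $s_{1}$ commutes with each of $s_{3},s_{4},s_{5}$ and therefore with $b_{1}=s_{4534}$. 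Likewise, node $0$ is joined only to node $6$, so $s_{0}$ commutes with $b_{1}$. Thus $s_{1}\sigma b_{1}s_{1}=s_{1}\sigma s_{1}b_{1}$, and applying the diagram automorphism relation $\sigma s_{i}\sigma^{-1}=s_{\sigma(i)}$ with $\sigma(0)=1$, equivalently $s_{1}\sigma=\sigma s_{0}$, gives $s_{1}\sigma s_{1}b_{1}=\sigma s_{0}s_{1}b_{1}=\sigma b_{1}s_{0}s_{1}$, which is precisely $\varphi^{*}_{1}$.

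There is no real obstacle in the proof, only bookkeeping: one must be careful that the relations used hold in the full extended affine Weyl group $\widetilde{W}(E_{6}^{(1)})$ (not just in the normaliser $N(W_{J})$), and that the commutations $[s_{1},b_{1}]=1$ and $[s_{0},b_{1}]=1$ are justified by inspection of $\Gamma(E_{6}^{(1)})$. Conceptually, the result is a consequence of two structural facts already in hand: first, that the $A_{2}^{(1)}$-translation lattice inside $N_{J}$ sits inside $V_{J}^{\perp}\subset V^{(1)}$, so $W_{J}$ acts trivially on the relevant weight $h_{4}-h_{6}$; and second, that $h_{4}-h_{6}$ and $h_{6}-h_{2}$ lie in the same $W(A_{2})$-orbit on $V_{J}^{\perp}$, which was already exploited in \Cref{u62u46P1}. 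The role of the prefactor $s_{1}$ in the conjugator $s_{1}b_{2}$ is then exactly to produce the two extra simple reflections $s_{0}s_{1}$ that distinguish the reduced expression \eqref{eq:qrt0dec} of $\varphi^{*}_{1}$ from that of $\sigma b_{1}$.
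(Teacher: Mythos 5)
Your verification of the explicit identity \eqref{conqte6} is correct and follows essentially the same computation as the paper: reduce $b_2\rho b_2^{-1}$ to $\sigma b_1$ via $\sigma b_1\sigma^{-1}=b_2$, then push $s_1$ through $b_1=s_{4534}$ and through $\sigma$ using $s_1\sigma=\sigma s_0$ together with the commutations $[s_0,b_1]=[s_1,b_1]=1$ read off $\Gamma(E_6^{(1)})$. You are in fact more explicit than the paper here, which attributes the step to the relations \eqref{funWA2}, whereas what is really used are commutation relations in $\widetilde{W}(E_6^{(1)})$ and the diagram-automorphism relation \eqref{ddae6s}.

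For the first assertion your route genuinely differs from the paper's. The paper argues on the decomposition $V^{(1)}=V_J\oplus V_J^\perp$: conjugation by $W_J$ does not alter the action of $\sigma b_1$ on the $\be$-system, and only introduces sign changes on $J$ that wash out after four iterations. You instead compute $(w\,\sigma b_1\,w^{-1})^4=w\,u_{46}\,w^{-1}$ and reduce the claim to showing that $W_J$ centralises $u_{46}$; this is a legitimate and arguably cleaner reduction. However, your justification of the centralising step contains a concrete error: you assert that every $s_i$ with $i\in\{0,1,3,5\}$ fixes $h_4$ and $h_6$, citing \Cref{sihjn}. That formula only covers $1\leq i\leq n$; for $i=0$ one must use \Cref{s0hj}, which for $E_6^{(1)}$ gives $s_0(h_j)=h_j-c_jh_6$ (since $\sum_k a_{k0}h_k=-h_6$), hence $s_0(h_6)=-h_6$ and $s_0(h_4)=h_4-2h_6$: neither weight is fixed. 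The conclusion $s_0(h_4-h_6)=h_4-h_6$ does survive, but only because $c_4=c_6=2$ in \eqref{deE6}, and this cancellation must be made explicit rather than inferred from a false premise. A second, smaller point: \Cref{ujnj} is stated for $s\in W$, the finite Weyl group, so its application to $s_0$ also requires a word of justification (e.g.\ by passing to the linear part of $s_0$ on $X_0$, which is exactly what \Cref{s0hj} records). With these two repairs your argument for the first assertion is complete and provides a valid alternative to the paper's orthogonality argument.
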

\begin{proof} By \Cref{u62u46P1} and \Cref{rU2},
\begin{equation}
    u_{46}=b_2u_{62}b_2^{-1}
    =b_2\rho^4 b_2^{-1}
    =\left(b_2\rho b_2^{-1} \right)^4,
\end{equation}
    where we have,
\begin{equation}
   b_2\rho b_2^{-1}= b_2\sigma b_2 b_2^{-1}
    =b_2\sigma
   = \sigma b_1.
\end{equation}
That is, 
\begin{equation}\label{rhou46}
    \left(b_2\rho b_2^{-1} \right)^4=(\sigma b_1)^4=u_{46}.
\end{equation}

Recall from \Cref{NJWJe6} that
\begin{equation}
N(W_J)=N_J\ltimes W_J=\left(\lan b_1, b_2, b_0\ran\rtimes\lan \sigma\ran\right)\ltimes
\lan s_0, s_1, s_3, s_5\ran
\cong \widetilde{W}\left(A_2^{(1)}\right)\ltimes {W}(4A_1).
\end{equation}  
 The root system of $N_J$ $\be$, is orthogonal to
 the root system of $W_J$ $J$.
This means that
any conjugation of $r\in N_J$ by $s\in W_J$, $srs^{-1}$
does not change $r$'s action on the $\be$-system. It only results 
an even number of sign changes (an action of order at most four) in the simple roots of $J$
observed earlier in \Cref{Ne6},
this difference in action is not seen after four iterations.
Equation \eqref{conqte6} is a consequence of this observation and
can be verified by direct computation:
   \begin{equation}\label{conqte6p}
  s_1b_2\rho (s_1b_2)^{-1}=s_1b_2\rho b_2^{-1}s_1^{-1}=s_1\sigma b_1s_1^{-1}=s_1\sigma b_1s_1=\sigma b_1s_0s_1
            =\varphi^{*}_{1},
    \end{equation}
    where we have used the defining relations given in Equation \eqref{funWA2}.
\end{proof}

Hence we have shown that the map $\varphi^{*}_{1}$, as given by 
\Cref{eq:qrt0dec}, is an quasi-translation in a normalizer of $W(4A_1)$ in
$\widetilde{W}(E_6^{(1)})$ (given by \Cref{NWJe6}), and we 
completed the proof of \Cref{thm:mainA}.

\section{Singular fibre $\zeta_{2}(s)$ of type $A_{1}^{(1)}$ associated 
with the map $\varphi_{s}$}
\label{sec:B3}

In this section we compute the action of the map 
$\varphi_{s}$~\eqref{eq:phis} on the orthogonal complement of the 
singular fibre $[0:1]$ of type $A_{1}^{(1)}$ of pencil 
$p_{s}$~\eqref{eq:padd}. Then, we compute the normalizer of the
Weyl group $W(4A_{1})$ in the extended Weyl group 
$\widetilde{W}( E_{7}^{(1)})$. This will yield that the map 
$\varphi_{s}$~\eqref{eq:phis} on such a singular fibre has symmetry type
$B_{3}^{(1)}\subset E_{7}^{(1)}$, a non-simply laced Coxeter group, thus 
ending the proof of \Cref{thm:mainC}.

\subsection{Action on the simple roots of $E_{7}^{(1)}$ and decomposition 
in $\widetilde{W}(E_7^{(1)})$}

Following \Cref{tab:singtetr} we have that the singular fibre of the 
pencil~\eqref{eq:p2} at the point $\left[ \mu:\nu \right]=\left[ 0:1\right]$ 
is of type $A_1^{(1)}$ and it is generated by the following divisors in $\Pic(X_s)$:
\begin{equation}
    D_{0} = H_x+H_y-L_1-L_2-L_3-L_4,
    \quad
    D_{1} = H_x+H_y-L_5-L_6-L_7-L_8.
    \label{eq:phi2Di}
\end{equation}
These two divisors are permuted by the pullback map $\varphi_{s}^{*}$, that is 
$\varphi_{s}^{*}\left( D_{i} \right)=D_{(i+1)}$, $i\in \Z/2\Z$. The rational
elliptic surface $X_s$ with this choice of the singular fibre is shown
in \Cref{fig:picA1s}.
The orthogonal complement of $A_{1}^{(1)}$ in $\Pic\left( X_{s} \right)$
is generated by:
\begin{equation}
    \begin{gathered}
        \alpha_{0} =L_7-L_8,
        \quad
        \alpha_{1} =L_6-L_7,
        \quad
        \alpha_{2} = H_x-H_y,
        \quad
        \alpha_{3} = L_5-L_6,
        \\
        \alpha_{4} = H_y-L_1-L_5,
        \quad
        \alpha_{5} = L_1-L_2,
        \quad
        \alpha_{6} = L_2-L_3,
        \quad
        \alpha_{7} = L_3-L_4.
    \end{gathered}
    \label{eq:E71}
\end{equation}

The set $\De^{(1)}=\De\left(E_7^{(1)}\right)
=\{\al_j\mid 0\leq j\leq 7\}$ is a simple system of type
$E_7^{(1)}$, its Dynkin diagram given in \Cref{fig:e7B}.
The anti-canonical divisor corresponds to the following null-root
in the $E_{7}^{(1)}$ root system:
\begin{equation}
    -K_{X_{2}}=\delta = 
    \alpha_{071122663335554444}.
    \label{eq:E71nullroots}
\end{equation}
Let $V^{(1)}$ be a vector space
with basis 
$\De^{(1)}$,
equipped with a symmetric bilinear form given by Equations \eqref{alaij0}
and \eqref{CarE7a}. The corresponding extended affine Weyl group is,
\begin{equation}
    \widetilde{W}(E_{7}^{(1)})=\langle s_i\;|\;0\leq i \leq
    7\rangle\rtimes\langle \sigma\rangle
    =W(E_{7})\ltimes P=W(E_{7})\ltimes\lan u_j\, \mid 1\leq j\leq 7\ran,
\end{equation}
where $u_j$ is the translation by the fundamental weight 
of the $E_7^{(1)}$ system, $h_j$ $(1\leq j\leq 7)$.
In particular, the action of 
$s_j\in W(E_7^{(1)})$ on $\De^{(1)}$ is given by Equations
\eqref{sij0} and \eqref{CarE7a}.
See \Cref{app:E7} for
basic data and properties of $\widetilde{W}(E_{7}^{(1)})$.

\begin{figure}[ht!]
    \centering
    \begin{tikzpicture}[scale=2]
        \def\pt{2};
        \draw[thick,->] (0,-2)--(0,2);
        \draw[thick,->] (-2,0)--(2,0);
        \draw[domain=1/2:2, smooth, variable=\x, green,thick,dashed] plot (\x, 1/\x);
        \draw[domain=-2:-1/2, smooth, variable=\x, green,thick,dashed] plot (\x, 1/\x);
        \draw[domain=1/2:2, smooth, variable=\x, green,thick,dash dot] plot (\x, -1/\x);
        \draw[domain=-2:-1/2, smooth, variable=\x, green,thick,dash dot] plot (\x, -1/\x);
        \node[green,above,right] at (1/2,2) {$H_{x}+H_{y}-L_{5}-L_{6}-L_{7}-L_{8}$};
        \node[green,below,right] at (1/2,-2) {$H_{x}+H_{y}-L_{1}-L_{2}-L_{3}-L_{4}$};
        \node[red,right] at ($({sqrt(\pt)+1/6},{-1/sqrt(\pt)-1/3})$) (F1) {$L_{1}$};
        \draw[red] ($({sqrt(\pt)+1/6},{-1/sqrt(\pt)-1/3})$) --($(-{-sqrt(\pt)-1/6},{-1/sqrt(\pt)+1/3})$) ;
        \node[red,below] at ($({1/sqrt(\pt)+1/3},{-sqrt(\pt)-1/6})$) (F2) {$L_{2}$};
        \draw[red] ($({1/sqrt(\pt)+1/3},{-sqrt(\pt)-1/6})$) --($({1/sqrt(\pt)-1/3},{-sqrt(\pt)+1/6})$) ;
        \node[red,left] at ($({-sqrt(\pt)-1/6},{1/sqrt(\pt)+1/3})$) (F3) {$L_{3}$};
        \draw[red] ($({-sqrt(\pt)-1/6},{1/sqrt(\pt)+1/3})$) --($(-{sqrt(\pt)+1/6},{1/sqrt(\pt)-1/3})$) ;
        \node[red,above] at ($({-1/sqrt(\pt)-1/3},{sqrt(\pt)+1/6})$) (F6) {$L_{4}$};
        \draw[red] ($({-1/sqrt(\pt)-1/3},{sqrt(\pt)+1/6})$) --($({-1/sqrt(\pt)+1/3},{sqrt(\pt)-1/6})$) ;
        \node[red,left] at ($({-sqrt(\pt)-1/6},{-1/sqrt(\pt)-1/3})$) (F5) {$L_{5}$};
        \draw[red] ($({-sqrt(\pt)-1/6},{-1/sqrt(\pt)-1/3})$) --($(-{sqrt(\pt)+1/6},{-1/sqrt(\pt)+1/3})$) ;
        \node[red,below] at ($({-1/sqrt(\pt)-1/3},{-sqrt(\pt)-1/6})$) (F6) {$L_{6}$};
        \draw[red] ($({-1/sqrt(\pt)-1/3},{-sqrt(\pt)-1/6})$) --($({-1/sqrt(\pt)+1/3},{-sqrt(\pt)+1/6})$) ;
        \node[red,right] at ($({sqrt(\pt)+1/6},{1/sqrt(\pt)+1/3})$) (E1) {$L_{7}$};
        \draw[red] ($({sqrt(\pt)-1/6},{1/sqrt(\pt)-1/3})$)--($({sqrt(\pt)+1/6},{1/sqrt(\pt)+1/3})$) ;
        \node[red,above] at ($({1/sqrt(\pt)+1/3},{sqrt(\pt)+1/6})$) (E4) {$L_{8}$};
        \draw[red] ($({1/sqrt(\pt)-1/3},{sqrt(\pt)-1/6})$) --($({1/sqrt(\pt)+1/3},{sqrt(\pt)+1/6})$) ;
    \end{tikzpicture}
    \caption{The space of initial values of \eqref{eq:phis} interpreted
        as elliptic surface of type $A_{1}^{(1)}$.}
    \label{fig:picA1s}
\end{figure}

\def\centerarc[#1](#2)(#3:#4:#5);%
    {
    \draw[#1]([shift=(#3:#5)]#2) arc (#3:#4:#5);
    }
\begin{figure}[ht!]
    \centering
    \begin{tikzpicture}[thick]
        \node (a1) at (0,0){$\alpha_{0}$};
        \node (a2) at (1,0) {$\alpha_{1}$};
        \node (a3) at (2,0) {$\alpha_{3}$};
        \node (a4) at (3,0) {$\alpha_{4}$};
        \node (a5) at (4,0) {$\alpha_{5}$};
        \node (a6) at (5,0) {$\alpha_{6}$};
        \node (a7) at (6,0) {$\alpha_{7}$};
        \node (a0) at (3,1) {$\alpha_{2}$};
        \draw[-] (a1) -- (a2) -- (a3)-- (a4)-- (a5) --(a6) --(a7);
        \draw[-] (a0) -- (a4);
        \centerarc[<->,dashed](3,0)(-5:-175:1);
        \centerarc[<->,dashed](3,0)(-4:-176:2);
        \centerarc[<->,dashed](3,0)(-3:-177:3);
        \node (pi) at (3,-3.3) {$\sigma$};
    \end{tikzpicture}
    \caption{The Dynkin diagram of the Weyl group $\widetilde{W}( E_{7}^{(1)} )$ in
    Bourbaki's numbering.}
    \label{fig:e7B}
\end{figure}
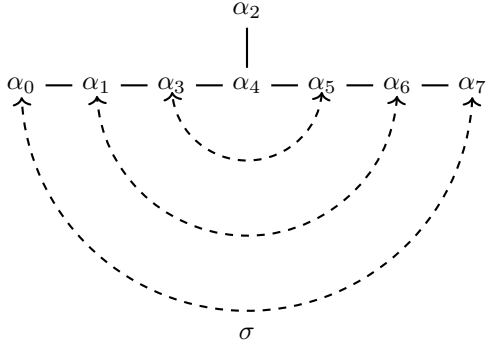
Using Equations \eqref{eq:phi3pb} and \eqref{eq:E71}
we have $\varphi^{*}_{s}$ acts on $\De^{(1)}$ by:
\begin{equation}
    \varphi^{*}_{s}
    \colon
    \left(
        \begin{gathered}
            \alpha_{0},\alpha_{1},\alpha_{2},\alpha_{3},
            \\
            \alpha_{4},\alpha_{5},\alpha_{6},\alpha_{7}
        \end{gathered}
    \right)
    \mapsto
    \left( 
        \begin{gathered}
            \alpha_3, -\alpha_{234567}, \alpha_{23445}, \alpha_7,
            \\
            \alpha_6, \alpha_5, \alpha_{1234}, \alpha_0
        \end{gathered}
\right).
    \label{eq:phisE71s}
\end{equation}
Compute three iterations of the map $\varphi^{*}_s$ we obtain:
\begin{equation}
    \left(\varphi_{s}^{*}\right)^3
    \colon
    \left(
        \begin{gathered}
            \alpha_{0},\alpha_{1},\alpha_{2},\alpha_{3},
            \\
            \alpha_{4},\alpha_{5},\alpha_{6},\alpha_{7}
        \end{gathered}
    \right)
    \mapsto
    \left(
        \begin{gathered}
            \alpha_{0},\alpha_{1}-\delta,\alpha_{2}+\delta,\alpha_{3},
            \\
            \alpha_{4},\alpha_{5},\alpha_{6},\alpha_{7}
        \end{gathered}
    \right).
    \label{eq:phis3E71s}
\end{equation}
By Equation \eqref{uja} we recognise that
$\left(\varphi_{s}^{*}\right)^{3}\in P$ is
the translation
by $h_1-h_2$,
and $\varphi_{s}^{*}$, a quasi-translation of order three in $\widetilde{W}\left(E_7^{(1)}\right)$.
Moreover, using \Cref{hijE7} one can check that 
$|h_1-h_2|^2 =\frac{3}{2}$, hence
it is in the same $W(E_7)$-orbit of $h_7$.
Finally, $\varphi_{s}^{*}$ has the following decomposition
in terms of the generators of $\widetilde{W}( E_{7}^{(1)} )$ using Equations \eqref{lfun} and \eqref{uaw}:
\begin{equation}
    \varphi^{*}_{s} =
    \sigma s_2 s_4 s_5 s_3 s_4 s_1 s_3 s_0 s_1.
    \label{eq:phisdec}
\end{equation}

\subsection{Normalizer of ${W}\bigl(4A_1\bigr)$ in $\widetilde{W}\left(E_7^{(1)}\right)$.}
\label{Ne7s}
The initial observation on $\varphi^{*}_{s}$ comes from its actions on $\De^{(1)}$
given in Equation \eqref{eq:phisE71s}, that is, $\varphi^{*}_{s}$ permutes the subset  
$\{\al_0, \al_3, \al_5, \al_7\}\subset \De^{(1)}$. This prompts us to take
\begin{equation}\label{Je7}
J=\{\al_0, \al_3, \al_5,\al_7\}\cong 4A_1, 
\end{equation}
and look at the normalizer of 
$W_J=\lan s_0, s_3, s_5, s_7\ran$ in $\widetilde{W}(E_7^{(1)})$.
\begin{proposition}\label{NJWJe7s}
  The normalizer of ${W}_J$ with $J=\{\al_0, \al_3, \al_5,\al_7\}$ in $\widetilde{W}( E_{7}^{(1)} )$ is given by
\begin{equation}\label{NWJe7}
\begin{aligned}
 N(W_J)=N_J\ltimes W_J&=\left(\lan b_1, b_2, b_3, b_0\ran\rtimes\lan \sigma\ran\right)\ltimes
\lan s_0, s_3, s_5, s_7\ran\\
&\cong \widetilde{W}\left(B_3^{(1)}\right)\ltimes {W}(4A_1),   
\end{aligned}
\end{equation}  
with
\begin{equation}\label{B3sr}
    b_1 = s_{6576},\quad
    b_2 = s_{4534}, \quad
    b_3 = s_{2}, \quad
    b_0 = s_{1301}.
\end{equation}
\end{proposition}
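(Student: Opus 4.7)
The plan is to follow closely the template of \Cref{NJWJe6} and apply the general normalizer construction of Brink--Howlett \cite{H,BH}. Since the four simple roots in $J=\{\al_0,\al_3,\al_5,\al_7\}$ are pairwise non-adjacent in \Cref{fig:e7B}, they are mutually orthogonal, so $W_J=\lan s_0,s_3,s_5,s_7\ran\cong W(4A_1)$, and the split decomposition $N(W_J)=N_J\ltimes W_J$ holds a priori. The real content of the proposition is therefore the identification $N_J\cong\widetilde W(B_3^{(1)})$.

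First I would verify directly, using \eqref{sij0} and \eqref{CarE7a}, that each of the four candidate R-elements in \Cref{B3sr} is an involution that stabilises $J$ setwise. Because $\al_2$ is adjacent only to $\al_4$ in \Cref{fig:e7B}, the element $b_3=s_2$ fixes $J$ pointwise, while a direct expansion of the four-letter words $b_1=s_{6576}$, $b_2=s_{4534}$, $b_0=s_{1301}$ produces the transpositions $b_1\leftrightarrow(57)$, $b_2\leftrightarrow(35)$, $b_0\leftrightarrow(03)$ on the indices of $J$. The M-element is the order-two diagram automorphism $\sigma$ of $\Ga(E_7^{(1)})$ indicated in \Cref{fig:e7B}; since $\sigma$ swaps $s_0\leftrightarrow s_7$, $s_1\leftrightarrow s_6$, $s_3\leftrightarrow s_5$ while fixing $s_2,s_4$, conjugation by $\sigma$ interchanges $b_0$ and $b_1$, fixes $b_3=s_2$, and fixes $b_2$ (once one uses that $s_3$ and $s_5$ commute in $E_7^{(1)}$).

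The crucial step is to establish the Coxeter presentation of type $B_3^{(1)}$ on $\lan b_0,b_1,b_2,b_3\ran$. Each $b_i$ is manifestly an involution, and the orders of the products $b_ib_j$ have to be read off by expanding the braid words and simplifying using the $E_7^{(1)}$ braid relations. The pattern one expects is $(b_0b_1)^2=(b_0b_3)^2=(b_1b_3)^2=1$ (disconnected nodes), $(b_0b_2)^3=(b_1b_2)^3=1$ (single bonds), and $(b_2b_3)^4=1$, the unique double bond of $\Ga(B_3^{(1)})$, which places $b_0,b_1$ as the two fork arms at $b_2$ and $b_3$ at the short end. The main obstacle, in contrast to the simply-laced $A_2^{(1)}$ setting of \Cref{NJWJe6}, is precisely this non-simply-laced relation: the factor $m_{23}=4$ cannot be guessed from the incidence of the simple reflections composing $b_2$ and $b_3$ and only emerges after enough $E_7^{(1)}$ braid moves have been applied to $(b_2b_3)^k$.

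Once these defining relations are checked, the group $\lan b_0,b_1,b_2,b_3\ran$ has the abstract presentation of $W(B_3^{(1)})$, $\sigma$ realises its unique nontrivial diagram automorphism (swapping the fork arms), and the semidirect product $\lan b_0,b_1,b_2,b_3\ran\rtimes\lan\sigma\ran$ produces $\widetilde W(B_3^{(1)})$. Combining this with $W_J\cong W(4A_1)$ and the split form $N(W_J)=N_J\ltimes W_J$ of \cite{H,BH} yields \Cref{NWJe7}, completing the proof.
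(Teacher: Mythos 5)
Your proposal is correct and follows essentially the same route as the paper's proof: invoke the Brink--Howlett decomposition $N(W_J)=N_J\ltimes W_J$, exhibit the R-elements $b_0,b_1,b_2,b_3$ and the M-element $\sigma$, and verify that they satisfy the Coxeter presentation of $\widetilde{W}(B_3^{(1)})$, including the non-simply-laced relation $(b_2b_3)^4=1$. The relations you list agree with the paper's \eqref{funWB3}--\eqref{funWaB3}, and your extra remarks (e.g.\ that $\sigma$ fixes $b_2$ because $s_3$ and $s_5$ commute) only make explicit what the paper leaves as a direct verification.
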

\begin{proof}
 Recall that $N(W_J)=N_J\ltimes W_J$, 
where $N_J$ generated by the R- and M-elements,
is the group of all elements of $\widetilde{W}(E_{7}^{(1)})$
that act permutatively on the subset $J\subset \De^{(1)}$. Here we have 
$J=\{\al_0, \al_3, \al_5,\al_7\}\cong 4A_1$, and
$W_J=\lan s_0, s_3, s_5, s_7\ran\cong {W}(4A_1)$. 
Following the procedure of computing normalizers given in
\cite{H, BH}, we find that the
R-elements are given by,
\begin{equation}
    \label{B3srp}
    b_1 = s_{6576},\quad
    b_2 = s_{4534}, \quad
    b_3 = s_{2}, \quad
    b_0 = s_{1301}.
\end{equation}
where $b_1, b_2, b_0$ act on the subset $J$ as permutations of order two,
written
as a permutation on the index set of $J$ we have,
\begin{equation}
    b_1=(57),
    \quad
    b_2=(35),
    \quad
    b_0=(03).
\end{equation}
The M-element is given by $\sigma$ 
(its actions on $\De^{(1)}$ are given in Equation \eqref{ddae7}), 
its action on $J$ is,
\begin{equation}\label{4ddae7J}
    \sigma=(35)(07).
\end{equation}
It can be verified using the defining relations of $\widetilde{W}(E_7^{(1)})$
that $b_i$'s are involutions that satisfy the defining relations
\eqref{funWB3} for an affine Weyl group of type $B_3^{(1)}$, while
$\sigma$ is the diagram 
    automorphism of  $\Ga(B_3^{(1)})$ (see Figure \ref{rsB3ae}), and it acts on the $b_i$'s by Equation \eqref{funWaB3},
\begin{subequations}
    \begin{align}
        \label{funWB3}
        &b_j^2=1, \;(j\in \{0,1,2,3\}),\quad(b_1b_{2})^3=(b_1b_{3})^2=(b_2b_{3})^4=1,\\\nonumber
        &(b_0b_{2})^3=(b_0b_{3})^2=(b_{0}b_1)^2=1\\\label{funWaB3}
         &\sigma^2=1, \;\;\sigma b_0=b_1\sigma.
    \end{align}
\end{subequations}
That is, together the R- and M-elements generate $N_J$,
an extended affine Weyl group of type $B_3^{(1)}$,
\begin{equation}\label{nae7}
    N_J=\lan b_1, b_2, b_3, b_0\ran\rtimes\lan \sigma\ran\cong \widetilde{W}(B_3^{(1)}).
\end{equation} 
This ends the proof.
\end{proof}

\begin{figure}[ht]
    \centering
    \begin{tikzpicture}[scale=1.5]
        \node (a0) at ({cos(120)},{sin(120)}){$\beta_{0}$};
        \node (a1) at ({cos(120)},{-sin(120)}){$\beta_{1}$};
        \node (a2) at (0,0){$\beta_{2}$};
        \node (a3) at (1,0) {$\beta_{3}$};
        \draw[double distance=1.5pt] (a2) -- (a3);
        \node (b) at ($(a2)!0.5!(a3)$) {$>$};
        \draw[-,thick] (a0) -- (a2);
        \draw[-,thick] (a2) -- (a1);
        \centerarc[<->,dashed,thick](0,0)(130:230:1);
        \node (pi) at (-1.2,0) {$\sigma$};
    \end{tikzpicture}
    \caption{Dynkin diagram of affine $B_3$ type with the diagram automorphism, $\widetilde\Ga(B_3^{(1)})$.}\label{rsB3ae}
\end{figure}
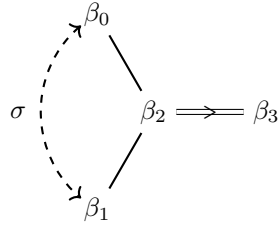
\subsubsection{Sub-root system and translations in
$N_J\cong\widetilde{W}( B_3^{(1)} )$}
\label{Te7}

First we find a subspace $V_J^\perp\subset V^{(1)}$ on which $b_i\in N_J$
$(0\leq i \leq 3)$ can be realised as simple reflections of $B_3^{(1)}$
type. We construct an element of translation in $\widetilde{W}( B_3^{(1)}
)$ and find its relation to the element
$\varphi^{*}_{s}$ given in \Cref{eq:phisdec}.

\begin{proposition}\label{NJWJe7rs}
    The root system for $N_J$ given in \Cref{nae7} generated by 
    $\be=\{\be_i\mid 0\leq i \leq 3\}$ with
    \begin{align}\label{bre7}
        \be_0&=\al_{0113},\quad
        \be_1=\al_{5667},\quad
        \be_2=\al_{3445},\quad
        \be_3=\al_{2},\quad
    \end{align} is of type $B_3^{(1)}$, 
    and we have $V_J^\perp=\Span\left(\be\right)$. 
    The group $N_J=\lan b_1, b_2, b_3, b_0\ran\rtimes\lan\sigma\ran$ can be realised as 
    $\widetilde{W}(B_3^{(1)})$ on $V_J^\perp$ with $b_i$ acting as the 
    reflection along the root $\be_i$, while $\sigma$ exchanges
    $\be_0$ and $\be_1$, which is the diagram 
    automorphism of  $\Ga(B_3^{(1)})$ for the $\be$-system (see Figure \ref{rsB3ae}).
\end{proposition}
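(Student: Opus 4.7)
The plan is to follow the strategy used in the proof of the analogous $E_6^{(1)}$ statement above, adapted to the non-simply-laced $B_3^{(1)}$ setting. First I would use the bilinear form on $V^{(1)}$ determined by $\De^{(1)}=\De(E_7^{(1)})$ via \eqref{alaij0} and \eqref{CarE7a} to verify directly that $\be_i\cdot\al_j=0$ for every $\be_i\in\be$ and every $\al_j\in J$, so that $\Span(\be)\subseteq V_J^\perp$. Since $\dim V^{(1)}=8$, the orthogonal $4A_1$ system $J$ spans a subspace of dimension $4$, and the four vectors in \eqref{bre7} are visibly linearly independent (each involves a different cluster of simple roots of $E_7^{(1)}$), a dimension count then yields $V_J^\perp=\Span(\be)$.

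Second, I would compute the Gram matrix $(\be_i\cdot\be_j)$ and the resulting generalised Cartan matrix $a_{ij}=2(\be_i\cdot\be_j)/|\be_j|^2$ directly from \eqref{bre7}. I expect the $\be$-system to contain two distinct root lengths: three long roots $\be_0,\be_1,\be_2$ of squared length $4$, each being a sum of four simple roots of $E_7^{(1)}$ arranged in a $D_4$-type pattern, and one short root $\be_3=\al_2$ of squared length $2$. The computation should yield precisely the Cartan matrix of $B_3^{(1)}$ in the orientation of \Cref{rsB3ae}: a simply-laced Y-shaped subdiagram on $\{\be_0,\be_1,\be_2\}$ together with a double bond from $\be_2$ (long) to $\be_3$ (short) satisfying $a_{32}=-2$ and $a_{23}=-1$.

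Third, I would verify the asserted action of $N_J$ on $V_J^\perp$. For each $b_i$ in \eqref{B3sr} I would expand $b_i(\be_j)$ as the composition of its constituent simple reflections acting on $V^{(1)}$ via \eqref{sij0}, and check that the result agrees with the reflection formula \eqref{sbv} along $\be_i$, namely $b_i(\be_j)=\be_j-a_{ij}\be_i$, with $a_{ij}$ being the entries of the matrix just identified. Since the defining relations \eqref{funWB3} of $W(B_3^{(1)})$ are already established in the proof of \Cref{NJWJe7s}, this reflection check identifies $\lan b_0,b_1,b_2,b_3\ran$ with $W(B_3^{(1)})$ acting on $V_J^\perp$. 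The action of the diagram automorphism follows by substituting \eqref{bre7} into the action of $\sigma$ on $\De^{(1)}$ from \eqref{ddae7}: one expects $\sigma(\be_0)=\be_1$, $\sigma(\be_1)=\be_0$, and $\sigma$ fixing $\be_2$ and $\be_3$, matching the unique nontrivial diagram automorphism of $\Ga(B_3^{(1)})$ displayed in \Cref{rsB3ae}.

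The main obstacle is the careful bookkeeping of the two distinct root lengths when reading off the generalised Cartan matrix: in the simply-laced $E_6^{(1)}$ analogue the factor $2/|\be_j|^2$ was globally $1$, whereas here it alternates between $1$ and $1/2$, and one must verify that the double bond arises precisely between $\be_2$ and $\be_3$ and nowhere else among the pairs $(\be_i,\be_2)$ for $i=0,1$ and $(\be_i,\be_3)$ for $i=0,1,2$. All other steps are direct analogues of the computations already performed in the $E_6^{(1)}$ case and reduce to linear algebra on $V^{(1)}$ together with the already-proven normalizer structure of \Cref{NJWJe7s}.
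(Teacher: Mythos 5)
Your proposal is correct and follows essentially the same route as the paper: orthogonality of $\be$ to $J$ plus a dimension count to get $V_J^\perp=\Span(\be)$, computation of the Gram/generalised Cartan matrix with the two root lengths $|\be_0|^2=|\be_1|^2=|\be_2|^2=4$ and $|\be_3|^2=2$, and the check of the $\sigma$-action via \eqref{ddae7}; the paper additionally records the null root $\de_{B_3}=\be_0+\be_1+2\be_2+2\be_3=\de$ and leaves the reflection check for the $b_i$ largely implicit (relying on the relations already verified in \Cref{NJWJe7s}), which you make explicit. The only point to watch is the transpose convention: with your definition $a_{ij}=2(\be_i\cdot\be_j)/|\be_j|^2$ the reflection along $\be_i$ sends $\be_j\mapsto\be_j-a_{ji}\be_i$, and the double bond has $a_{23}=-2$, $a_{32}=-1$ rather than the values you quote, though this does not affect the identification of the $\be$-system as $B_3^{(1)}$ with $\be_2$ long and $\be_3$ short.
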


\begin{proof}
    Let $\be=\{\be_i\mid 0\leq i \leq 3\}$ 
    be as given in Equation \eqref{bre7}.
    The bilinear form on $V^{(1)}$ in  $\De^{(1)}$ basis given by Equations \eqref{alaij0} and \eqref{CarE7a} is used to check that the $\be$-system
    is of dimension four and orthogonal to $V_J$. That is, $V_J^\perp=\Span\left(\be\right)$.
    Moreover, we found that $|\be_i|^2=4$,
    so $\oc\be_j=\frac{\be_j}{2}$ (for $0\leq i \leq 2$),
    and $|\be_3|^2=2$ so $\oc\be_3=\be_3$.
    Computing the values of $\be_i\cdot\oc\be_j$
    for the $\be$-system \eqref{bre7} using the bilinear form on $V^{(1)}$ we have,
    \begin{equation}\label{CarB3a}
     C(B_3^{(1)})=(\be_i\cdot\oc\be_j)_{1\leq i,j\leq 3, 0}=\left(
    \begin{array}{cccc}
     2 & -1 & 0 & -1 \\
     -1 & 2 & -2 & 0 \\
     0 & -1 & 2 & 0 \\
     -1 & 0 & 0 & 2 \\
    \end{array}
    \right).
    \end{equation}
    The matrix in Equation \eqref{CarB3a}
    is the Cartan matrix of type $B_3^{(1)}$.
    That is, $\be$-system is of $B_3^{(1)}$ type, its
    Dynkin diagram given in Figure \ref{rsB3ae}.
    The null root of this $B_3^{(1)}$ system given by
    \begin{equation}\label{db3}
        \de_{B_3}=\be_0+\be_1+2\be_2+2\be_3
        =\alpha_{071122663335554444}=\de,
    \end{equation}
    where $\de$ (Equation \eqref{eq:E71nullroots}) is the null root of the $E_7^{(1)}$ root system.
    The action of $\sigma$
    on the $\be$-system is computed using 
    Equations \eqref{bre7} and \eqref{ddae7}, and we have
    \begin{equation}
        \sigma \colon (\be_0,\be_1, \be_2,\be_3)\mapsto 
        (\be_1, \be_0, \be_2,\be_3),
        \label{rhor7}
    \end{equation}
    that is, $\sigma$ is an order two diagram automorphism of 
    this type $B_3^{(1)}$ $\be$-system.
    Hence we see that 
    $N_J=\lan b_1, b_2, b_3, b_0\ran\rtimes\lan\sigma\ran$
    can be realised as 
    $\widetilde{W}(B_3^{(1)})$ on the subspace $V_J^\perp=\Span\left(\be\right)$.
\end{proof}

Recall that by \Cref{PtoUd}, for an extended affine Weyl group  we have 
the following decomposition $\widetilde{W}^{(1)}={W}\ltimes P$. That is,
\begin{equation}\label{ewb3}
N_J=\lan b_1, b_2, b_3, b_0\ran\rtimes\lan\sigma\ran
\cong\widetilde{W}\left(B_3^{(1)}\right)
={W}\left(B_3\right)\ltimes P_{B_3}
=\lan b_1, b_2, b_3\ran\ltimes \lan U_1,
U_2, U_3\ran,    
\end{equation}
where $U_i$ is the translation by the fundamental weight $H_i$ $(i=1, 2, 3)$
of the type $B_3^{(1)}$ $\be$-system. In particular we have \cite{Shi:22}:
\begin{equation}\label{Ub3}
    U_1=\sigma b_{12321},
    \quad U_2=b_{02321232},
    \quad U_3=\sigma b_{123023123}.
\end{equation}
$H_i$ $(i=1, 2, 3)$ are related to the
finite simple roots of the $\be$-system
by Equations \eqref{pah} and \eqref{hpa},
\beq\label{pahbC3}
\bp
\pi(\oc\be_1)\\
\pi(\oc\be_2)\\
\pi(\oc\be_3)
\ep=C(B_3)^T\bp
H_1\\
H_2\\
H_3
\ep=\left(
\begin{array}{ccc}
 2 & -1 & 0 \\
 -1 & 2 & -1 \\
 0 & -2 & 2 \\
\end{array}
\right)\bp
H_1\\
H_2\\
H_3
\ep,
\eeq
and
\beq\label{hpabC3}
\bp
H_1\\
H_2\\
H_3
\ep=
\left(C(B_3)^T\right)^{-1}
\bp
\pi(\oc\be_1)\\
\pi(\oc\be_2)\\
\pi(\oc\be_3)
\ep=\bp
1&1&\frac{1}{2}\\
1&2&1\\
1&2&\frac{3}{2}
\ep\bp
\pi(\oc\be_1)\\
\pi(\oc\be_2)\\
\pi(\oc\be_3)
\ep.
\eeq
Recall that for non-simply-laced systems one 
can identify $\pi(\oc\be_i)$ with $\frac{2\be_i}{|\be_i|^2}$,
so we can express coroots of the $B_3^{(1)}$ system
in terms of the $E_7^{(1)}$ simple roots,
\begin{equation}
    \label{pbal}
    \begin{gathered}
        \pi(\oc\be_0)=\frac{\al_{03}}{2}+\al_1,\quad
        \pi(\oc \be_1)=\frac{\al_{57}}{2}+\al_6,\quad
        \pi(\oc \be_2)=\frac{\al_{35}}{2}+\al_4,\quad
        \pi(\oc  \be_3)=\al_2. 
    \end{gathered}
\end{equation}
In particular, we have
\begin{equation}
\begin{aligned}
 H_3&=\pi(\oc \be_1)+2 \pi(\oc \be_2)+\frac{3}{2}\pi(\oc  \be_3),\\
    &= \frac{3}{2}\al_2+\al_3+2\al_4+\frac{3}{2}\al_5+\al_6+\frac{1}{2}\al_7,\\
    &=-h_1+h_2.
\end{aligned}
\end{equation}
where we have used Equation \eqref{pahE7} for the last expression in terms of fundamental
weights $h_i\,(1\leq i \leq 7)$ of $E_7^{(1)}$.
That is, $U_3\in P$ defined in \Cref{Ub3}
is a translation by $-h_1+h_2$ in $\widetilde{W}\left(E_7^{(1)}\right)$.

\begin{proposition}
    The map $\varphi_{s}^{*}$ given in \Cref{eq:phisdec} is an 
    element of quasi-translation of order three in the  
    $N_J\cong\widetilde{W}(B_3^{(1)})$ subgroup of 
    $\widetilde{W}(E_7^{(1)})$ given in \Cref{NJWJe7s}. We have,
    \begin{equation}\label{phisB3}
        \varphi_{s}^{*}=\sigma s_2 s_{4534} s_{1301}=\sigma b_{320}, 
    \end{equation} 
    where $\left(\varphi_{s}^{*}\right)^{3}=U_3^{-1}$ with $U_3$ given 
    by \Cref{Ub3}.
    \label{prop:phisact}
\end{proposition}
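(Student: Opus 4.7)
The plan is to verify the two asserted equalities directly, leveraging the decomposition in \Cref{eq:phisdec} together with the translation data for $U_3$ computed just above the proposition statement.

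First I would confirm the identity $\varphi^*_s = \sigma b_3 b_2 b_0$ by substitution. Starting from \Cref{eq:phisdec} and inserting the expressions $b_3 = s_2$, $b_2 = s_{4534}$, $b_0 = s_{1301}$ from \Cref{B3sr} into $\sigma b_3 b_2 b_0$, one reads off the word $\sigma s_2 s_4 s_5 s_3 s_4 s_1 s_3 s_0 s_1$, which is exactly \Cref{eq:phisdec}. This both realises $\varphi^*_s$ as an element of $N(W_J) = N_J \ltimes W_J \cong \widetilde{W}(B_3^{(1)}) \ltimes W(4A_1)$ from \Cref{NJWJe7s} and, more sharply, places it in the complement $N_J$; the latter is already visible from \Cref{eq:phisE71s}, where $\varphi^*_s$ is seen to act as a permutation of the four roots in $J$.

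For the second identity $(\varphi^*_s)^3 = U_3^{-1}$, the strategy is to compare the two elements as weight-translations in $\widetilde{W}(E_7^{(1)})$. By \Cref{eq:phis3E71s}, the cube of $\varphi^*_s$ fixes every simple root of $E_7^{(1)}$ except for $\alpha_1 \mapsto \alpha_1 - \delta$ and $\alpha_2 \mapsto \alpha_2 + \delta$; by \Cref{uja} this identifies $(\varphi^*_s)^3$ as the translation by the weight $h_1 - h_2$. Meanwhile, the computation preceding the proposition identifies the fundamental weight $H_3$ of the $B_3^{(1)}$ $\be$-system as $-h_1 + h_2$, so $U_3$ is the translation by $h_2 - h_1$ and $U_3^{-1}$ translates by $h_1 - h_2$. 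Since an element of the translation lattice $P \subset \widetilde{W}(E_7^{(1)})$ is determined by its action on the simple roots, we conclude $(\varphi^*_s)^3 = U_3^{-1}$.

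The main point to be careful about---the place where a brute-force alternative would be most painful---is precisely this bridge between the two presentations of the translation lattice: $U_3$ is written in \Cref{Ub3} as a long word in the R-elements and $\sigma$, whereas $(\varphi^*_s)^3$ is naturally read off from the pullback action on $\Pic(X_s)$. Avoiding a word-level identification by instead comparing actions on the $E_7^{(1)}$ simple roots via \Cref{uja} and \Cref{hpabC3} is what keeps the proof short. Once the translation identity is in hand, the quasi-translation-of-order-three claim is immediate: $\varphi^*_s$ is not itself an element of $P$, since it permutes $J$ nontrivially and hence has a nontrivial Weyl-group component in $W(B_3) \subset \widetilde{W}(B_3^{(1)})$, yet its third power lies in $P$, which is by definition what it means for $\varphi^*_s$ to be a quasi-translation of order three in $N_J \cong \widetilde{W}(B_3^{(1)})$.
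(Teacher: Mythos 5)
Your proposal is correct and follows essentially the same route as the paper: the word $\sigma b_{320}$ is read off from \Cref{eq:phisdec,B3sr} by inspection, and $\left(\varphi_{s}^{*}\right)^{3}$ is identified with $U_3^{-1}$ by matching translation vectors ($h_1-h_2$ from \Cref{eq:phis3E71s,uja} against $H_3=-h_1+h_2$ for $U_3$). The only difference is that the paper additionally confirms the identity at the word level by rewriting $U_3^{-1}=b_{321320321}\sigma=(\sigma b_{320})^3$ using the defining relations \eqref{funWaB3}, and records the action on the $J\cup\be$ basis; you replace that check with the translation-vector comparison, which is legitimate since an element of $P$ is determined by the weight it translates by.
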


\begin{proof}
    The expression given in \Cref{phisB3} in terms of the generators
    of $N_J$ can be written down directly by inspection on 
    \Cref{eq:phisdec,B3sr}. To show that $\varphi_{s}^{*}$ is an element 
    of quasi-translation of $N_J$, recall that by \Cref{eq:phis3E71s}
    $\left(\varphi_{s}^{*}\right)^{3}\in P$ is the translation by 
    $h_1-h_2$. That is, $\varphi_{s}^{*}$ is a quasi-translation of order 
    three to the element $U_3^{-1}$. A quasi-translation of order three
    for $U_3^{-1}$ can be constructed simply by a rewriting
    of $U_3^{-1}$ using the defining relations \eqref{funWaB3} of 
    $\widetilde{W}\left(B_3^{(1)}\right)$,
    \begin{equation}\label{U2B3e7}
    \begin{aligned}
    U_3^{-1}&=b_{321320321}\sigma,\\
    &=\sigma b_{320}b_{321}b_{320},\\
    &=\sigma b_{320}b_{321}\sigma^2b_{320},\\
    &=\sigma b_{320}\sigma b_{320}\sigma b_{320}.
    \end{aligned}
    \end{equation}
    In fact, we have
    \begin{equation}\label{phisB3p}
    \varphi_{s}^{*}=\sigma s_2 s_{4534} s_{1301}=\sigma b_{320},\quad
    \left(\varphi_{s}^{*}\right)^{3}=U_3^{-1}.
    \end{equation}
    The action of $\varphi_{s}^{*}$ on the $J\cup \be=\{\al_0, \al_3, \al_5,\al_7,\be_1, \be_2,\be_3,\be_0\}$ basis is given by,
    \begin{equation}
        \varphi_{s}^{*}
        \colon
        \left(
            \begin{gathered}
                \al_0, \al_3, \al_5,\al_7,
                \\
                \be_1, \be_2,\be_3,\be_0
            \end{gathered}
        \right)
        \mapsto
        \left( 
            \begin{gathered}
                \al_3, \al_7, \al_5,\al_0, 
                \\
                \be_{2330}, \be_1,\be_{23},-\be_{1233}  
            \end{gathered}
    \right),
        \label{eq:phisu}
    \end{equation}
    moreover we have,
    \begin{equation}
        \left(\varphi_{s}^{*}\right)^3
        \colon
        \left(
            \begin{gathered}
                \al_0, \al_3, \al_5,\al_7,
                \\
                \be_1, \be_2,\be_3,\be_0
            \end{gathered}
        \right)
        \mapsto
        \left( 
            \begin{gathered}
                \al_0, \al_3, \al_5,\al_7, 
                \\
               \be_1, \be_2, \be_3+\de,\be_0-2\de
            \end{gathered}
        \right),
        \label{eq:phis3u}
    \end{equation}
 thus ending the proof.
\end{proof}

Hence, the proof of last statement of \Cref{thm:mainC} follows
from \Cref{prop:phisact}.

\section{Singular fibre of type $A_{1}^{(1)}$ associated with the map $\varphi_{2}$}
\label{sec:C3}

In this section we compute the action of the map 
$\varphi_{2}$~\eqref{eq:phi3} on the orthogonal complement of the 
singular fibre $[0:1]$ of type $A_{2}^{(1)}$ of pencil 
$p_{2}$~\eqref{eq:p1}. Then, we compute the normalizer of the
Weyl group $W(4A_{1})$ in the extended Weyl group 
$\widetilde{W}( E_{6}^{(1)})$. In this case, this will not be
directly as in the \Cref{sec:B3}, but we will use the results
of that section to show that the mentioned action is on a 
translated root lattice of type $B_{3}^{(1)}$. So, the final
outcome of this section is that the map $\varphi_{2}$~\eqref{eq:phi3} on 
the singular fibre $[0:1]$ has symmetry type 
$B_{3}^{(1)}\subset E_{7}^{(1)}$, thus ending the proof of 
\Cref{thm:mainB}.

\subsection{Action on the simple roots of $E_{7}^{(1)}$ and decomposition in $\widetilde{W}(E_7^{(1)})$}

Following \Cref{tab:singtetr} we have that the singular fiber of the 
pencil \eqref{eq:p2} at the point $\left[ \mu:\nu \right]=\left[ 0:1\right]$ 
is of type $A_1^{(1)}$ and it is generated by the following divisors in $\Pic(X_2)$:
\begin{equation}
    D_{0} = H_x+H_y-F_1-F_2-F_3-F_4,
    \quad
    D_{1} = H_x+H_y-F_5-F_6-F_7-F_8.
    \label{eq:phi3Di}
\end{equation}
These two divisors are permuted by the pullback map $\varphi_{2}^{*}$, that is 
$\varphi_{2}^{*}\left( D_{i} \right)=D_{(i+1)}$, $i\in \Z/2\Z$. The rational
elliptic surface $X_2$ with this choice of the singular fibre is shown
in \Cref{fig:picA12}.
The orthogonal complement of $A_{1}^{(1)}$ in $\Pic\left( X_{2} \right)$
is generated by the same formula as \Cref{eq:E71} with 
$F_{i}\longleftrightarrow L_{i}$.

The set $\De^{(1)}=\De(E_7^{(1)}) =\{\al_j\mid 0\leq j\leq 7\}$ is a 
simple system of type $E_7^{(1)}$, its Dynkin diagram given in 
\Cref{fig:e7B}. The anti-canonical divisor corresponds to the following 
null-root in the $E_{7}^{(1)}$ root system:
\begin{equation}
    -K_{X_{2}}=\delta = 
    \alpha_{071122663335554444}.
    \label{eq:E71nullroot2}
\end{equation}
Let $V^{(1)}$ be a vector space with basis $\De^{(1)}$, equipped with a
symmetric bilinear form given by \Cref{alaij0,CarE7a}. The corresponding 
extended affine Weyl group is,
\begin{equation}
    \widetilde{W}(E_{7}^{(1)})=\langle s_i\;|\;0\leq i \leq
    7\rangle\rtimes\langle \sigma\rangle
    =W(E_{7})\ltimes P=W(E_{7})\ltimes\lan u_j\, \mid 1\leq j\leq 7\ran,
\end{equation}
where $u_j$ is the translation by the fundamental weight 
of the $E_7^{(1)}$ system, $h_j$ $(1\leq j\leq 7)$.
In particular, the action of 
$s_j\in W(E_7^{(1)})$ on $\De^{(1)}$ is given by Equations
\eqref{sij0} and \eqref{CarE7a}.
See \Cref{app:E7} for
basic data and properties of $\widetilde{W}(E_{7}^{(1)})$.

\begin{figure}[ht!]
    \centering
    \begin{tikzpicture}[scale=2]
        \def\pt{2};
        \draw[thick,->] (0,-2)--(0,2);
        \draw[thick,->] (-2,0)--(2,0);
        \draw[domain=1/2:2, smooth, variable=\x, green,thick,dashed] plot (\x, 1/\x);
        \draw[domain=-2:-1/2, smooth, variable=\x, green,thick,dashed] plot (\x, 1/\x);
        \draw[domain=1/2:2, smooth, variable=\x, green,thick,dash dot] plot (\x, -1/\x);
        \draw[domain=-2:-1/2, smooth, variable=\x, green,thick,dash dot] plot (\x, -1/\x);
        \node[green,above,right] at (1/2,2) {$H_{x}+H_{y}-F_{5}-F_{6}-F_{7}-F_{8}$};
        \node[green,below,right] at (1/2,-2) {$H_{x}+H_{y}-F_{1}-F_{2}-F_{3}-F_{4}$};
        \node[red,right] at ($({sqrt(\pt)+1/6},{-1/sqrt(\pt)-1/3})$) (F1) {$F_{1}$};
        \draw[red] ($({sqrt(\pt)+1/6},{-1/sqrt(\pt)-1/3})$) --($(-{-sqrt(\pt)-1/6},{-1/sqrt(\pt)+1/3})$) ;
        \node[red,below] at ($({1/sqrt(\pt)+1/3},{-sqrt(\pt)-1/6})$) (F2) {$F_{2}$};
        \draw[red] ($({1/sqrt(\pt)+1/3},{-sqrt(\pt)-1/6})$) --($({1/sqrt(\pt)-1/3},{-sqrt(\pt)+1/6})$) ;
        \node[red,left] at ($({-sqrt(\pt)-1/6},{1/sqrt(\pt)+1/3})$) (F3) {$F_{3}$};
        \draw[red] ($({-sqrt(\pt)-1/6},{1/sqrt(\pt)+1/3})$) --($(-{sqrt(\pt)+1/6},{1/sqrt(\pt)-1/3})$) ;
        \node[red,above] at ($({-1/sqrt(\pt)-1/3},{sqrt(\pt)+1/6})$) (F6) {$F_{4}$};
        \draw[red] ($({-1/sqrt(\pt)-1/3},{sqrt(\pt)+1/6})$) --($({-1/sqrt(\pt)+1/3},{sqrt(\pt)-1/6})$) ;
        \node[red,left] at ($({-sqrt(\pt)-1/6},{-1/sqrt(\pt)-1/3})$) (F5) {$F_{5}$};
        \draw[red] ($({-sqrt(\pt)-1/6},{-1/sqrt(\pt)-1/3})$) --($(-{sqrt(\pt)+1/6},{-1/sqrt(\pt)+1/3})$) ;
        \node[red,below] at ($({-1/sqrt(\pt)-1/3},{-sqrt(\pt)-1/6})$) (F6) {$F_{6}$};
        \draw[red] ($({-1/sqrt(\pt)-1/3},{-sqrt(\pt)-1/6})$) --($({-1/sqrt(\pt)+1/3},{-sqrt(\pt)+1/6})$) ;
        \node[red,right] at ($({1+1/2},{1+1/2})$) (F7) {$F_{7}-F_8$};
        \draw[red] ($({1-1/6},{1-1/6})$) --($({1+1/2},{1+1/2})$) ;
        \node[red,right] at ($({4/3+1/3},{4/3-1/3})$) (F7) {$F_8$};
        \draw[red] ($({4/3-1/6},{4/3+1/6})$) --($({4/3+1/3},{4/3-1/3})$) ;
    \end{tikzpicture}
    \caption{The space of initial values of \eqref{eq:phi3} interpreted
        as elliptic surface of type $A_{1}^{(1)}$.}
    \label{fig:picA12}
\end{figure}

    Using Equations \eqref{eq:phi3pb} and \eqref{eq:E71}
we have $\varphi^{*}_{s}$ acts on $\De^{(1)}$ by:   
\begin{equation}
    \varphi^{*}_{2}
    \colon
    \left(
        \begin{gathered}
            \alpha_{0},\alpha_{1},\alpha_{2},\alpha_{3},
            \\
            \alpha_{4},\alpha_{5},\alpha_{6},\alpha_{7}
        \end{gathered}
    \right)
    \mapsto
    \left( 
        \begin{gathered}
            \alpha_{0}, -\alpha_{123450}, 
            \alpha_{11233445670},-\alpha_6, 
            \\
            \alpha_{2456},-\alpha_{24},
            -\alpha_{567}, 
            \alpha_{234567}
        \end{gathered}
\right).
    \label{eq:phi3E71B}
\end{equation}
Computing three iterations of the map $\varphi^{*}_2$ we obtain:
\begin{equation}
    \left(\varphi_{2}^{*}\right)^3
    \colon
    \left(
        \begin{gathered}
            \alpha_{0},\alpha_{1},\alpha_{2},\alpha_{3},
            \\
            \alpha_{4},\alpha_{5},\alpha_{6},\alpha_{7}
        \end{gathered}
    \right)
    \mapsto
    \left(
        \begin{gathered}
            \alpha_{0},\alpha_{1},\alpha_{2}+\de,\alpha_{3},
            \\
            \alpha_{4},\alpha_{5}-\de,\alpha_{6},\alpha_{7}+\de
        \end{gathered}
    \right).
    \label{eq:phi33E71B}
\end{equation}
By Equation \eqref{uja}, we recognise that
$\left(\varphi_{2}^{*}\right)^{3}\in P$ is
the translation
by $-h_2+h_5-h_7$,
and $\varphi_{2}^{*}$, a quasi-translation of order three in $\widetilde{W}\left(E_7^{(1)}\right)$.
Using \Cref{hijE7}, we find that
$|-h_2+h_5-h_7|^2=\frac{3}{2}$.
Finally, $\varphi_{2}^{*}$ has the following decomposition
in terms of the generators of $\widetilde{W}( E_{7}^{(1)} )$ using Equations \eqref{lfun} and \eqref{uaw}:
\begin{equation}
    \varphi_{2}^{*} = \sigma s_{6524310765431463456}.
    \label{eq:phi3decB}
\end{equation}

\subsection{Normalizer of ${W}\bigl(4A_1\bigr)$ in $\widetilde{W}\left(E_7^{(1)}\right)$.}
\label{Ne72}
The initial observation on $\varphi^{*}_{2}$ given in \Cref{eq:phi3decB},
is that like the $\varphi^{*}_{s}$ discussed in \Cref{sec:B3} (where $\varphi^{*}_{s}$ was shown to be an element
 of the normalizer for a standard
parabolic subgroup $W_J$ with $J=\{\al_0, \al_3, \al_5,\al_7\}\cong 4A_1$),
$\varphi^{*}_{2}$ is an order three quasi-translation by a vector of squared length $\frac{3}{2}$.
However, unlike $\varphi^{*}_{s}$ 
the actions of $\varphi_{2}^{*}$ on $\De^{(1)}$
given in \Cref{eq:phi3E71B} shows no clear permutations on any subset of $\De^{(1)}$.
Here we show that $\varphi_{2}^{*}$ in fact is an element of the normalizer for another
type $4A_1$ parabolic subgroup $W_{uJ}$ for some 
$u\in \widetilde{W}\left(E_7^{(1)}\right)$.
To find the appropriate $u$ in ${W}\left(E_7^{(1)}\right)$ we use the
fact that the two squared length $\frac{3}{2}$ vectors associate to
$\varphi_{2}^{*}$ and $\varphi_{s}^{*}$ are related by,
\begin{equation}
    u^{-1}(h_1-h_2)=-h_2+h_5-h_7,\quad\mbox{where}\quad
    u=s_{5436542}\in \widetilde{W}\left(E_7^{(1)}\right).
\end{equation}
By the contragredient action of $\widetilde{W}\left(E_7^{(1)}\right)$ we have,
\begin{equation}
    u\De^{(1)}
    \colon
    \left(
        \begin{gathered}
            \alpha_{0},\alpha_{1},\alpha_{2},\alpha_{3},
            \\
            \alpha_{4},\alpha_{5},\alpha_{6},\alpha_{7}
        \end{gathered}
    \right)
    \mapsto
    \left( 
        \begin{gathered}
            \alpha_{0},  \al_{1345},-\al_{2344556},\al_{6}, 
            \\
           \al_{245},\al_{3},\al_4,\al_{567},   
        \end{gathered}
\right).
    \label{uD}
\end{equation}
That is, we have 
\begin{equation}\label{uJd}
    uJ=u\{\al_0, \al_3, \al_5,\al_7\}
    =\{uJ_1, uJ_2, uJ_3, uJ_4\}=\{\al_0, \al_6, \al_3,\al_{567}\},
\end{equation}
which is another type $4A_1$ sub-root system of the $E_7^{(1)}$ root system $\Phi^{(1)}$.

\begin{proposition}\label{NJWJe72}
  The normalizer of ${W}_{uJ}$ with $uJ=\{\al_0, \al_6, \al_3,\al_{567}\}
  \subset \Phi^{(1)}$ in $\widetilde{W}( E_{7}^{(1)} )$ is given by
\begin{align}\label{NWJe72}
  N(W_{uJ})&=W_{uJ}\rtimes N_{uJ}=\lan s_0, s_6, s_3, s_{76567}\ran\rtimes\left(\lan ub_1, ub_2, ub_3, ub_0\ran\rtimes
  \lan u\sigma u^{-1}\ran\right)\\\nonumber
&\cong {W}(4A_1)\rtimes \widetilde{W}\left(B_3^{(1)}\right),  
\end{align}
with,
\begin{equation}\label{B3sr2}
    ub_1 = us_{6576}u^{-1},\quad
    ub_2 = us_{4534}u^{-1}, \quad
    ub_3 = us_{2}u^{-1}, \quad
    ub_0 = us_{1301}u^{-1},
\end{equation}
where $u=s_{5436542}$.
\end{proposition}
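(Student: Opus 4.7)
The plan is to deduce \Cref{NJWJe72} from the already-established \Cref{NJWJe7s} by a conjugation argument, using the element $u=s_{5436542}\in \widetilde{W}(E_7^{(1)})$ that has already been identified just above the statement. The key fact is that the normaliser operation is covariant under inner automorphisms: for any $u\in \widetilde{W}(E_7^{(1)})$ and any $J\subset \De^{(1)}$,
\begin{equation*}
    N(uW_J u^{-1}) \;=\; u\,N(W_J)\,u^{-1}.
\end{equation*}
Moreover, since reflections transform as $s_{u\al}=u s_\al u^{-1}$ (a direct consequence of \eqref{sr}), the parabolic subgroup $uW_J u^{-1}$ coincides with $W_{uJ}$, the subgroup generated by reflections along the roots in $uJ$. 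The proof therefore reduces to two verifications: first, that $u$ indeed sends $J=\{\al_0,\al_3,\al_5,\al_7\}$ to the prescribed set $uJ=\{\al_0,\al_6,\al_3,\al_{567}\}$; and second, that the decomposition of \Cref{NJWJe7s} transports element-by-element through conjugation by $u$.

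For the first point, the computation is already carried out in \eqref{uD} and \eqref{uJd}, using the contragredient action \eqref{cona} together with \eqref{sij0} and the Cartan matrix of type $E_7^{(1)}$. With $uJ$ in hand, the generators of $W_{uJ}$ are the four reflections $s_0$, $s_6$, $s_3$, and $s_{\al_{567}}$. The first three are already simple reflections of $\widetilde{W}(E_7^{(1)})$; the fourth I would rewrite in terms of simple reflections by applying $s_{w\al_i}=ws_i w^{-1}$ with $w=s_7 s_6$ and $\al_i=\al_5$, noting that $s_7 s_6(\al_5)=\al_{567}$. This yields $s_{\al_{567}}=s_7 s_6 s_5 s_6 s_7=s_{76567}$, matching the fourth generator of $W_{uJ}$ displayed in \eqref{NWJe72}.

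For the second point, I would simply conjugate the four R-elements $b_1,b_2,b_3,b_0$ and the M-element $\sigma$ of $N_J$ by $u$, producing the generators $ub_1 u^{-1},\ldots,ub_0 u^{-1}$ and $u\sigma u^{-1}$ of $N_{uJ}$ exactly as listed in \eqref{B3sr2}. Because conjugation by $u$ is an inner automorphism of $\widetilde{W}(E_7^{(1)})$, the defining relations \eqref{funWB3}--\eqref{funWaB3} of $\widetilde{W}(B_3^{(1)})$ are automatically preserved; hence $N_{uJ}\cong \widetilde{W}(B_3^{(1)})$ and the semidirect product structure $N(W_J)=N_J\ltimes W_J$ is carried across to $N(W_{uJ})=W_{uJ}\rtimes N_{uJ}$, with $W_{uJ}\cong W(4A_1)$ normal as required.

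The main obstacle is essentially computational rather than conceptual: one must check a finite but non-trivial list of word identities inside $\widetilde{W}(E_7^{(1)})$ to confirm both that $u=s_{5436542}$ realises the relation $u^{-1}(h_1-h_2)=-h_2+h_5-h_7$ singled out before the statement, and that $u$ sends the roots of $J$ to the specified elements of $\Phi^{(1)}$. Once this $u$ is fixed, covariance of the normaliser under conjugation makes the rest of the proof formal, with no further input from the theory of Coxeter groups or from the geometry of the pencil $p_2$.
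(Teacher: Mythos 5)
Your proposal is correct and follows essentially the same route as the paper: the authors likewise obtain $W_{uJ}=uW_Ju^{-1}=\lan s_0,s_6,s_3,s_{76567}\ran$ via \eqref{sr} and \eqref{uJd}, and then invoke the covariance of the normalizer under conjugation (citing \cite{BH}) to transport the whole structure of \Cref{NJWJe7s} by $u$. Your explicit verification that $s_{\al_{567}}=s_{76567}$ via $w=s_7s_6$ is a detail the paper leaves implicit, but it is the same argument.
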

\begin{proof}
By Equations \eqref{sr} and \eqref{uJd},
generators of the parabolic subgroup $W_{uJ}$
are obtain from generators of $W_{J}$
by a conjugation of $u=s_{5436542}$, that is,
\begin{equation}
 W_{uJ}=uW_Ju^{-1}
 =\lan us_0u^{-1}, us_3u^{-1}, us_5u^{-1}, us_7u^{-1}\ran=\lan s_0, s_6, s_3, s_{76567}\ran.
 \end{equation}
By \cite{BH}, the normalizer of ${W}_{uJ}$
is obtained from the normalizer of ${W}_{J}$ by a conjugation of $u$.
\end{proof}
\begin{proposition}\label{NJWJe7r2}
    The root system for $N(W_{uJ})$ of \Cref{NJWJe72} is given by 
    \begin{align}\label{nr}
uJ\cup u\be=\{uJ_1, uJ_2, uJ_3, uJ_4\}\cup
\{u\be_1, u\be_2,u\be_3,u\be_0\}\cong 4A_1\times B_3^{(1)},
\end{align}
with
\begin{equation}\label{nrre7}
    \begin{gathered}
        uJ_1=\al_{0},\quad
        uJ_2=\al_{6},
        \\
        uJ_3=\al_{3},\quad
        uJ_4=\al_{567},
        \\
        u\be_1=\al_{344567},\quad
        u\be_2=\al_{22344556},
        \\
        u\be_3=-\al_{2344556},\quad
        u\be_0=\al_{1133445560},
    \end{gathered}
\end{equation}
where $u\be_0+u\be_1+2u\be_2+2u\be_3=\alpha_{071122663335554444}=\de$.
The group $N_{uJ}$ can be realised as 
    $\widetilde{W}(B_3^{(1)})$ with $ub_i$ acting as 
    reflection along the root $u\be_i$, while $u\sigma u^{-1}$ exchanges
    $u\be_0$ and $u\be_1$, see Figure \ref{rsB3ae}.
\end{proposition}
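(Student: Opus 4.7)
The plan is to transport the content of \Cref{NJWJe7rs} via conjugation by $u=s_{5436542}\in W(E_7^{(1)})$. Two structural facts enable this transfer: (i) $u$ acts on $V^{(1)}$ as an isometry of the symmetric bilinear form, being an element of the Weyl group; and (ii) $u$ fixes the null root $\de$, since by \eqref{aid} every simple reflection does. Consequently, the image under $u$ of the type-$B_3^{(1)}$ simple system $\{\be_0,\be_1,\be_2,\be_3\}$ from \eqref{bre7} is again a simple system of the same type, with the same null root, but now realised on a translated sub-root lattice of $E_7^{(1)}$. Since the subset $uJ$ has already been computed in \eqref{uJd}, this immediately handles the $4A_1$ factor, and leaves only the $B_3^{(1)}$ factor to establish.

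First I would compute $u\be_i$ for $i=0,1,2,3$ by applying the seven simple reflections making up $u$ in succession to each $\be_i$, using the action \eqref{sij0} together with the Cartan matrix \eqref{CarE7a}. These computations are mechanical but lengthy; the outcome should reproduce the four formulas in \eqref{nrre7}. The null-root identity $u\be_0+u\be_1+2u\be_2+2u\be_3=\de$ is then a one-line consequence of \eqref{db3} combined with $u(\de)=\de$.

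Next I would verify that the $u\be$-system has Cartan matrix equal to $C(B_3^{(1)})$ of \eqref{CarB3a}: since $u$ is an isometry, the pairings $(u\be_i)\cdot(u\be_j)$ coincide with the known values $\be_i\cdot\be_j$ from \Cref{NJWJe7rs}, giving the correct entries. Orthogonality $u\be\perp uJ$ follows from $\be\perp J$ in $V^{(1)}$, hence $V_{uJ}^\perp=u(V_J^\perp)=\Span(u\be_1,u\be_2,u\be_3,u\be_0)$. Together with the first step this establishes the root-system isomorphism $uJ\cup u\be\cong 4A_1\times B_3^{(1)}$.

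Finally, for the group-theoretic realisation, formula \eqref{sr} gives $ub_i=us_{\be_i}u^{-1}=s_{u\be_i}$, so each $ub_i$ is the reflection along $u\be_i$. The conjugate $u\sigma u^{-1}$ remains an involution, and its action on the $u\be$-basis is the $u$-image of $\sigma$'s action on the $\be$-basis given in \eqref{rhor7}; in particular it swaps $u\be_0\leftrightarrow u\be_1$ and fixes $u\be_2,u\be_3$, thereby realising the diagram automorphism of $\Ga(B_3^{(1)})$ from \Cref{rsB3ae}. The isomorphism $N_{uJ}\cong\widetilde{W}(B_3^{(1)})$ then follows from the corresponding statement for $N_J$, since conjugation by $u$ is a group isomorphism $N_J\to N_{uJ}$ intertwining the actions on $V_J^\perp$ and $V_{uJ}^\perp$. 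The only real obstacle is the explicit verification of the formulas \eqref{nrre7}, a tedious but direct calculation that I would organise as matrix-vector updates in the $\De^{(1)}$ basis to minimise bookkeeping errors.
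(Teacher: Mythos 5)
Your proposal is correct and follows the same route the paper takes implicitly: the paper gives no separate proof for this proposition, relying on the conjugation-by-$u$ argument already invoked in the proof of \Cref{NJWJe72}, which is exactly what you make explicit (isometry of $u$, $u(\de)=\de$, $us_{\be_i}u^{-1}=s_{u\be_i}$ via \eqref{sr}, and transport of the Cartan matrix \eqref{CarB3a} and the diagram automorphism \eqref{rhor7} from \Cref{NJWJe7rs}). The remaining content is the mechanical evaluation of $u\be_i$ in the $\De^{(1)}$ basis, which you correctly identify and which does reproduce \eqref{nrre7}.
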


\begin{proposition}
    The map $\varphi_{2}^{*}$ is an element of quasi-translation of 
    order three in the
    \begin{equation}
        N(W_{uJ})\cong{W}(4A_1)\rtimes \widetilde{W}(B_3^{(1)})
        \label{NWuJ}
    \end{equation}
    subgroup of $\widetilde{W}(E_7^{(1)})$ give in \Cref{NJWJe72}. We have,
    \begin{equation}\label{phi2u}
        \varphi_{2}^{*}=u\sigma u^{-1}s_{0}s_{3}ub_{3231203}.    
    \end{equation}   
    \label{prop:phi2norm}
\end{proposition}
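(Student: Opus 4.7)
The plan is to follow the template established in \Cref{sec:B3} for $\varphi_s^*$, adapted to the translated parabolic subsystem $uJ$. The first step is to verify that $\varphi_2^*$ normalises $W_{uJ} = \langle s_0, s_6, s_3, s_{76567}\rangle$, which by definition amounts to showing that $\varphi_2^*$ sends the root set $uJ = \{\al_0, \al_6, \al_3, \al_{567}\}$ to itself up to signs. Using the action in \Cref{eq:phi3E71B}, this is a direct verification: one finds $\varphi_2^*(\al_0) = \al_0$, $\varphi_2^*(\al_3) = -\al_6$, $\varphi_2^*(\al_6) = -\al_{567}$, and (after expanding $\varphi_2^*(\al_{567}) = \varphi_2^*(\al_5) + \varphi_2^*(\al_6) + \varphi_2^*(\al_7)$ and cancelling) $\varphi_2^*(\al_{567}) = \al_3$. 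Hence $\al_0$ is fixed while the remaining three roots of $uJ$ are cyclically permuted (with sign changes), so $\varphi_2^* \in N(W_{uJ})$.

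Once $\varphi_2^* \in N(W_{uJ})$ is established, the semidirect product decomposition $N(W_{uJ}) = W_{uJ} \rtimes N_{uJ}$ from \Cref{NJWJe72} guarantees a unique factorisation of $\varphi_2^*$ as a product of an $N_{uJ}$-element and a $W_{uJ}$-element. To produce the specific word given in the statement, I would exploit the structural parallel with $\varphi_s^* = \sigma b_{320}$ from \Cref{phisB3p}: the two quasi-translations of order three correspond to weight-lattice vectors $h_1 - h_2$ and $-h_2 + h_5 - h_7$ related by $u$, so it is natural to begin from the conjugate $u\sigma u^{-1}\cdot ub_{320}u^{-1}$, which already lies in $N_{uJ}$, and to adjust it by elements of $W_{uJ}$ until its action on $\De^{(1)}$ matches \Cref{eq:phi3E71B}. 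Multiplication by the factor $s_0 s_3$ appearing in the claimed formula is precisely what compensates for the permutation discrepancy on the $4A_1$ part.

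Finally, since $(\varphi_2^*)^3$ is already known to be a pure translation by $-h_2 + h_5 - h_7$ in the weight lattice (\Cref{eq:phi33E71B}), the order-three quasi-translation property follows immediately once the equality $\varphi_{2}^{*} = u\sigma u^{-1}s_{0}s_{3}ub_{3231203}$ is confirmed. Confirmation itself reduces to checking equality of two elements of $\widetilde{W}(E_7^{(1)})$, which I would do by expanding both sides as words in the simple generators (using $u = s_{5436542}$ and the definitions in \Cref{B3sr2}) and comparing their actions on $\De^{(1)}$. The main obstacle here is purely bookkeeping rather than conceptual: once fully expanded, the candidate word has considerable length, so the verification is computationally intensive and is most conveniently carried out in a computer algebra system.
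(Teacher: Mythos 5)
Your opening step is sound and coincides with what the paper does: restricting the action \eqref{eq:phi3E71B} to $uJ=\{\al_0,\al_6,\al_3,\al_{567}\}$ gives exactly the signed permutation you describe (it is the $uJ$-part of \Cref{eq:phi2u}), so $\varphi_2^*\in N(W_{uJ})$; and the order-three quasi-translation property does follow from \Cref{eq:phi33E71B} once the word is confirmed. The gap is in your route to the word. You propose to start from $u\varphi_s^*u^{-1}=u\sigma u^{-1}\cdot ub_{320}$ and correct it by elements of $W_{uJ}$ alone, asserting that the factor $s_0s_3$ is ``precisely what compensates for the permutation discrepancy on the $4A_1$ part''. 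This cannot work. Every root $u\be_i$ is orthogonal to every root of $uJ$, so every element of $W_{uJ}$ fixes $V_{uJ}^\perp=\Span(u\be)$ pointwise; consequently no multiplication (or conjugation) by $W_{uJ}$-elements can alter the restriction of an element of $N(W_{uJ})$ to the $u\be$-system. But the restrictions of $\varphi_2^*$ and of $u\varphi_s^*u^{-1}$ to that subspace already disagree: comparing \Cref{eq:phisu} with \Cref{eq:phi2u}, $\varphi_s^*$ sends $\be_1\mapsto\be_{2330}$ and $\be_2\mapsto\be_1$, whereas $\varphi_2^*$ sends $u\be_1\mapsto u\be_2$ and $u\be_2\mapsto u\be_1+3u\be_2+4u\be_3+2u\be_0$. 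The two maps are distinct cube roots of the same translation; their images in $N_{uJ}\cong\widetilde{W}(B_3^{(1)})$ are genuinely different elements, which is exactly why \eqref{phi2u} contains the length-seven word $ub_{3231203}$ rather than $ub_{320}$. Your ``adjust by $W_{uJ}$ until the action matches'' loop would therefore never terminate.

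What the paper actually does at this stage is rewrite the action \eqref{eq:phi3E71B} in the $uJ\cup u\be$ basis (\Cref{eq:phi2u}) and then run the length-reduction formula \eqref{lfun} \emph{inside} $N(W_{uJ})$, peeling off one generator $ub_i$ or $s_j$ at a time according to which simple $u\be$- or $uJ$-root is sent negative, until the identity is reached; this produces the word in \eqref{phi2u} constructively, including its nontrivial $B_3^{(1)}$-component. Your final fallback --- expanding both sides of \eqref{phi2u} into simple reflections via $u=s_{5436542}$ and \Cref{B3sr2} and comparing actions on $\De^{(1)}$ --- would indeed verify the stated formula, but as a derivation the proposal stalls precisely where the $\widetilde{W}(B_3^{(1)})$-part has to be produced, so the argument as written is incomplete.
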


\begin{proof}
The actions of $\varphi_{2}^{*}$ on the $uJ\cup u\be$ basis
is obtained from that of \Cref{eq:phi3E71B}
by a change of basis,
\begin{equation}
    \varphi_{2}^{*}
    \colon
    \left(
        \begin{gathered}
            uJ_1, uJ_2, uJ_3, uJ_4,
            \\
            u\be_1, u\be_2,u\be_3,u\be_0
        \end{gathered}
    \right)
    \mapsto
    \left( 
        \begin{gathered}
            uJ_1, -uJ_4, -uJ_2, uJ_3, 
            \\
           u\be_2, u\be_{1222333300},-u\be_{230},-u\be_{122233330}  
        \end{gathered}
\right).
    \label{eq:phi2u}
\end{equation}
Moreover, we have
\begin{equation}
    \left(\varphi_{2}^{*}\right)^3
    \colon
    \left(
        \begin{gathered}
            uJ_1, uJ_2, uJ_3, uJ_4,
            \\
            u\be_1, u\be_2,u\be_3,u\be_0
        \end{gathered}
    \right)
    \mapsto
    \left( 
        \begin{gathered}
            uJ_1, uJ_2, uJ_3, uJ_4, 
            \\
           u\be_1, u\be_2,u\be_3+\de,u\be_0-2\de  
        \end{gathered}
\right).
    \label{eq:phi23u}
\end{equation}
We use Equation \eqref{lfun} on $\varphi_{2}^{*}$'s action on the $uJ\cup u\be$-system given in \Cref{eq:phi2u}
to write $\varphi_{2}^{*}$ as a product of generators of  $N(W_{uJ})$.
We find,
\begin{equation}
    \varphi_{2}^{*}=u\sigma u^{-1}s_{0}s_{3}ub_{3231203},
\end{equation}
thus ending the proof.
\end{proof}

Hence, the proof of last statement of \Cref{thm:mainB} follows
from \Cref{prop:phi2norm}.

\section{Conclusion}
\label{sec:conclusions}

In this paper, we investigated the symmetries of three plane integrable
systems given by \Cref{eq:qrt0,eq:phi3,eq:phis}. The first two systems,
$\varphi_{1}$~\eqref{eq:qrt0} and $\varphi_{2}$~\eqref{eq:phi3}
arise from the KHK discretisation of the tetrahedral and
octahedral reduced Nahm systems~\cite{Hitchinetal1995}. Such
systems and some generalisations have been discussed
extensively in the literature~\cite{PetreraPfadlerSuris2011,
CelledoniMcLachlanMcLarenOwrenQuispel2017, CarsteaTakenawa2013, GubNahm,
GJ_biquadratic, Zander2020} (see \Cref{app:equivalence}), but a close
inspection of their action on singular fibres was never discussed. The
last map, $\varphi_{s}$~\eqref{eq:phis}, is a generalisation of the
map $\varphi_{2}$ we introduced in \Cref{sec:origin} through the
QRT root construction (see~\cite[\S 6.3]{HietarintaJoshiNijhoff2016}
and~\cite[Introduction]{Duistermaat2011book}).

In this paper, in light of the theory of Oguiso and Shioda~\cite{OS1991}
and its application to plane discrete integrable systems,
see~\cite{Carsteaetal2017} and~\cite[Chap. 7]{Duistermaat2011book}, we
found out that these three systems possess singular fibre structure of
type ${(A_3\times A_2 \times 2A_1)}^{(1)}$, ${(3A_3\times A_1 )}^{(1)}$,
and ${(2A_{1}\times 2 A_{2} \times A_{0}^{*})}^{(1)}$ respectively.

After this construction, we concentrated our study on three specific
singular fibres and their symmetry group, i.e.\  their orthogonal
component in the Picard group, namely the fibre $[0:1]$ of type
$A_2^{(1)}$ for $\varphi_1$, the fibre $[0:1]$ of type $A_1^{(1)}$
for the map $\varphi_2$, and the fibre $[0:1]$ of type $A_1^{(1)}$
for the map $\varphi_s$. The result is that on these singular fibres
the symmetry group is a proper subgroup of symmetry group, forming a
sublattice in the Picard group. This situation is not completely unusual,
for instance see the papers~\cite{T:03, KNT:11, CarsteaTakenawa2012,
Carsteaetal2017, ahjn:16, cp:17, Stokes:18}. However, our result is
relevant because for the first time we find the non-simply laced Coxeter
group $B_3^{(1)}\subset E_7^{(1)}={(A_1^{(1)})}^\perp$ as symmetry
group for both the maps $\varphi_2$ and $\varphi_{s}$. Moreover, our
result is obtained within the algorithmic framework of the normaliser
of parabolic subgroups~\cite{H,BH}, which already proved fruitful in
interpreting known results~\cite{Shi:19,Shi:22}. In this sense, we can
say that non-simply laced symmetry groups were actually \emph{hidden in
plain sight} in known integrable plane systems.

Further research is ongoing about the symmetry types associated to the
other singular fibres. Furthermore, we note that it is also interesting
to consider the (possible) subgroups of elements of the generic
fibres, i.e.\ the elliptic fibre of type $A_{0}^{(1)}$. Considering
the deautonomisation of these systems we could then possibly build new
``degenerate'' elliptic discrete Painlev\`e equations as it was done
for instance in~\cite{Carsteaetal2017, ahjn:16, JoshiNakazono2017}. In
general, for each singular fibre the deautonomisation procedure
of~\cite{Carsteaetal2017} might be applied.

To conclude, what we laid down in this paper can also evolve into an
ambitious program of building the symmetry group of all the QRT maps
and QRT root associated with given elements of the classification given
by Oguiso and Shioda's classification of singular fibre configurations
of rational elliptic surfaces~\cite{OS1991}, following what was done
in~\cite[Chap. 7]{Duistermaat2011book}.

\subsection*{Acknowledgments}

GG acknowledges support of the GNFM through Progetto Giovani GNFM 2023:
``Strutture variazionali e applicazioni delle equazioni alle differenze
ordinarie'', CUP\_E53C22001930001.

We thank Prof. Bert Van Geemen and Dr. Michele Graffeo for their
assistance and stimulating discussion during the preparation of this
paper.

\appendix

\section{Reduced Nahm systems of tetrahedral and octahedral type}
\label{app:equivalence}

In \cite{Hitchinetal1995}, some special cases of Nahm's equations
with particular symmetries studied in connection with
the theory of monopoles. Due to the symmetry of the associated Nahm 
matrices the systems are called the \emph{tetrahedral Nahm system}, 
\emph{octahedral Nahm system}, and \emph{icosahedral Nahm system}.
In particular, we adopt the form of the systems of \emph{tetrahedral Nahm system} 
and \emph{octahedral Nahm system} from \cite{PetreraPfadlerSuris2011}:
\begin{subequations}
    \begin{gather}
        \dot{x}= x^2-y^2, 
        \quad 
        \dot{y} = -2 x y,
        \label{eq:tetra}
        \\
        \dot{x} = 2 x^2-48 y^2,
        \quad
        \dot{y}= -6 x y-8y^2.
        \label{eq:octa}
    \end{gather}\label{eq:rnahm}
\end{subequations}

Consider now the family of Hamiltonian systems:
\begin{equation}
    H_{a,b,c} = 
    x^{a}y^{b}\left( k x + l y \right)^{c},
    \label{eq:Habc}
\end{equation}
see for instance~\cite{CelledoniMcLachlanMcLarenOwrenQuispel2017}.
The corresponding Hamilton equations are given by:
\begin{equation}
    \begin{pmatrix}
        \dot{x}
        \\
        \dot{y}
    \end{pmatrix}
    = 
    x^{1-a}
    y^{1-b}
    (kx+ly)^{1-c}
    \begin{pmatrix}
        0 & 1
        \\
        -1 & 0
    \end{pmatrix} \grad H_{a,b,c}=
    \begin{pmatrix}
        x (b k x+b l y+c l y), 
        \\
        -y (a k x+a l y+c k x)
    \end{pmatrix}.
    \label{eq:eqHabc}
\end{equation}
This system is \emph{quadratic} and \emph{integrable} for all
triples $(a,b,c)$ (it preserves the invariant \eqref{eq:Habc}).
By direct computation we have the following result:

\begin{lemma}
    The Hamiltonian systems \eqref{eq:eqHabc} with $(a,b,c)=(1,1,1)$
    is mapped to the tetrahedral Nahm system \eqref{eq:tetra} through
    the following transformation:
    \begin{equation}
        (x,y)\to \left(\frac{\sqrt{3}(4kl)^{1/3}}{2k}x
        -\frac{(4kl)^{1/3}}{2k}y, \frac{(4kl)^{1/3}}{l}y\right),
        \quad
        t\to \frac{\sqrt(3)(4kl)^{1/3}}{2} t,
    \end{equation}
    while the Hamiltonian systems \eqref{eq:eqHabc} with 
    $(a,b,c)=(1,1,2)$ is mapped to the octahedral Nahm systems \eqref{eq:octa} 
    through the following transformation:
    \begin{subequations}
        \begin{gather}
        (x,y)\to
        \left(
            -\frac{1+\imath}{5} \left(\frac{5}{k}\right)^{3/4} l^{1/4}x
            -\frac{3}{10}(1+\imath) l^{1/4} \left(\frac{5}{k}\right)^{3/4}y, 
            \frac{1+\imath}{2}\left(\frac{5}{l}\right)^{3/4}k^{1/4} y
        \right)
        \\
        t \to -\frac{1+\imath}{10} 5^{3/4} (lk)^{1/4} t
        \end{gather}
    \end{subequations}
\end{lemma}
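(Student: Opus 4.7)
The plan is a direct verification. First I would specialise the general Hamiltonian system~\eqref{eq:eqHabc}: for $(a,b,c) = (1,1,1)$ one obtains $\dot x = k x^{2} + 2 l x y$, $\dot y = -2 k x y - l y^{2}$; for $(a,b,c) = (1,1,2)$ one obtains $\dot x = k x^{2} + 3 l x y$, $\dot y = -3 k x y - l y^{2}$. Both sides of the two proposed equivalences are then homogeneous quadratic vector fields on $\Cp^{2}$, so the asserted transformations are nothing but linear changes of coordinate composed with a rescaling of time.

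Second, I would apply the stated change of variables. In each case the spatial part is upper triangular and affine, $(x,y) \mapsto (A x + B y,\, C y)$, with $A,B,C$ the explicit constants given in the statement, and the time is rescaled by a constant $D$. Since the transformation is linear and the vector fields are homogeneous quadratic, the push-forward is again homogeneous quadratic of the same type. Computing $\dot X = (A \dot x + B \dot y)/D$, $\dot Y = C \dot y/D$, substituting $x = (X - B Y/C)/A$ and $y = Y/C$, and collecting the coefficients of $X^{2}$, $X Y$ and $Y^{2}$ reduces the lemma to a handful of polynomial identities in $A,B,C,D,k,l$ that can be checked term by term against~\eqref{eq:tetra} and~\eqref{eq:octa}.

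The main obstacle is purely bookkeeping. The fractional exponents $(4kl)^{1/3}$ in the tetrahedral case, and $(5/k)^{3/4}$ together with $l^{1/4}$ and $k^{1/4}$ in the octahedral case, have to be carried through consistently; in the octahedral case the complex prefactor $(1+\imath)$ must be tracked so that the target coefficients $2, -48, -6, -8$ of~\eqref{eq:octa} come out real. A useful sanity check is that, since the vector field is quadratic and $dT/dt$ carries a single factor of $(1+\imath)$, the net imaginary contribution behaves as $(1+\imath)^{2}/(1+\imath) = 1+\imath$, which must cancel against the $(1+\imath)$ coming from the $k^{1/4}$ versus $l^{1/4}$ mismatch before one can match real coefficients on the Nahm side.

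A more conceptual perspective, which also explains how to guess the transformations in the first place, is that the space of homogeneous quadratic vector fields on $\Cp^{2}$ carries an action of $\GL(2,\Cp)$ by change of coordinates combined with a rescaling of time by $\Cp^{\times}$, and its orbits are classified by a short list of projective invariants. The lemma then amounts to the claim that the specialised Hamiltonian systems and the corresponding Nahm systems lie in the same $\GL(2,\Cp)\times\Cp^{\times}$-orbit, and the explicit transformations in the statement can be reverse-engineered by solving for the required group element up to the finite stabiliser.
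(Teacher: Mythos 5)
Your proposal is correct and takes essentially the same route as the paper: the paper gives no argument beyond the preceding phrase ``by direct computation'', and your plan---specialising \eqref{eq:eqHabc} to $\dot x = kx^{2}+2lxy$, $\dot y = -2kxy-ly^{2}$ for $(1,1,1)$ and $\dot x = kx^{2}+3lxy$, $\dot y=-3kxy-ly^{2}$ for $(1,1,2)$, then pushing the homogeneous quadratic fields through the triangular linear change of variables and the time rescaling and matching coefficients---is precisely that computation made explicit. The closing remark about $\GL(2,\Cp)\times\Cp^{\times}$-orbits of quadratic vector fields is a nice conceptual addition not present in the paper, but it is not needed for the verification.
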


Let us now turn to the genesis of the maps $\varphi_{1}$ and
$\varphi_{2}$. As stated in the introduction the KHK 
discretisation~\cite{Kahan1993,KahanLi1997,HirotaKimura2000,KimuraHirota2000} 
is a discretisation method for systems of first-order differential
equations.

\begin{definition}
    Consider a system of first-order ordinary differential equations:
    \begin{equation}
        \vec{\dot{x}}=\vec{f}\left( \vec{x} \right),
        \quad
        \vec{x}\colon\R^{N}\to\R,
        \quad
        \vec{f}\colon\R^{N}\to\R^{N},
        \quad
        N\in\N,
        \,
        t\in\R.
        \label{eq:firstord}
    \end{equation}
    Then the \emph{KHK discretisation of the system 
    \eqref{eq:firstord}} is given by:
    \begin{equation}
        \frac{\vec{x'}-\vec{x}}{h} = 
        2\vec{f}\left( \frac{\vec{x'}+\vec{x}}{2} \right)
        -\frac{\vec{f}\left( \vec{x'} \right)+\vec{f}\left( \vec{x}\right)}{2},
        \label{eq:kahan}
    \end{equation}
    where $\vec{x} = \vec{x}\left( n h \right)$,
    $\vec{x'}=\vec{x}\left( (n+1)h \right)$, and $h$ is a infinitesimal
    parameter, i.e. $h\to0^{+}$.
    \label{def:khk}
\end{definition}

If the function $\vec{f}$ is \emph{quadratic}, then the discrete
system defined by \Cref{eq:kahan} give rise to a birational map of 
$\R^N$, see \cite{PetreraPfadlerSuris2011,CelledoniMcLachlanOwrenQuispel2013}.
In such a case, this birational map can be considered to be a 
restriction of a map over $\Cp^N$, but dynamical properties are better
understood over compact spaces to take into account points escaping
at infinity. So, it is costumary to consider the maps over some 
compactification of $\Cp^N$. The usual choices are $\Pj^{M}$ or 
$(\Pj^{1})^{\times M}$. In this paper, we will use the latter, but 
each choice will yield the same final results.

Now following~\cite{VanDerKampCelledoniMcLachlanMcLarenOwrenQuispel2019}\
we show that the KHK maps corresponding to the discretisation cases
$(a,b,c)=(1,1,2)$ and $(a,b,c)=(1,1,1)$ of the systems~\eqref{eq:eqHabc}\ 
can be brought to the form~\eqref{eq:qrt0}\ and~\eqref{eq:phi3}. That is, 
denoting these two maps by $\bPhi_{(1,1,1),h}\in\Bir(\PcrossP)$ and
$\bPhi_{(1,1,2),h}\in\Bir(\PcrossP)$ respectively, we have by a direct 
computation:

\begin{proposition}
    The maps $\bPhi_{(1,1,2),h}\in\Bir(\PcrossP)$:
    \begin{equation}
    \bPhi_{(1,1,2),h}\colon
    (x,y)\mapsto
    \begin{pmatrix}
            \frac{x\left[2+\left(3 k x+5 l y\right)h\right]}{%
            2+\left( k x+ l y\right)-\left(3 k^2 x^2+2 k l x y+3 l^2 y^2\right)h^{2}}
            \\
             \frac{y \left[2-\left(5  k x+3  l y\right)\right]}{
           2+\left( k x+ l y\right)-\left(3 k^2 x^2+2 k l x y+3 l^2 y^2\right)h^{2}}
    \end{pmatrix}^{T}
    \label{eq:rnahm112}
    \end{equation}
    and
    $\bPhi_{(1,1,1),h}\in\Bir(\PcrossP)$:
    \begin{equation}
    \bPhi_{(1,1,1),h}\colon
    (x,y)\mapsto
    \begin{pmatrix}
            \frac{x[1+ (k x+2 l y)h]}{1- h^2 k^2 x^2- h^2 k l x y- h^2 l^2 y^2},
            \\
            \frac{y [1- (2  k x-  l y)h]}{1- h^2 k^2 x^2- h^2 k l x y- h^2 l^2 y^2}
    \end{pmatrix}^{T}
    \label{eq:rnahm111}
    \end{equation}
    are \emph{conjugated} to the maps
    $\varphi_{1}$ and $\varphi_{2}$ respectively, through a projective
    coallineation.
    That is:
    \begin{equation}
        \varphi_{1} = \chi_{1}^{-1} \circ \bPhi_{(1,1,2),h} \circ \chi_{1},
        \quad\text{and}\quad
        \varphi_{2} = \chi_{2}^{-1} \circ \bPhi_{(1,1,1),h} \circ \chi_{2}.
        \label{eq:conj}
    \end{equation}
    where:
    \begin{subequations}
        \begin{align}
        \chi_{1} &\colon
        \left( x,y \right)\mapsto
        \left( 
        \frac{1}{2kh} \frac{2 (x- y)+\sqrt{2}}{x+y}, 
        -\frac{1}{2lh} \frac{2 (x- y)-\sqrt{2}}{x+y}
        \right),
        \label{eq:pj}
        \\
        \chi_{2} &\colon
        \left( x,y \right)\mapsto
        \left( 
        \frac{2}{x+y}, -\frac{hk-x+y}{hl(x+y)}
        \right).
        \label{eq:pj3}    
        \end{align}
    \end{subequations}
    \label{eq:shape}
\end{proposition}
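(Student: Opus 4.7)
The statement is a direct computational verification, so the plan is algebraic rather than structural. First I would invert the two projective coallineations $\chi_{1}$ and $\chi_{2}$ explicitly. Both are birational maps whose coordinates are rational functions linear in the numerator and linear in the denominator, so solving $(X,Y) = \chi_{i}(x,y)$ for $(x,y)$ reduces to a small $2\times 2$ linear system in each fixed fibre and produces explicit formulas for $\chi_{i}^{-1}$. Second, I would substitute into the composition $\chi_{i}^{-1}\circ\bPhi_{(1,1,c),h}\circ\chi_{i}$, treating the outcome as a rational function on $\PcrossP$, and simplify. The assertion is that after clearing denominators and cancelling common factors the parameters $h$, $k$, $l$ drop out entirely, leaving precisely the formulas for $\varphi_{1}$ and $\varphi_{2}$ recorded in \Cref{eq:qrt0,eq:phi3}.

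The key structural observation behind the choice of $\chi_{1},\chi_{2}$ is that the Hamiltonian $H_{a,b,c}$ in \Cref{eq:Habc} and its KHK map inherit a natural projective symmetry under rescaling of $x,y,t$ by powers of $h,k,l$. The maps $\chi_{1}$ and $\chi_{2}$ are tuned so that the scaling parameters $h,k,l$ appear as overall factors in the intermediate rational expressions, which forces them to cancel in the final compositions. This is what makes the family $\bPhi_{(1,1,c),h}$ collapse onto a single parameter-free QRT-root representative. In practice, once $\chi_{i}^{-1}$ is written down, the computation is polynomial of low degree in the numerator and denominator and can be carried out either by hand, one component at a time, or checked symbolically in Maple or Mathematica.

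The main obstacle is purely book-keeping: keeping the square-root signs in the formula for $\chi_{1}$ consistent under inversion, and ensuring that the common factors in numerator and denominator of $\chi_{i}^{-1}\circ\bPhi_{(1,1,c),h}\circ\chi_{i}$ are correctly identified so that the $h$-dependent factors cancel cleanly. No input from the theory of Weyl groups or rational elliptic surfaces is needed here; the proposition is entirely computational, and it serves to justify the assertion made in the introduction that $\varphi_{1}$ and $\varphi_{2}$ are the appropriate normal forms in which to analyse the KHK discretisations of the tetrahedral and octahedral reduced Nahm systems.
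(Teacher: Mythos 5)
Your proposal is correct and coincides with what the paper does: the proposition is justified there only by the words ``by a direct computation,'' i.e.\ precisely the verification you describe — invert the fractional-linear maps $\chi_{1},\chi_{2}$, compose with $\bPhi_{(1,1,2),h}$ and $\bPhi_{(1,1,1),h}$, and simplify until the $h,k,l$-dependence cancels, recovering \eqref{eq:qrt0} and \eqref{eq:phi3}. No further comparison is needed.
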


\begin{remark}
    We remark that the literature on the KHK discretisation of the Nahm
    systems~\eqref{eq:rnahm} is extensive. In this remark we collect some
    sparse facts that are worth mentioning:

    \begin{itemize}
        \item The integrability of the KHK discretisation of the three 
            kinds of reduced Nahm systems, tetrahedral, octahedral and 
            icosahedral, was proved by direct computation 
            in~\cite{PetreraPfadlerSuris2011}. A Lax pair of these systems
            was introduced in~\cite{GubNahm}.
        \item In~\cite{CarsteaTakenawa2013} it was noted that these 
            systems are not minimal, in the sense of \Cref{sec:backgrownd},
            and can be reduced to QRT roots.
        \item Generalisations of these systems in $N$-dimensions were 
            presented in~\cite{PetreraZander2017}.
        \item From~\cite{Zander2020} it is known that the only integrable 
            cases of KHK discretisation of the system~\eqref{eq:eqHabc} 
            are obtained if $(a,b,c)=(1,1,1)$, $(a,b,c)=(1,1,2)$, 
            $(a,b,c)=(1,2,3)$ and their permutations.
        \item The map $\chi_1$ given in~\eqref{eq:pj} is a modification
            of a map already presented in~\cite{GJ_biquadratic}.
    \end{itemize}
\end{remark}

\section{Basic properties of the group $\widetilde{W}( E_{6}^{(1)} )$}\label{app:E6}

The Dynkin diagram of type $E_6^{(1)}$, 
$\Ga(E_6^{(1)})$ is given in Figure \ref{fig:e6}.
Its simple system $\De^{(1)}=\{\al_j\mid 0\leq j\leq 6\}$ forms a basis for an $7$-dimensional real vector space
$V^{(1)}$, equipped with 
a bilinear form given
by Equation \eqref{alaij0} and
the generalized Cartan matrix of type $E_6^{(1)}$,
\begin{align}\nonumber
C(E_6^{(1)})&=(a_{ij})_{1\leq i,j\leq 6, 0}
=(\al_i\cdot\oc\al_j)_{1\leq i,j\leq 6, 0},\\\nonumber
&=(\al_i\cdot\al_j)_{1\leq i,j\leq 6, 0},\\\label{CarE6a}
&=\left(
\begin{array}{cccccc}
 2 & -1 & 0 & 0 & 0 & 0 \\
 -1 & 2 & -1 & 0 & 0 & 0 \\
 0 & -1 & 2 & -1 & 0 & -1 \\
 0 & 0 & -1 & 2 & -1 & 0 \\
 0 & 0 & 0 & -1 & 2 & 0 \\
 0 & 0 & -1 & 0 & 0 & 2 \\
\end{array}
\right).    
\end{align}
Defining relations of
$W(E_6^{(1)})=\lan s_i\mid 0\leq i \leq 6\ran$ 
can be read off $\Ga(E_6^{(1)})$. Generators $s_j\in W(E_6^{(1)})$ act on $V^{(1)}$ by Equation \eqref{sij0}, where $a_{ij}$ is the $(i,j)$-entry
of $C(E_6^{(1)})$ in Equation \eqref{CarE6a}. We have the null root and the highest root of $E_6^{(1)}$ given by
\begin{equation}\label{deE6}
    \de=\al_0+\tilde{\al}=\al_0++\alpha_{1}+\alpha_{5}
    +2 \left( \alpha_{2}+\alpha_{4}+\alpha_{6} \right)
    +3\alpha_{3}.
\end{equation}

A dual space $V^{(1)\ast}$ with basis $\{ h_1, \ldots, h_6, h_\de\}$ is given by \Cref{ah2}. The subspace $X_0\subset V^{(1)\ast}$ has two bases:
the set of simple coroots $\{\pi(\oc\al_j)\mid 1\leq j\leq 6\}$,
and the set of fundamental weights
$\{h_j\mid1\leq j\leq 6\}$.
They are related by Equation \eqref{pah}, 
\beq\label{pahE6}
\al_j
=\sum_{k=1}^{6}\left(C(E_6)\right)_{kj}h_k
=\sum_{k=1}^{6}a_{jk}h_k, \quad 1\leq j\leq 6,
\eeq
where we have used the fact that 
$\Ga(E_6^{(1)})$ is simply-laced, 
so $C(E_6)$ is symmetric and
we have identified the simple coroots $\pi(\oc\al_j)$ with the simple roots $\al_j$ for $1\leq j\leq 6$.
The squared lengths of $h_j (1\leq j\leq 6)$ is given by
the diagonal entries of $C(E_6)^{-1}$ in Equation \eqref{hijE6},
\beq\label{hijE6}
\left((h_i, h_j)\right)_{1\leq i,j\leq6}
=C(E_6)^{-1}=
\left(
\begin{array}{cccccc}
 \frac{4}{3} & \frac{5}{3} & 2 & \frac{4}{3} & \frac{2}{3} & 1 \\
 \frac{5}{3} & \frac{10}{3} & 4 & \frac{8}{3} & \frac{4}{3} & 2 \\
 2 & 4 & 6 & 4 & 2 & 3 \\
 \frac{4}{3} & \frac{8}{3} & 4 & \frac{10}{3} & \frac{5}{3} & 2 \\
 \frac{2}{3} & \frac{4}{3} & 2 & \frac{5}{3} & \frac{4}{3} & 1 \\
 1 & 2 & 3 & 2 & 1 & 2 \\
\end{array}
\right).
\eeq
That is they are $\frac{4}{3}, \frac{10}{3}, 6, \frac{10}{3}, \frac{4}{3}$
and $2$, respectively.
Moreover, by Equation \eqref{hpa} we have
\beq\label{hpaE6}
h_i=\sum_{k=1}^6C(E_6)^{-1}_{ik}\al_k\quad
\mbox{for}\quad1\leq i\leq n.
\eeq
with $C(E_6)^{-1}$ in Equation \eqref{hijE6}.
The group $A$ of diagram automorphsims of
$\Ga(E_6^{(1)})$ is generated by $\sigma$,
written
as a permutation on the index set of $\De^{(1)}$ we have,
\begin{equation}\label{4ddae6}
\sigma=(150)(246),
\end{equation}
so that $(\sigma)^3=1$. Moreover, we have
\begin{equation}\label{ddae6s}
    \sigma s_{\{1,2,3,4,5,6,0\}}=s_{\{5,4,3,6,0,2,1\}}\sigma.
\end{equation}
That is, $\sigma$ permutes the simple reflections by conjugation, for example,
we have $\sigma s_1 \sigma^{-1}=s_5$ from Equation \eqref{ddae6s}.
The extended affine Weyl group 
of type $E_{6}$ is given by 
\begin{equation}\label{eae6}
  \widetilde{W}(E_{6}^{(1)})=\langle s_i\;|\;0\leq i \leq 6\rangle\rtimes\langle \sigma\rangle=W(E_{6})\ltimes\lan u_j\, \mid 1\leq j\leq 6\ran,
\end{equation}
where $u_j$ is the element of translation by $h_j$.

\section{Basic properties of the group $\widetilde{W}( E_7^{(1)} )$}\label{app:E7}

The Dynkin diagram of type $E_7^{(1)}$, $\Ga(E_7^{(1)})$ is given
in Figure \ref{fig:e7B}.  Its simple system $\De^{(1)}=\{\al_j\mid
0\leq j\leq 7\}$ forms a basis for an $8$-dimensional real vector space
$V^{(1)}$ equipped with a bilinear form given by Equation \eqref{alaij0}
and the generalized Cartan matrix of type $E_7^{(1)}$:
\begin{align}\nonumber
C(E_7^{(1)})&=(a_{ij})_{1\leq i,j\leq 7, 0}
=(\al_i\cdot\oc\al_j)_{1\leq i,j\leq 7, 0},\\\nonumber
&=(\al_i\cdot\al_j)_{1\leq i,j\leq 7, 0},\\\label{CarE7a}
&=\left(
\begin{array}{cccccccc}
 2 & 0 & -1 & 0 & 0 & 0 & 0 & -1 \\
 0 & 2 & 0 & -1 & 0 & 0 & 0 & 0 \\
 -1 & 0 & 2 & -1 & 0 & 0 & 0 & 0 \\
 0 & -1 & -1 & 2 & -1 & 0 & 0 & 0 \\
 0 & 0 & 0 & -1 & 2 & -1 & 0 & 0 \\
 0 & 0 & 0 & 0 & -1 & 2 & -1 & 0 \\
 0 & 0 & 0 & 0 & 0 & -1 & 2 & 0 \\
 -1 & 0 & 0 & 0 & 0 & 0 & 0 & 2 \\
\end{array}
\right).    
\end{align}
Defining relations for
$W(E_7^{(1)})=\lan s_i\mid 0\leq i \leq 7\ran$ 
can be read off $\Ga(E_7^{(1)})$. Generators $s_j\in W(E_7^{(1)})$ act on $V^{(1)}$ by Equation \eqref{sij0}, where $a_{ij}$ is the $(i,j)$-entry
of $C(E_7^{(1)})$ from Equation \eqref{CarE7a}. We have the null root and the highest root of $E_7^{(1)}$ given by
\begin{equation}\label{deE7}
    \de=\al_0+\tilde{\al}=\al_0++\alpha_{7}
    +2 \left( \alpha_{2}+\alpha_{1}+\alpha_{6} \right)
    +3 \left(\alpha_{3}+ \al_5\right)+4\al_4.
\end{equation}

A dual space $V^{(1)\ast}$ with basis $\{ h_1, \ldots, h_7, h_\de\}$ is given by \Cref{ah2}. The subspace $X_0\subset V^{(1)\ast}$ has two bases:
the set of simple coroots $\{\pi(\oc\al_j)\mid 1\leq j\leq 7\}$,
and the set of fundamental weights
$\{h_j\mid1\leq j\leq 7\}$, and
they are related by Equation \eqref{pah}, 
\beq\label{pahE7}
\al_j
=\sum_{k=1}^{7}\left(C(E_7)\right)_{kj}h_k
=\sum_{k=1}^{7}a_{kj}h_k, \quad 1\leq j\leq 7,
\eeq
where we have used the fact that 
$\Ga(E_7^{(1)})$ is simply-laced, 
so $C(E_7)$ is symmetric and
we have identified the simple coroots $\pi(\oc\al_j)$ with the simple roots $\al_j$ for $1\leq j\leq 7$.
The squared lengths of $\{h_j\mid1\leq j\leq 7\}$ is given by
the diagonal entries of $C(E_7)^{-1}$ in Equation \eqref{hijE7},
\beq\label{hijE7}
\left((h_i, h_j)\right)_{1\leq i,j\leq7}
=C(E_7)^{-1}=
\left(
\begin{array}{ccccccc}
 2 & 2 & 3 & 4 & 3 & 2 & 1 \\
 2 & \frac{7}{2} & 4 & 6 & \frac{9}{2} & 3 & \frac{3}{2} \\
 3 & 4 & 6 & 8 & 6 & 4 & 2 \\
 4 & 6 & 8 & 12 & 9 & 6 & 3 \\
 3 & \frac{9}{2} & 6 & 9 & \frac{15}{2} & 5 & \frac{5}{2} \\
 2 & 3 & 4 & 6 & 5 & 4 & 2 \\
 1 & \frac{3}{2} & 2 & 3 & \frac{5}{2} & 2 & \frac{3}{2} \\
\end{array}
\right).
\eeq
That is, they are $2$, $\frac{7}{2}$, $6$, $12$, $\frac{15}{2}$, $4$, $\frac{3}{2}$.
Moreover, by Equation \eqref{hpa} we have
\beq\label{hpaE7}
h_i=\sum_{k=1}^7C(E_7)^{-1}_{ik}\al_k\quad
\mbox{for}\quad1\leq i\leq n,
\eeq
with $C(E_7)^{-1}$ given in Equation \eqref{hijE7}.
The group $A$ of diagram automorphsims of
$\Ga(E_7^{(1)})$ is generated by $\sigma$,
written
as a permutation on the index set of $\De^{(1)}$ we have:
\begin{equation}\label{ddae7}
\sigma=(07)(16)(35)
\end{equation}
so that $(\sigma)^2=1$. Moreover, we have
\begin{equation}\label{ddae7s}
    \sigma s_{\{1,2,3,4,5,7,0\}}=s_{\{6,2,5,4,3,1,0\}}\sigma.
\end{equation}
The extended affine Weyl group 
of type $E_{7}$ is then 
\begin{equation}\label{eae7}
  \widetilde{W}(E_{7}^{(1)})=\langle s_i\;|\;0\leq i \leq 7\rangle\rtimes\langle \sigma\rangle=W(E_{7})\ltimes\lan u_j\, \mid 1\leq j\leq 7\ran,
\eeq
where $u_j$ is the element of translation by $h_j$.

\printbibliography

\end{document}